\newcommand{\stn}{\|_\mathrm{E}} 
\providecommand{\algorithmname}{Algorithm}
\newcommand{\R}{\mathbb{R}}
\newcommand{\setX}{\mathcal{X}} 
\newcommand{\ui}{[0,1)} 
\newcommand{\uid}{\ui^d}
 \newcommand{\E}{\mathbb{E}}
\newcommand{\var}{\mathrm{Var}} 
 \newcommand{\Q}{\mathbb{Q}} 
\newcommand{\Qh}{\widehat{\mathbb{Q}}} 
\newcommand{\Qb}{\overline{\mathbb{Q}}} 
\newcommand{\Qt}{\widetilde{\mathbb{Q}}} 
\newcommand{\Unif}{\mathcal{U}} 
\newcommand{\bx}{\mathbf{x}} 
\newcommand{\bu}{\mathbf{u}}
\newcommand{\bv}{\mathbf{v}} 
\newcommand{\by}{\mathbf{y}}
\newcommand{\bz}{\mathbf{z}}
\newcommand{\ind}{\mathds{1}} \newcommand{\dd}{\mathrm{d}}
\newcommand{\dx}{\dd \bx} \newcommand{\du}{\dd \mathbf{u}}
\newcommand{\dz}{\dd \mathbf{z}}  
\newcommand{\eqdef}{:=} 
\newcommand{\bigO}{\mathcal{O}} 
\newcommand{\smallo}{{\scriptscriptstyle\mathcal{O}}} 
\newcommand{\cvz}{\rightarrow 0} 
\renewcommand{\emptyset}{\varnothing} 
\newcommand{\opA}{\alpha_N}
\newcommand{\FK}{Feynman-Kac \,} \newcommand{\RN}{Radon-Nikodym\,}
\newcommand{\comment}[1]{ \ifthenelse{ \equal{\showcomment}{true} }{{\bf #1} }{} } 
\newcommand{\showcomment}{true}
\newcommand{\Sop}{\mathcal{S}} 
\newcommand{\HSFC}{H} 
\newcommand{\IHSFC}{h} 
\newtheorem{thm}{Theorem} 
\newtheorem{prop}{Proposition}
\newtheorem{corollary}{Corollary} 
\newtheorem{lemma}{Lemma}
\date{}
\begin{document}
\author{Mathieu Gerber\thanks{Universit\'e de Lausanne, and CREST
    (Present address: Harvard University, Department of
    Statistics. Corresponding author, mathieugerber@fas.harvard.edu)}
  \and Nicolas Chopin\thanks{CREST-ENSAE}} \title{Convergence of
  sequential Quasi-Monte Carlo smoothing algorithms}

\maketitle

\begin{abstract}
  \citet{SQMC} recently introduced Sequential quasi-Monte Carlo (SQMC)
  algorithms as an efficient way to perform filtering in state-space
  models.  The basic idea is to replace random variables with
  low-discrepancy point sets, so as to obtain faster convergence
  than with standard particle filtering.  \citet{SQMC} describe briefly
  several ways to extend SQMC to smoothing, but do not provide supporting
  theory for this extension.  We discuss more
  thoroughly how smoothing may be performed within SQMC, and derive
  convergence results for the so-obtained smoothing algorithms. We
  consider in particular SQMC equivalents of  forward smoothing and
  forward filtering backward sampling, which are the most well-known smoothing techniques.
  As a preliminary step, we provide a generalization of the classical result of \citet{Hlawka1972} on the
  transformation of QMC point sets into low discrepancy point sets
  with respect to non uniform distributions. As a corollary of the
  latter, we note that we can slightly weaken the assumptions to prove
  the consistency of SQMC.

  \textit{Keywords}: Hidden Markov models; Low discrepancy; Particle
  filtering; Quasi-Monte Carlo; Sequential quasi-Monte Carlo;
  Smoothing; State-space models.

\end{abstract}

\section{Introduction}

State-space models are popular tools to model real life phenomena in
many fields such as Economics, Engineering and Neuroscience.  These
models are mainly used for extracting information about a hidden
Markov process $(\bx_t)_{t\geq0}$ of interest from a set of
observations $\by_{0:T}:=(\by_0,\dots,\by_T)$. In practice, this
typically translates to the estimation of $p(\bx_t|\by_{0:t})$, the
distribution of $\bx_t$ given the data $\by_{0:t}$, $0\leq t\leq T$
(called the \textit{filtering} distribution), and/or to
$p(\bx_{0:T}|\by_{0:T})$ (called the \textit{smoothing}
distribution). However, these distributions are intractable in most
cases, and require to be approximated in some way, the most popular
being particle filtering (Sequential Monte Carlo). See e.g. the books
of \cite{DouFreiGor}, \cite{CapMouRyd} for more background on
state-space models and particle filters.

Recently, \citet{SQMC} introduced sequential quasi-Monte Carlo (SQMC)
as an efficient alternative to particle filtering. Essentially, SQMC
amounts to replacing the random variates generated by a particle
filter with a QMC (low-discrepancy) point set; that is a set of $N$
points that are selected so as to cover more evenly the space that
random variates would; see e.g. the books of
\cite{Lemieux:MCandQMCSampling}, \cite{leobacherbook} for more
background on QMC.

\citet{SQMC} established that, for some constructions of RQMC
(randomised QMC) point sets, the convergence rate of SQMC (with
respect to $N$, the number of simulations) is at worst
$\bigO_P(N^{-1/2})$, while it is $\smallo_P(N^{-1/2})$ on the class of
continuous and bounded functions.  (This of course compares favourably
to the $\bigO_P(N^{-1/2})$ rate of particle filtering.)  In addition,
the numerical results of \citet{SQMC} show that SQMC dramatically
outperforms particle filtering in several applications.

One important question that remains however is how to use SQMC to
obtain smoothing estimates that converge as $N\rightarrow +\infty$.
Smoothing is recognised as a more difficult problem than
filtering \citep{briers2010smoothing}. 
Smoothing algorithms typically require extra steps on top
of particle filtering (such as a backward pass), and often cost
$\bigO(N^2)$ (but some variants cost $\bigO(N)$, as discussed later).

This paper discusses existing smoothing algorithms, explains
how they may be adapted to SQMC, and presents convergence
results for the corresponding SQMC smoothing algorithms.  We first study
forward smoothing, where trajectories are carried forward in
the particle filter, and show that this approach leads to consistent
estimates in SQMC.  Then, we derive a SQMC version of forward
filtering backward sampling (where complete trajectories are simulated
from the positions simulated by a particle filter, see
\citealp{GodsDoucWest}), and establish convergence results for the so
obtained smoothing estimates. We also consider the marginal version 
of backward sampling, which usually allows for a more precise estimation of 
marginal smoothing distributions.

The rest of this paper is organized as follows. Section
\ref{sec:preliminaries} introduces the model and the
notations considered in this work, and give a short description
of SQMC. Section \ref{sec:prelResults} contains some preliminary results that will be
needed to study SQMC smoothing. We first present a new consistency result for the 
forward step, which has the advantage to rely on weaker assumptions than in 
\citet{SQMC},  and state a result relative to the backward decomposition and SQMC 
estimation of the smoothing distribution. Then, we provide a generalization of the 
classical result of \citet{Hlawka1972} on the
transformation of QMC point sets into low discrepancy point sets
with respect to non uniform distributions that is essential to the
analysis of QMC smoothing algorithms. This section ends with some results on the 
conversion of discrepancies through the Hilbert space filling curve.  In Section \ref{sec:forw} we
establish the consistency of QMC forward smoothing while our results
on QMC forward-backward smoothing are given Section \ref{sec:back}. 
In Section \ref{sec:num} a numerical study examines the
performance of the QMC smoothing strategies discussed in this work
while Section \ref{sec:conclusion} concludes.

\section{Preliminaries}\label{sec:preliminaries}

\subsection{Model and related notations}\label{sec:model}


Let $(\bx_t)_{t\geq0}$ a Markov chain, defined on a space
$\setX\subseteq\mathbb{R}^d$ (equipped with Lebesgue measure), with initial distribution $m_0(\dx_0)$,
transition kernel $m_t(\bx_{t-1},\dx_t)$, $t\geq1$, and let
$(G_t)_{t\geq 0}$ a sequence of (measurable) potential functions,
$G_0:\setX\rightarrow\mathbb{R}^+$, $G_t:\setX\times \setX\rightarrow
\R^+$.
As in \citet{SQMC}, and most of the QMC
literature, we take $\setX=\ui^d$, but see Section 3 of 
\citet{SQMC} for how to generalise our results to unbounded state spaces. 

For this Feynman-Kac model $(m_t,G_t)_{t\geq 0}$, let 
$\Qb_t$ and $\Q_t$ be the  probability measures on $\setX$ such
that, for any  bounded measurable  function $\varphi:\setX\rightarrow
\mathbb{R}$, 
\begin{align*}
  \Qb_t(\varphi)& =\frac{1}{Z_{t-1}}\E\left[\varphi(\bx_t)G_0(\bx_0)\prod_{s=1}^{t-1}G_s(\bx_{s-1},\bx_s)\right]\\
  \Q_t(\varphi)& =\frac{1}{Z_{t}}\E\left[\varphi(\bx_t)G_0(\bx_0)\prod_{s=1}^{t}G_s(\bx_{s-1},\bx_s)\right]\\
  Z_{t}& =
  \E\left[G_{0}(\bx_{0})\prod_{s=1}^{t}G_{s}(\bx_{s-1},\bx_{s})\right]
\end{align*}
where expectations are with respect to the law of Markov chain $(\bx_t)$, 
and empty products equal one. Similarly, let 
 $\Qt_t$ be the probability measure on $\setX^{t+1}$
such that, for any bounded test function
$\varphi:\setX^{t+1}\rightarrow \mathbb{R}$,
$$
\Qt_t(\varphi)=\frac{1}{Z_{t}}\E\left[\varphi(\bx_{0:t})G_0(\bx_0)\prod_{s=1}^{t}G_s(\bx_{s-1},\bx_s)\right].
$$ 
In the sequel, the notation $0:t$ is used to denote the set of
integers $\{0,\dots,t\}$ and $\bx_{0:t}$ denotes the collection
$\{\bx_s\}_{s=0}^t$. Similarly, in
what follows we use the shorthand $\bx^{1:N}$ for a collection
$\{\bx^n\}_{n=1}^N$ of $N$ points in $\mathbb{R}^d$, and
$\bx_{0:t}^{1:N}$ for collection $\{\bx_{0:t}^n\}_{n=1}^N$ of $N$ points
in $\mathbb{R}^{(t+1)d}$. Finally, for a probability measure
$\pi\in\mathcal{P}(\setX)$, with $\mathcal{P}(\setX)$ the set of
probability measures on $\setX$ absolutely continuous with respect to
the Lebesgue measure, $\pi(\varphi)$ denotes the expectation of
$\varphi(\bx)$ under $\pi$.

To make more transparent the connection between this \FK formulation and state-space
models, assume Markov chain $(\bx_t)$ is observed indirectly
through $\by_t$, which has conditional probability density $f^Y(\by_t|\bx_t)$ (with 
respect to an appropriate measure, typically Lebesgue). 
If we take $G_0(\bx_0)=f^Y(\by_0|\bx_0)$, $G_t(\bx_{t-1},\bx_t)=f^Y(\by_t|\bx_t)$ 
for $t>0$, $\Q_t(\dx_t)$ becomes the filtering distribution (the law of $\bx_t|\by_{0:t}$),
$\Qb_t(\dx_t)$ the predictive distribution (the law
of $\bx_t|\by_{0:t-1}$), and $\Qt_t(\dx_{0:t})$, the object of
interest in this work, namely the smoothing distribution
(the law of $\bx_{0:t}|\by_{0:t}$). 
In addition, $Z_t$ is the marginal likelihood of observations $\by_{0:t}$.
In this case, $G_t$ depends only on $\bx_t$, but having a $G_t$ that
depends on both $\bx_{t-1}$ and $\bx_t$ makes it possible to apply our results
to a more general class of algorithms (such as those 
where the Markov transition used to simulate particles differs from 
the Markov transition of the model). 

\subsection{Extreme norm and QMC  point sets\label{sub:QMC}}

As in \citet{SQMC}, our consistency results
are stated in term of the \textit{extreme} norm, defined, for two
probability measures $\pi_1$ and $\pi_2$ on $\ui^d$, by
$$
\|\pi_1-\pi_2\stn=\sup_{B\in\mathcal{B}_{\ui^d}}\left|\pi_1(B)-\pi_2(B)\right|
$$ 
where
$$\mathcal{B}_{\ui^d}=\{B=\prod_{i=1}^d[a_i,b_i],\, 0\leq
a_i<b_i<1\}.
$$  
Note that $\|\pi_N-\pi\stn\rightarrow 0$ implies that $\pi_N(\varphi)\rightarrow \pi(\varphi)$ for any bounded and continuous function $\varphi$ (by portmanteau lemma,
see e.g. Lemma 2.2, p.6 of \citealt{VanderVaart2007}). In words, consistency 
for the extreme norm implies consistency of estimates for test functions $\varphi$
that are bounded and continuous.

The extreme norm is natural in QMC contexts since it
can be viewed as the generalization of the \textit{extreme discrepancy}
 of a point set $\bu^{1:N}$ in $\ui^d$, defined by
$$ D(\bu^{1:N})=\|\Sop(\bu^{1:N})-\lambda_d\stn
$$ 
where $\lambda_d$ denotes the Lebesgue measure on $\mathbb{R}^d$
and $\Sop$ is the operator 
$$\Sop:\bu^{1:N} \rightarrow \frac 1 N \sum_{n=1}^N \delta_{\bu^n}.
$$
The extreme discrepancy therefore
measures how a point set spreads evenly over $\ui^d$ and is used to
define formally QMC point sets. To be more specific, $\bu^{1:N}$ is a
QMC point set in $\ui^d$ if $D(\bu^{1:N})=\bigO(N^{-1}(\log
N)^{d})$. Note that, for a sample $\bu^{1:N}$ of $N$ IID uniform
random numbers in $\ui^d$, $D(\bu^{1:N})=\bigO(N^{-1/2}\log\log N)$
almost surely by the law of iterated logarithm \citep[see e.g.][page
167]{Niederreiter1992}.  There exist many constructions of QMC point
sets in the literature \citep[see][for more details on this
topic]{Niederreiter1992,dick2010digital} and, although we write
$\bu^{1:N}$ rather than $\bu^{N,1:N}$, $\bu^{1:N}$ may not necessarily
be the $N$ first points of a fixed sequence, i.e. one may have
$\bu^{N,N-1}\neq\bu^{N-1,N-1}$. However, it is worth keeping in mind
that all the results presented in this paper hold both for point sets and
sequences.

Even if in this work we are mainly interested in consistency results (which hold
for deterministic point sets $\bu^{1:N}$),
we will sometimes refer to randomized QMC (RQMC) point
sets. Formally, $\bu^{1:N}$ is RQMC point set if it is a QMC point set
with probability one and if, marginally, $\bu^n\sim\Unif(\ui^d)$ for
all $n\in 1:N$.

\subsection{The Hilbert space-filling curve}

The Hilbert space filling curve plays a key role in the construction and the
analysis of SQMC. This curve is a 
H\"{o}lder continuous fractal map  $H:[0,1]\rightarrow [0,1]^d$ 
 that fills completely $[0,1]^d$; see Figure \ref{fig:hilbert} for a graphical depiction, and 
Appendix \ref{app:Hilbert} for a presentation of its  mains properties. 
In what follows, we denote by $h:[0,1]^d\rightarrow [0,1]$
its pseudo-inverse which verifies, for any $\bx\in [0,1]^d$, $H\circ
h(\bx)=\bx$, and, 
for $d=1$, we use the natural convention that $H$
and $h$ are the identity mappings, i.e. $H(x)=h(x)=x$, $\forall x\in
[0,1]$.  
The Hilbert curve is not uniquely defined;
in this work, we assume that $H$ is such that
$H(0)=\mathbf{0}\in[0,1]^d$ \citep[this is in fact the classical way
to construct the Hilbert curve, see e.g.][]{Hamilton2008}. This
technical assumption is needed in order to be consistent with the fact
that we work with left-closed and right-opened hypercubes since, in
that case, $h(\ui^d)=\ui$. 

\begin{figure}
\begin{centering}
\includegraphics[scale=0.4]{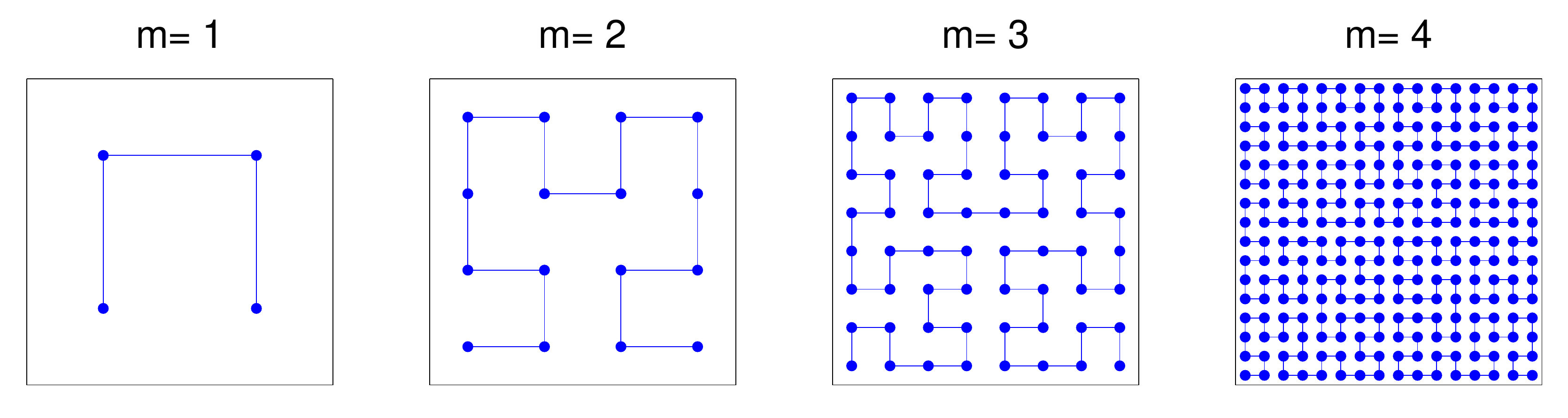}
\caption{First four iterates of sequence $H_m$, the limit of which is 
the Hilbert curve $H$, for $d=2$ (source: \cite{HeOwen2014})\label{fig:hilbert}}
\end{centering}
\end{figure}

\subsection{Rosenblatt transform}

Another important technical tool for SQMC is the Rosenblatt transform. 
For a probability distribution $\pi$ over $\ui$, $F_\pi$ denotes its CDF (cumulative distribution),
and $F_\pi^{-1}$ its inverse CDF; i.e. $F_\pi^{-1}=\inf\{x\in\ui:\,F(x)\geq u\}$. More generally,
for  a probability distribution $\pi$ over $\setX=\uid$, $F_\pi$  denotes the Rosenblatt
transform, that is
\[
F_{\pi}(\bx)=\left(F_{\pi,1}(x_{1}),F_{\pi,2}(x_{2}|x_{1}),\ldots,F_{\pi,d}(x_{d}|x_{1:d-1})\right)^T,\quad\bx=(x_{1},\ldots,x_{d})^T\in\setX,
\]
where $F_{\pi,1}$ is the CDF of the marginal distribution of the first
component (relative to $\pi$), and for $i\geq2$,
$F_{\pi,i}(\cdot|x_{1:i-1})$ is the CDF of component $x_{i}$,
conditional on $(x_{1},\ldots,x_{i-1}$), again relative to $\pi$.
The inverse of $F_\pi$ is denoted $F_\pi^{-1}$. 
Note how the Rosenblatt transform and its inverse define a monotonous map
that transforms any distribution into a uniform distribution, and vice-versa. 

We overload this notation for Markov kernels: $F_{m_t}(\bx_{t-1},\cdot)$ is the 
the Rosenblatt transform of probability distribution $m_t(\bx_{t-1},\dx_t)$ (for
fixed $\bx_{t-1}\in\setX)$, and $F_{m_t}^{-1}$ is defined similarly.

\subsection{Sequential quasi-Monte Carlo}\label{subsec:sqmc}

The basic structure of SMC (Sequential Monte Carlo, also known as particle filtering) 
algorithms is recalled as Algorithm \ref{alg:SMC}.
One sees from this description that SMC is a class of iterative algorithms 
that use resampling and mutation steps to move from a
discrete approximation $\Qh_{t}^N(\dx_{t})$ of $\Q_{t}(\dx_t)$ to a discrete 
approximation $\Qh_{t+1}^N(\dx_{t+1})$ of $\Q_{t+1}(\dx_{t+1})$, where
$$
\Qh_t^N(\dx_t) = \sum_{n=1}^N W_t^n \delta_{\bx_t^n}(\dx_t),\quad t\in 0:T.
$$

\begin{algorithm}
  \caption{SMC Algorithm\label{alg:SMC}}
  \begin{algorithmic}
    \State Generate (for $n\in 1:N$)  $\bx_0^n\sim m_0(\dx_0)$
    \State Compute (for $n\in 1:N$) 
    $W_{0}^{n}=G_{0}(\bx_{0}^{n})/\sum_{m=1}^{N}G_{0}(\bx_{0}^{m})$ 
    \For{$t=1$ to $t=T$} 
    \State Generate (for $n\in 1:N$)   $u_t^n\sim \mathcal{U}\ui$ and
set $a_t^n=F_{t,N}^{-1}(u_t^n)$, where $F_{t,N}(m)=\sum_{n=1}^N W_t^n\ind(n\leq m)$
	 
    \State Generate (for $n\in 1:N$) $\bx^n_t \sim m_t(\hat{\bx}^n_{t-1},\dx_t)$, where $\hat{\bx}_{t-1}^n=\bx_{t-1}^{a_t^n}$
    \State Compute (for $n\in 1:N$) 
    $W_{t}^{n}=G_{t}(\hat{\bx}_{t-1}^{n},\bx_{t}^{n})/\sum_{m=1}^{N}G_{t}(\hat{\bx}_{t-1}^{m},\bx_{t}^{m})$
    \EndFor
  \end{algorithmic}
\end{algorithm}
A closer look at Algorithm \ref{alg:SMC} shows that, for $t\geq 1$, the resampling and the mutation steps together amounts to sampling from the (random) distribution on $\setX^2$ defined by
\begin{align}\label{eq:IS}
\pi^N_t(\dd(\bx_{t-1},\bx_t))=\Qh^N_{t-1}\otimes m_t(\dd(\bx_{t-1},\bx_t))
\end{align}
where, for a
probability measure $\pi\in\mathcal{P}(\ui^{d_1})$ and a kernel
$K:\ui^{d_1}\rightarrow\mathcal{P}(\ui^{d_2})$,  the notation $\pi\otimes
K(\dd(\bx_1,\bx_2))$ denotes the probability measure
$\pi(\dx_1)K(\bx_1,\dx_2)$ on $\ui^{d_1+d_2}$.

Based on this observation, the basic idea of SQMC is to replace the
uniform random numbers used at iteration $t\geq 1$ of an SMC algorithm to sample from \eqref{eq:IS} by a QMC point set $\bu_t^{1:N}$ of appropriate
dimension. In the deterministic version of SQMC, the only known
property of $\bu_t^{1:N}$ is that its discrepancy converges to zero
as $N$ goes to infinity. Thus, we must make sure that 
the transform applied to $\bu_t^{1:N}$ preserves consistency 
(relative the extreme norm): i.e. 
$D(\bu^{1:N})\cvz$ implies that 
$\|\Gamma^N_t(\bu^{1:N})-\pi^N_t\stn\cvz$, where $\Gamma^N_t$ is the
chosen transformation.

 When the state-space  is univariate, \citet{SQMC} propose to use for $\Gamma_t^N$ the inverse  Rosenblatt
transformation of $\pi_t^N$ described in the previous subsection, which  amounts  to sample $(\hat{x}_{t-1},x_t)$ from \eqref{eq:IS} as follows:
$$
\hat{x}_{t-1}=F_{\Qh_{t-1}^N}^{-1}(u_t),\quad x_{t}=F^{-1}_{m_t}(\hat{x}_{t-1}, v_t),\quad (u_t,v_t)\sim\Unif(\ui^2).
$$
However, when the state variable is multivariate (i.e. $d>1$) this approach cannot be directly  used because in that case $\Qh_{t-1}^N(\dx_{t-1})$ is a weighted sum of Dirac measures over $\setX\subset\mathbb{R}^d$.

To extend this approach to multidimensional state-space models, \citet{SQMC} transform the multivariate distribution $\Qh_{t-1}^N(\dx_{t-1})$ into a univariate  distribution $\Qh_{t-1,h}^N(\dd h_{t-1})$ by applying the change of variable $h:\bx\in\setX\rightarrow \ui$, where $h$ is the pseudo-inverse of the Hilbert curve (see Section \ref{sec:hilbert}). Using this change of variable, the resampling and mutation steps of SMC are equivalent to sampling from
\begin{align}\label{eq:ISh}
\pi^N_{t,h}(\dd(h_{t-1},\bx_t))=\Qh^N_{t-1,h}\otimes m_{t,h}(\dd(h_{t-1},\bx_t))
\end{align}
where $m_{t,h}(h_{t-1},\bx_t):=m_{t}(H(h_{t-1}),\bx_t)$. As for the univariate setting,  one can generate random variates form $\pi^N_{t,h}(\dd(h_{t-1},\bx_t))$ using the inverse Rosenblatt transformation of this distribution; that is,  we can sample $(\hat{h}_{t-1},\bx_t)$ from \eqref{eq:ISh} as follows:
$$
\hat{h}_{t-1}=F_{\Qh_{t-1,h}^N}^{-1}(u_t),\quad \bx_{t}=F^{-1}_{m_{t}}(H(\hat{h}_{t-1}), \bv_t),\quad (u_t,\bv_t)\sim\Unif(\ui^{d+1}).
$$
The resulting SQMC algorithm, which is therefore based for $t\geq 1$ on $d+1$-dimensional QMC point sets $\bu_t^{1:N}$, $\bu_t^n=(u^n_t,\bv^n_t)\in\ui^{d+1}$,  is presented in Algorithm \ref{alg:SQMC_B}.

\begin{algorithm}
  \caption{SQMC Algorithm\label{alg:SQMC_B}}
  \begin{algorithmic}
    \State Generate a QMC point set $\bu_{0}^{1:N}$
    in $\ui^{d}$ 
    \State Compute (for $n\in 1:N$)  $\bx_{0}^{n}=F_{m_0}^{-1}(\bu_{0}^{n})$
    \State Compute (for $n\in 1:N$) 
    $W_{0}^{n}=G_{0}(\bx_{0}^{n})/\sum_{m=1}^{N}G_{0}(\bx_{0}^{m})$ 
    \For{$t=1$ to $t=T$} 
    \State Generate a QMC point set $\bu_{t}^{1:N}$ in $\ui^{d+1}$, let
    $\bu_{t}^{n}=(u_{t}^{n},\bv^n_t)$, where $u_t^n\in\ui$, $\bv_t^n\in\uid$. Assume that, for all $n,m\in 1:N$, $n\leq m\implies u_t^n\leq u_t^m$
    \State\label{step1} Hilbert sort: find permutation $\sigma_{t-1}$
    such that $\IHSFC(\bx_{t-1}^{\sigma_{t-1}(1)})
    \leq\ldots\leq\IHSFC(\bx_{t-1}^{\sigma_{t-1}(N)})$ 
    \State Compute (for $n\in 1:N$) 
    $a_{t-1}^n=F_{t,N}^{-1}(u_t^{n})$ where
    $F_{t,N}(m)=\sum_{i=1}^NW^{\sigma_{t-1}(i)}_{t-1}\mathbb{I}(i\leq
    m)$ 
    \State \label{step2} Compute (for $n\in 1:N$) 
    $\bx_{t}^{n}=F_{m_t}^{-1}(\hat{\bx}_{t-1}^n,\bv_{t}^{n})$,
    where $\hat{\bx}_{t-1}^n=\bx_{t-1}^{a_{t-1}^n}$ 
    \State Compute (for $n\in 1:N$) 
    $W_{t}^{n}=G_{t}(\hat{\bx}_{t-1}^{n},\bx_{t}^{n})/\sum_{m=1}^{N}G_{t}(\hat{\bx}_{t-1}^{m},\bx_{t}^{m})$
    \EndFor
  \end{algorithmic}
\end{algorithm}

\section{Preliminary results\label{sec:prelResults}}

\subsection{Consistency of SQMC}

The consistency of Algorithm \ref{alg:SQMC_B} (as $N\rightarrow +\infty$, 
with respect to the extreme metric) 
was established in \citet[][Theorem 5]{SQMC}, under the assumption that
$F_{m_t}$ is Lipschitz. We generalise below this result to the case 
where $F_{m_t}$ is H\"{o}lder continuous, as this generalisation 
will be needed later on when dealing with the backward step. 
This also allows us to recall some of the assumptions that will be repeated
throughout the paper. For convenience, let $F_{m_t}(\bx_{t-1},\bx_t)=F_{m_0}(\bx_0)$ when $t=0$.

\begin{thm}
  \label{thm:consistency2} Consider the set-up of Algorithm
  \ref{alg:SQMC_B} where, for all $t\in0:T$, $(\bu_{t}^{1:N})_{N\geq
    1}$ is a sequence of point sets in $\ui^{d_{t}}$, with $d_{0}=d$
  and $d_{t}=d+1$ for $t>0$, such that $D(\bu_t^{1:N})\rightarrow 0$
  as $N\rightarrow+\infty$.  Assume the following holds for all $t\in
  0:T$:
  \begin{enumerate}
  \item \label{H:thmPF1_B:1} The $\bx_t^n$'s are
    pairwise distinct: $\bx_t^n\neq \bx_t^m$ for $n\neq m \in 1:N$;
  \item \label{H:thmPF1_B:2} $G_{t}$ is continuous and bounded;
  \item \label{H:thmPF1_B:3} $F_{m_{t}}(\bx_{t-1},\bx_{t})$ is such
    that
    \begin{enumerate}
    \item\label{H:thmPF1_B:3a} For $i\in 1:d$ and for a fixed $\bx'$,
      the $i$-th coordinate of $F_{m_t}\left(\bx',\bx\right)$ is
      strictly increasing in $x_{i}\in\ui$, the $i$-th coordinate of $\bx$;
    \item\label{H:thmPF1_B:3b} Viewed as a function of $\bx'$ and
      $\bx$, $F_{m_t}\left(\bx',\bx\right)$ is H\"{o}lder continuous;
    \item\label{H:thmPF1_B:3c} For $i\in 1:d$,
      $m_{ti}(\bx',x_{1:i-1},\dd x_i)$, the distribution of the
      component $x_i$ conditional on $(x_1,..,x_{i-1})$ relative to
      $m_{t}(\bx',\dx)$, admits a density
      $p_{ti}(x_i|x_{1:{i-1}},\bx')$ with respect to the Lebesgue
      measure such that $\|p_{ti}(\cdot|\cdot)\|_{\infty}<+\infty$.
    \end{enumerate}
  \item \label{H:thmPF1_B:4}
    $\Q_{t}(\dx_{t})=p_{t}(\bx_{t})\lambda_d(\dx_{t})$ where
    $p_{t}(\bx_{t})$ is a strictly positive bounded density.
  \end{enumerate}
 For $t\in 1:T$, let $P^N_{t,h}=(h(\hat{\bx}_{t-1}^{1:N}),\bx_{t}^{1:N})$. Then, under
  Assumptions \ref{H:thmPF1_B:1}-\ref{H:thmPF1_B:4}, we have, for
  $t\in 1:T$,
  \[
  \left\Vert \Sop(P_{t,h}^N)-\Q_{t-1,h}\otimes m_{t,h}
  \right\stn\cvz,\quad\mbox{as }N\rightarrow+\infty
  \]
  and, for $t\in 0:T$,
  \[
  \|\Qh_{t}^{N}-\Q_{t}\stn\cvz,\quad\mbox{as }N\rightarrow+\infty.
  \]
\end{thm}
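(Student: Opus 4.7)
The argument is by induction on $t$, driven by the generalised Hlawka-type result that the authors will establish in the next section and by the discrepancy-conversion lemmas for the Hilbert curve. For the base case $t=0$, one has $\bx_0^n=F_{m_0}^{-1}(\bu_0^n)$ with $F_{m_0}$ H\"{o}lder continuous by Assumption~\ref{H:thmPF1_B:3}. Applying the generalised Hlawka theorem to this transformation of the QMC point set $\bu_0^{1:N}$ yields $\|\Sop(\bx_0^{1:N})-m_0\stn\cvz$. The conclusion $\|\Qh_0^N-\Q_0\stn\cvz$ then follows from a reweighting argument exploiting the continuity and boundedness of $G_0$ (Assumption~\ref{H:thmPF1_B:2}) together with the fact that $\Q_0$ admits a bounded positive density (Assumption~\ref{H:thmPF1_B:4}).

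For the inductive step, suppose $\|\Qh_{t-1}^N-\Q_{t-1}\stn\cvz$. The discrepancy-conversion results through the Hilbert curve, recorded at the end of Section~\ref{sec:prelResults}, first transfer this into $\|\Qh_{t-1,h}^N-\Q_{t-1,h}\stn\cvz$ on $\ui$. Conditionally on the filter history at time $t-1$, the Hilbert sort, inverse-CDF resampling, and Rosenblatt mutation steps of SQMC together amount to applying the inverse Rosenblatt transform of the tensor-product measure $\Qh_{t-1,h}^N\otimes m_{t,h}$ to the QMC point set $\bu_t^{1:N}\subset\ui^{d+1}$. Since $F_{m_t}$ is H\"{o}lder continuous (Assumption~\ref{H:thmPF1_B:3b}) and the Hilbert curve $H$ is itself H\"{o}lder continuous, a second invocation of the generalised Hlawka theorem delivers
\[
\left\Vert \Sop(P_{t,h}^N)-\Qh_{t-1,h}^N\otimes m_{t,h}\right\stn\cvz.
\]
Combining this display with the preceding one via a triangle inequality, and using that extreme-norm convergence of the first marginal lifts to convergence of its tensor product with the fixed kernel $m_{t,h}$, gives the first assertion $\|\Sop(P_{t,h}^N)-\Q_{t-1,h}\otimes m_{t,h}\stn\cvz$.

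To close the induction at $t$, I would write $\Qh_t^N(\varphi)$ as the ratio of $\Sop(P_{t,h}^N)(\tilde G_t\,\varphi)$ and $\Sop(P_{t,h}^N)(\tilde G_t)$, where $\tilde G_t(h,\bx)=G_t(H(h),\bx)$ is continuous and bounded by Assumption~\ref{H:thmPF1_B:2} and the continuity of $H$. The portmanteau-type implication for the extreme norm recalled in Section~\ref{sub:QMC} converts the preceding display into convergence of numerator and denominator, and Assumption~\ref{H:thmPF1_B:4} keeps the limiting denominator bounded away from zero. The main obstacle is the intermediate step of the inductive argument: applying the generalised Hlawka theorem to the random, $N$-dependent target $\Qh_{t-1,h}^N\otimes m_{t,h}$ requires uniform (in $N$) control of the H\"{o}lder modulus of its inverse Rosenblatt transform. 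Assumptions~\ref{H:thmPF1_B:3a} and~\ref{H:thmPF1_B:3c} (strict monotonicity in each coordinate and boundedness of the conditional densities of $m_t$) together with the pairwise-distinctness condition~\ref{H:thmPF1_B:1} (which makes the Hilbert sort and the empirical CDF on $\ui$ well-defined) are precisely what is needed to secure this uniform control.
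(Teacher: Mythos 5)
Your overall skeleton (induction over $t$, Hilbert-curve conversion of $\|\Qh_{t-1}^N-\Q_{t-1}\stn\cvz$ into the statement for $\Qh_{t-1,h}^N$, interpretation of resampling-plus-mutation as the inverse Rosenblatt transform of $\Qh_{t-1,h}^N\otimes m_{t,h}$, then a change-of-measure step for the weights) is indeed the strategy of the proof this theorem rests on, namely the proof of Theorem 5 of Gerber and Chopin (2015), which the paper declares it follows with Theorem \ref{thm:GenHM} substituted for the Lipschitz-based arguments. However, there is a genuine gap at the crucial step. You invoke the generalised Hlawka theorem (Theorem \ref{thm:GenHM}) for the target $\Qh_{t-1,h}^N\otimes m_{t,h}$, whose first coordinate is the \emph{discrete} measure $\Qh_{t-1,h}^N$: its CDF is a step function, so none of the hypotheses of Theorem \ref{thm:GenHM} (H\"older continuity, strict monotonicity, bounded conditional density of the Rosenblatt transform) hold for that coordinate, and no uniform-in-$N$ H\"older control is possible --- the weights $W_{t-1}^n$ are arbitrary and may concentrate, so the jumps of $F_{\Qh_{t-1,h}^N}$ need not shrink. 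Your closing claim that Assumptions \ref{H:thmPF1_B:1}, \ref{H:thmPF1_B:3a} and \ref{H:thmPF1_B:3c} ``secure this uniform control'' is therefore not correct: Assumption \ref{H:thmPF1_B:1} only makes the sorted empirical CDF strictly increasing on the atoms, it gives no modulus of continuity. The correct treatment (Theorem 4 of Gerber and Chopin, adapted here via the covering argument of Theorem \ref{thm:GenHM}) does not pass through the intermediate comparison with the discrete product at all; it bounds $\|\Sop(P_{t,h}^N)-\Q_{t-1,h}\otimes m_{t,h}\stn$ \emph{directly} against the continuous limit, using simultaneously $D(\bu_t^{1:N})\cvz$, $\|\Qh_{t-1,h}^N-\Q_{t-1,h}\stn\cvz$, the smoothness of $F_{m_t}$, and --- crucially --- Assumption \ref{H:thmPF1_B:4} (bounded density of $\Q_{t-1}$), which supplies the continuity of the limiting first-coordinate CDF that your argument tries to extract from the empirical one. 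In your write-up Assumption \ref{H:thmPF1_B:4} plays no role at this step, which is a sign the decomposition is not the right one: the intermediate bound $\|\Sop(P_{t,h}^N)-\Qh_{t-1,h}^N\otimes m_{t,h}\stn\cvz$ is not delivered by any result you have available, and a naive attempt to prove it atom by atom produces an error of order $N\,D(\bu_t^{1:N})$, which need not vanish.

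A secondary, fixable issue is the closing of the induction. Writing $\Qh_t^N(\varphi)$ as the ratio $\Sop(P_{t,h}^N)(G_{t,h}\varphi)/\Sop(P_{t,h}^N)(G_{t,h})$ and applying the portmanteau implication only yields convergence for bounded continuous test functions, i.e.\ weak convergence of $\Qh_t^N$ to $\Q_t$; but the theorem asserts, and the next induction step (the Hilbert conversion and the resampling analysis at time $t+1$) requires, convergence in the extreme norm $\|\Qh_t^N-\Q_t\stn\cvz$. This stronger conclusion is obtained in the reference proof by the change-of-measure result (Theorem 1 of Gerber and Chopin), which works at the level of the extreme norm, using continuity and boundedness of $G_t$ together with $Z$-type normalisation bounds; alternatively one would need an explicit P\'olya-type argument exploiting the continuity of $F_{\Q_t}$ guaranteed by Assumption \ref{H:thmPF1_B:4}. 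Citing portmanteau here goes in the wrong direction, since in this paper it is used only to deduce the weaker statement from the extreme-norm one.
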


The difference with \citet[][Theorem 5]{SQMC} is Assumption
\ref{H:thmPF1_B:3}, where \ref{H:thmPF1_B:3c} was not needed but it
was assumed that $F_{m_t}$ is a Lipschitz function. In this work,
Assumption \ref{H:thmPF1_B:3c} will be required to establish the
validity of the backward step.  Assumption \ref{H:thmPF1_B:1} is a
technical condition that is verified almost surely for the randomized
version of SQMC while assuming that $G_t$ is bounded is standard in
particle filtering \citep{DelMoral:book}.
In our notations, we drop the dependence of point sets on $N$, i.e. we write $(\bx^{1:N})_{N\geq 1}$
rather than $(\bx^{N,1:N})_{N\geq 1}$, although in full generality
$\bx^{1:N}$ may not necessarily be the $N$ first points of a fixed
sequence.

The proof of Theorem \ref{thm:consistency2} is omitted since it can be
directly deduced from the proof of \citet[][Theorem 5]{SQMC} and from
the generalization of the result of \citet[][``Satz 2'']{Hlawka1972}
presented in the Section \ref{subsec:Hlawka}, which constitutes one of the key
ingredients to study the backward pass of SQMC.

\subsection{Backward decomposition}

Backward smoothing algorithms require that Markov kernel $m_t(\bx_{t-1},\dx_t)$ admits a (strictly positive) 
probability density which may be computed pointwise; 
$m_t(\bx_{t-1},\dx_t)=m_t(\bx_{t-1},\bx_t)\lambda_d(\dx_t)$, with 
$m_t(\bx_{t-1},\dx_t)>0$ (and $\lambda_d$ being Lebesgue measure in our case).

The backward decomposition of the smoothing
distribution is \citep[e.g.][]{DelMoral2010}:
\begin{align}\label{backward:eq:backS}
  \Qt_{T}(\dx_{0:T})=\Q_T(\dx_T)
  \prod_{t=1}^T\mathcal{M}_{t,\Q_{t-1}}(\bx_{t},\dx_{t-1})
\end{align}
where, for any $\pi\in\mathcal{P}(\setX)$ and $t\in 1:T$,
$\mathcal{M}_{t,\pi}:\setX\rightarrow \mathcal{P}(\setX)$ is the Markov
kernel such that
$$
\mathcal{M}_{t,\pi}(\bx_{t},\dx_{t-1})\eqdef
\tilde{G}_t(\bx_{t-1},\bx_t)
\pi(\dx_{t-1})
$$
with 
\begin{equation}\label{eq:Gtilde}
\tilde{G}_t(\bx_{t-1},\bx_t)\eqdef
\frac{m_{t}(\bx_{t-1},\bx_{t})G_t(\bx_{t-1},\bx_t)}
{\int_{\setX} m_t(\tilde\bx_{t-1},\bx_t) G_t(\tilde\bx_{t-1},\bx_t) \pi(\dd\tilde\bx_{t-1})}.
\end{equation}
As a preliminary result, we show that the plug-in estimate
$\Qt_{T}^{N}$ of $\Qt_T$, obtained by replacing $\Q_t$ with 
 $\Qh_t^N$ in \eqref{backward:eq:backS}, is consistent for the
extreme norm; see Appendix \ref{p-thm:BS} for a proof.

\begin{thm}\label{thm:BS}
Consider the set-up of Algorithm \ref{alg:SQMC_B}, define for $ t\in 1:T$
\begin{equation} \label{eq:backwardN}
\Qt_{t}^{N}(\dx_{0:t})=\Qh_t^N(\dx_t)
\prod_{s=1}^t\mathcal{M}_{s,\Qh^N_{s-1}}(\bx_s,\dx_{s-1}),
\end{equation} 
and consider the following hypotheses: 
\begin{description}
\item[] (H1) $\tilde{G}_t$  is continuous and bounded, $\|\tilde{G}_t\|_\infty<\infty$;
\item[] (H2)  $F_{\mathcal{M}_{t,\Q_{t-1}}}(\bx_t,\bx_{t-1})$ satisfies
Assumptions 3a and 3b of Theorem \ref{thm:consistency2} (i.e. replace $m_t$ by $\mathcal{M}_{t,\Q_{t-1}}$ in these assumptions).
\end{description}

Then,
\begin{enumerate}
\item Under (H1) and the assumptions of Theorem \ref{thm:consistency2}, one has
(for $ t\in 1:T$)
  \begin{align}\label{eq:Smooth}
    \sup_{\bx_t\in\ui^d}\|\mathcal{M}_{t,\Qh^N_{t-1}}(\bx_t,\dx_{t-1})-\mathcal{M}_{t,\Q_{t-1}}(\bx_t,\dx_{t-1})\stn\cvz,\quad\mbox{as
    }N\rightarrow+\infty.
  \end{align}
\item If \eqref{eq:Smooth} holds, and under  (H2) and the assumptions of Theorem \ref{thm:consistency2}, one has
(for $t\in 1:T$)
  \begin{align}\label{eq:Smooth2}
    \|\Qt_t^{N}-\Qt_t\stn\cvz,\quad\mbox{as }N\rightarrow+\infty.
  \end{align}
\end{enumerate}

\end{thm}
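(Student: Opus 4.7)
I would handle the two parts of the theorem separately: Part~1 (equation~\eqref{eq:Smooth}) by a direct ratio argument for the backward kernel, and Part~2 (equation~\eqref{eq:Smooth2}) by induction on $t$ that propagates the kernel convergence \eqref{eq:Smooth} through the backward decomposition \eqref{backward:eq:backS}.

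For Part~1, writing the normalising constant as $c_t(\bx_t)=\Q_{t-1}(m_t(\cdot,\bx_t)G_t(\cdot,\bx_t))$, one can check that
\[
\mathcal{M}_{t,\pi}(\bx_t,B)=\frac{\pi(\ind_B\tilde{G}_t(\cdot,\bx_t))}{\pi(\tilde{G}_t(\cdot,\bx_t))}
\]
holds both for $\pi=\Q_{t-1}$ (in which case the denominator is $1$) and for $\pi=\Qh_{t-1}^N$, since the factor $c_t(\bx_t)$ appearing in the plug-in $\tilde{G}_t$ cancels in the ratio. The standard identity $a_N/b_N-a/b=(a_N-a)/b_N-a(b_N-b)/(b_Nb)$ then reduces matters to proving
\[
\sup_{\bx_t}|\Qh_{t-1}^N(\tilde{G}_t(\cdot,\bx_t))-\Q_{t-1}(\tilde{G}_t(\cdot,\bx_t))|\cvz,
\]
\[
\sup_{\bx_t,B}|\Qh_{t-1}^N(\ind_B\tilde{G}_t(\cdot,\bx_t))-\Q_{t-1}(\ind_B\tilde{G}_t(\cdot,\bx_t))|\cvz.
\]
Under (H1), $\tilde{G}_t$ is continuous and bounded on $\setX\times\setX$. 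The first display then follows from $\|\Qh_{t-1}^N-\Q_{t-1}\stn\cvz$ (Theorem~\ref{thm:consistency2}) together with the portmanteau implication recalled in Section~\ref{sub:QMC}, after covering $\setX$ in the $\bx_t$ argument by a finite $\epsilon$-net on the compact closure to obtain uniformity. The second additionally requires a step-function approximation of $\tilde{G}_t(\cdot,\bx_t)$, constant on a finite box-partition of $\setX$ and uniformly valid in $\bx_t$ by uniform continuity: the product with $\ind_B$ is then a finite sum of indicators of hyperrectangles, which $\|\Qh_{t-1}^N-\Q_{t-1}\stn$ handles directly, up to an arbitrary $O(\epsilon)$ remainder. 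The denominator $\Qh_{t-1}^N(\tilde{G}_t(\cdot,\bx_t))$ is then bounded below away from zero uniformly in $\bx_t$ for $N$ large enough.

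For Part~2, set $\bar{\mathcal{M}}_t(\bx_t,\dx_{0:t-1})=\prod_{s=1}^t\mathcal{M}_{s,\Q_{s-1}}(\bx_s,\dx_{s-1})$ and define $\bar{\mathcal{M}}_t^N$ analogously with $\Q_{s-1}$ replaced by $\Qh_{s-1}^N$. I would prove by induction on $t\in 1:T$ that $\sup_{\bx_t}\|\bar{\mathcal{M}}_t^N(\bx_t,\cdot)-\bar{\mathcal{M}}_t(\bx_t,\cdot)\stn\cvz$, the base case $t=1$ being exactly \eqref{eq:Smooth}. For the inductive step, decompose a test hyperrectangle of $\setX^t$ as $B=B'\times B_{t-1}$ and split
\begin{align*}
\bar{\mathcal{M}}_t^N(\bx_t,B)-\bar{\mathcal{M}}_t(\bx_t,B)&=\int_{B_{t-1}}\bigl[\bar{\mathcal{M}}_{t-1}^N-\bar{\mathcal{M}}_{t-1}\bigr](\bx_{t-1},B')\,\mathcal{M}_{t,\Qh_{t-1}^N}(\bx_t,\dx_{t-1})\\
&\quad+\int_{B_{t-1}}\bar{\mathcal{M}}_{t-1}(\bx_{t-1},B')\,\bigl[\mathcal{M}_{t,\Qh_{t-1}^N}-\mathcal{M}_{t,\Q_{t-1}}\bigr](\bx_t,\dx_{t-1}).
\end{align*}
The first summand is bounded uniformly in $(\bx_t,B)$ by the inductive hypothesis. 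For the second, I would apply the step-function approximation of Part~1 to the continuous function $\bx_{t-1}\mapsto\bar{\mathcal{M}}_{t-1}(\bx_{t-1},B')$, thereby reducing to hyperrectangles and invoking \eqref{eq:Smooth}. The conclusion \eqref{eq:Smooth2} then follows by one further telescoping, $\Qh_T^N\otimes\bar{\mathcal{M}}_T^N-\Q_T\otimes\bar{\mathcal{M}}_T=(\Qh_T^N-\Q_T)\otimes\bar{\mathcal{M}}_T+\Qh_T^N\otimes(\bar{\mathcal{M}}_T^N-\bar{\mathcal{M}}_T)$, pairing the just-completed induction with $\|\Qh_T^N-\Q_T\stn\cvz$ from Theorem~\ref{thm:consistency2}.

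The main obstacle is ensuring that the step-function approximation in the inductive step of Part~2 is valid \emph{uniformly} over the test box $B'$; equivalently, the family $\{\bar{\mathcal{M}}_{t-1}(\cdot,B')\}_{B'\subseteq\setX^{t-1}}$ must be equicontinuous on $\setX$. This is precisely where hypothesis~(H2) enters: the H\"{o}lder continuity of the Rosenblatt transform $F_{\mathcal{M}_{t,\Q_{t-1}}}$ supplies a modulus of continuity for $\bx_t\mapsto\mathcal{M}_{t,\Q_{t-1}}(\bx_t,\cdot)$ that is uniform in the test set, and this uniform modulus is preserved under composition across the backward pass. Without such equicontinuity the partition size in the step-function approximation would depend on $B'$, breaking the uniform control of the extreme-norm difference.
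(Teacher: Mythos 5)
Your overall strategy is sound and, for Part~2, essentially coincides with the paper's own proof: the paper also proves, by induction on $t$, that the products of backward kernels converge uniformly in the conditioning argument (its intermediate claim \eqref{eq:bc}), using exactly your two-term decomposition (inductive-hypothesis term integrated against $\mathcal{M}_{t,\Qh^N_{t-1}}$, plus the limiting product kernel integrated against $\mathcal{M}_{t,\Qh^N_{t-1}}-\mathcal{M}_{t,\Q_{t-1}}$), and then assembles the result with $\|\Qh_t^N-\Q_t\stn\cvz$; the only organisational difference is that the paper first merges the last two coordinates via an auxiliary statement $\|\Qh_{t}^N\otimes\mathcal{M}_{t,\Qh^N_{t-1}}-\Q_{t}\otimes\mathcal{M}_{t,\Q_{t-1}}\stn\cvz$, whereas you split off the filtering marginal from the whole backward product. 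For Part~1 your route is genuinely different and more self-contained: you exploit the cancellation of the normalising constant to write both kernels as ratios of integrals against $\Qh^N_{t-1}$ and $\Q_{t-1}$ and re-prove the change-of-measure stability by step-function approximation, whereas the paper represents $\mathcal{M}_{t,\Q_{t-1}}(\bx_t,\cdot)$ as a marginal of a reweighting of the Hilbert-lifted joint $\Q_{t-1,h}\otimes m_{t,h}$ (and $\mathcal{M}_{t,\Qh^N_{t-1}}$ as the same reweighting of $\Sop(P^N_{t,h})$) and invokes Theorem~1 of \citet{SQMC}. Your version is arguably simpler since it only uses the marginal conclusion $\|\Qh^N_{t-1}-\Q_{t-1}\stn\cvz$ of Theorem \ref{thm:consistency2}.

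Two steps are asserted rather than proved, and they are precisely the places where the paper delegates to cited lemmas. First, your uniform (in $\bx_t$ and in $B$) step-function approximations require uniform continuity of $\tilde{G}_t$ on $\setX^2=[0,1)^{2d}$; (H1) only gives continuity and boundedness on this non-compact set, and the ``compact closure'' argument does not apply since $\tilde{G}_t$ need not extend continuously to the boundary. This is patchable (restrict to $[0,1-\delta]^{2d}$ and control the boundary strip using the bounded density of $\Q_{t-1}$ from Assumption~\ref{H:thmPF1_B:4} together with $\|\Qh^N_{t-1}-\Q_{t-1}\stn\cvz$), but it should be said. Second, and more substantially, your inductive step hinges on equicontinuity, uniformly over test boxes $B'$, of the maps $\bx_{t-1}\mapsto\bar{\mathcal{M}}_{t-1}(\bx_{t-1},B')$, which you attribute to (H2). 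This is the right hypothesis, but for $d\geq 2$ the passage from H\"older continuity of the Rosenblatt transform to a box-uniform modulus of continuity of the kernel probabilities is itself a nontrivial lemma (a conditional-CDF/Kolmogorov-distance argument with an exponent that degrades with $d$), and its preservation under composition across the backward pass also needs an argument. The paper avoids claiming such pointwise equicontinuity altogether: it uses Lemma \ref{lemma:Holder_B} (Lemma~2 of \citealt{SQMC}), an integrated statement $\|\pi^N\otimes K-\pi\otimes K\stn=\smallo(1)$ proved by Hlawka--M\"uck-type covering arguments, applied with $K=\otimes_{s}\mathcal{M}_{s,\Q_{s-1}}$ (whose Rosenblatt transform is H\"older because each factor's is). So your plan is correct in substance, but this auxiliary claim is where the real work lies, and either it must be proved or replaced by the weaker integrated lemma the paper uses.
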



The first result above does not have a clear interpretation, but it will
be used as an intermediate result later on. 


\subsection{A generalization of Satz 2 of \citet{Hlawka1972}\label{subsec:Hlawka}}

Theorem \ref{thm:GenHM} below generalizes Proposition `Satz 2' of \cite{Hlawka1972}
to the case where point sets in $\ui^d$ are transformed through a
H\"older continuous Rosenblatt transformation; see Appendix
\ref{p-thm:GenHM} for a proof.

\begin{thm}\label{thm:GenHM} Let $\pi$ be a probability measure on
  $\ui^d$ and assume the following:
  \begin{enumerate}
  \item \label{H:thmHM:1} Viewed as a function of $\bx$,
    $F_{\pi}\left(\bx\right)$ is H\"{o}lder continuous with H\"{o}lder
    exponent $\kappa\in (0,1]$;
  \item \label{H:thmHM:2} For $i\in 1:d$, the $i$-th coordinate of
    $F_{\pi}\left(\bx\right)$ is strictly increasing in $x_{i}\in\ui$, the $i$-th coordinate of
$\bx$;

  \item \label{H:thmHM:3} For $i\in 1:d$, $\pi_i(x_{1:i-1},\dd x_i)$,
    the distribution of the component $x_i$ conditional on
    $(x_1,..,x_{i-1})$ relative to $\pi(\dx)$, admits a density
    $p_i(x_i|x_{1:{i-1}})$ with respect to the Lebesgue measure such
    that $\|p_i(\cdot|\cdot)\|_{\infty}<+\infty$.
  \end{enumerate}
  Let $\bu^{1:N}$ be a point set in $\ui^d$ and, for $n\in 1:N$,
  define $\bx^n=F_{\pi}^{-1}(\bu^n)$. Then, for a constant $c>0$,
  \[
  \|\Sop(\bx^{1:N})-\pi\stn\leq c D(\bu^{1:N})^{1/\tilde{d}}
  \]
  where $\tilde{d}=\sum_{i=0}^{d-1}\lceil \kappa^{-1} \rceil^i$.
\end{thm}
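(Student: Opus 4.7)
The plan is to work in the $\bu$-coordinates. Assumptions \ref{H:thmHM:2}--\ref{H:thmHM:3} imply that $F_\pi$ is a measure-preserving bijection from $(\ui^d,\pi)$ to $(\ui^d,\lambda_d)$, so that $\pi(B)=\lambda_d(F_\pi(B))$ and $\ind(\bx^n\in B)=\ind(\bu^n\in F_\pi(B))$ for any Borel set $B$; consequently
\[
\|\Sop(\bx^{1:N})-\pi\stn=\sup_{B\in\mathcal{B}_{\ui^d}}\bigl|\Sop(\bu^{1:N})(F_\pi(B))-\lambda_d(F_\pi(B))\bigr|.
\]
The task therefore reduces to approximating each Rosenblatt image $F_\pi(B)\subset\ui^d$ by finite unions of axis-aligned subboxes, so that the extreme discrepancy of $\bu^{1:N}$ controls the error on each subbox.

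My plan is to establish by induction on $d$ the following strengthened claim: for every box $B\subset\ui^d$ and every integer $M\ge 1$, there exist finite unions of at most $K(d)\,M^{\tilde{d}-1}$ axis-aligned subboxes $U_M^-(B)\subset F_\pi(B)\subset U_M^+(B)$ with $\lambda_d(U_M^+(B)\setminus U_M^-(B))\le L(d)/M$. Given such a claim, the definition of extreme discrepancy yields
\[
\bigl|\Sop(\bu^{1:N})(F_\pi(B))-\lambda_d(F_\pi(B))\bigr|\le \frac{L(d)}{M}+K(d)\,M^{\tilde{d}-1}\,D(\bu^{1:N}),
\]
and the stated bound follows by optimising $M\asymp D(\bu^{1:N})^{-1/\tilde{d}}$.

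The base case $d=1$ is immediate since $F_\pi([a_1,b_1))=[F_{\pi,1}(a_1),F_{\pi,1}(b_1))$ is already a subinterval. For $d>1$, I would write $B=[a_1,b_1)\times B''$ with $B''\subset\ui^{d-1}$ and exploit the triangular structure of the Rosenblatt map:
\[
F_\pi(B)=\bigl\{(u_1,\bu_{2:d})\in\ui^d:u_1\in[F_{\pi,1}(a_1),F_{\pi,1}(b_1)),\ \bu_{2:d}\in F_{\pi,2:d}(B''\mid F_{\pi,1}^{-1}(u_1))\bigr\},
\]
where $F_{\pi,2:d}(\cdot\mid x_1)$ is the Rosenblatt transform of the conditional measure $\pi(\cdot\mid x_1)$ on $\ui^{d-1}$, which inherits all the assumptions with the same H\"{o}lder exponent $\kappa$. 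I would then partition $[a_1,b_1)$ into $M^{\lceil 1/\kappa\rceil}$ equal $x_1$-pieces with references $x_1^{(k)}$ (so that by the $\kappa$-H\"{o}lder continuity of $F_\pi$ the conditional CDFs vary by order $1/M$ across each piece), apply the inductive claim to $\pi(\cdot\mid x_1^{(k)})$ with scale parameter $M$ on each piece, and take the Cartesian product with the corresponding $u_1$-subinterval. Summing the $(d-1)$-dimensional Lebesgue errors against the $u_1$-widths yields total Lebesgue error $O(1/M)$ in $\ui^d$, while the total box count is at most $M^{\lceil 1/\kappa\rceil}\cdot M^{\tilde{d}_{d-1}-1}\le M^{\tilde{d}-1}$ thanks to the recursion $\tilde{d}-1=\lceil 1/\kappa\rceil\,\tilde{d}_{d-1}$ coming from the closed form $\tilde{d}=\sum_{i=0}^{d-1}\lceil 1/\kappa\rceil^i$.

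The main technical obstacle is the H\"{o}lder boundary dressing: the sets $F_{\pi,2:d}(B''\mid x_1)$ and $F_{\pi,2:d}(B''\mid x_1^{(k)})$ differ on a region controlled only in a H\"{o}lder sense by $|x_1-x_1^{(k)}|^\kappa$, and one must produce an explicit outer box-cover (not merely a Lebesgue estimate) of this symmetric difference in order to inflate the outer approximation and deflate the inner approximation consistently with the recursion. Assumption \ref{H:thmHM:3} (boundedness of the conditional densities, which gives Lipschitz control of $F_\pi$ in each ``diagonal'' coordinate direction) enters precisely here, ensuring that the extra cover can be built from an additional $O(M^{\tilde{d}_{d-1}-1})$ subboxes per slab, which are then absorbed in the total without inflating the exponent $\tilde{d}-1$ of the main term.
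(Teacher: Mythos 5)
Your reduction to $\bu$-coordinates, the sandwich of $F_\pi(B)$ between inner and outer unions of boxes, and the optimisation $M\asymp D(\bu^{1:N})^{-1/\tilde d}$ are exactly the skeleton of the paper's proof (which follows Hlawka), and your base case and the case $d=2$ are fine. The genuine gap is in the step you yourself flag as the main obstacle, the ``H\"older boundary dressing'', and it is not a technicality: with slabs of $x_1$-width $M^{-\beta}$, $\beta=\lceil\kappa^{-1}\rceil$, you only obtain $\|F_{\pi,i}(\cdot\mid x_1,\cdot)-F_{\pi,i}(\cdot\mid x_1^{(k)},\cdot)\|_\infty=O(M^{-1})$, and the symmetric difference $F_{\pi,2:d}(B''\mid x_1)\,\triangle\,F_{\pi,2:d}(B''\mid x_1^{(k)})$ is (up to that tolerance) a neighbourhood of width $O(M^{-1})$ of the image of $\partial B''$ under a Rosenblatt map that is only $\kappa$-H\"older in the conditioning variables. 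Such an image can have upper box dimension as large as $(d-1)-\kappa$: already for $d=3$, take $F_{\pi,3}(a_3\mid x_1,x_2)=c+\epsilon W(x_2)$ with $W$ a Weierstrass-type function of exact H\"older exponent $\kappa$; the relevant boundary piece is then the graph of a genuinely $\kappa$-rough function of $u_2$, whose $M^{-1}$-neighbourhood has Lebesgue measure of order $M^{-\kappa}$, not $O(1/M)$, and no box cover of it has total measure $o(M^{-\kappa})$. Assumption \ref{H:thmHM:3} does not rescue this: it bounds the forward conditional densities (so each $F_{\pi,i}$ is Lipschitz in $x_i$), but your comparison is made at matched $u$-coordinates, and matching $u_2$ across the slab only controls $|x_2-x_2^{(k)}|$ through the inverse conditional CDFs, for which there is no lower density bound; even when the inverse is well behaved the $\kappa$-roughness of the deeper conditional CDFs already produces the $M^{-\kappa}$ loss. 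Hence for $d\geq3$ and $\kappa<1$ your strengthened claim (gap $L(d)/M$ with those slabs) is false as stated, and with the correct per-slab gap $O(M^{-\kappa})$ the optimisation only yields the rate $D(\bu^{1:N})^{\kappa/(\tilde d-1+\kappa)}$, strictly worse than $D(\bu^{1:N})^{1/\tilde d}$.

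The paper avoids this by never comparing conditional slices at matched $u$'s: it partitions $x$-space into \emph{anisotropic} cells of sizes $L'^{-\beta^{d-1}}\times L'^{-\beta^{d-2}}\times\cdots\times L'^{-1}$, tuned so that the oscillation of the $i$-th component of $F_\pi$ over a cell is below the $i$-th side $L^{-\beta^{d-i}}$ of the $u$-space partition (Lipschitz control in $x_i$ from Assumption \ref{H:thmHM:3}, H\"older control in $x_{1:i-1}$ with the cascading exponents absorbing the loss), and then covers $F_\pi(\partial B)$ by forward images of reference points of these cells; this gives $\#\mathcal{U}_2=O(L^{\tilde d-1})$ boundary cells of total measure $O(L^{-1})$. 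Translated into your recursion, the fix is that the slab width in $x_1$ must be $M^{-\beta^{d-1}}$, not $M^{-\beta}$, so that the induced perturbation of the conditional Rosenblatt map is below the \emph{finest} scale $M^{-\beta^{d-2}}$ used in the $(d-1)$-dimensional cover rather than merely below $M^{-1}$; the counts still close, since $M^{\beta^{d-1}}\cdot M^{\tilde d_{d-1}-1}=M^{\tilde d-1}$ with $\tilde d_{d-1}=\sum_{i=0}^{d-2}\beta^i$. But once you make that change your induction is essentially a repackaging of the paper's anisotropic-partition argument; as written, the proposal is missing precisely this cascading choice of scales, and its central covering claim does not hold under the stated assumptions.
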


When the Rosenblatt transformation $F_{\pi}$ is Lipschitz,
$\tilde{d}=d$ and we recover the result of \citet{Hlawka1972}. In this
case, Assumption \ref{H:thmHM:3} is not needed. Notice that the rate
provided in Theorem \ref{thm:GenHM} decreases quickly with the
H\"older exponent $\kappa$. For $\kappa=1/2$, the convergence rate is
of order $\bigO(D(\bu^{1:N})^{1/2^d-1})$ and hence is very slow even
for moderate values of $d$.



We will see that the backward step of the forward-backward SQMC
smoothing algorithm amounts to applying to QMC point sets
transformations that are ``nearly'' $(1/d)$-H\"older continuous (in a
sense that we will make precise). The main message of Theorem
\ref{thm:GenHM}, as far as SQMC is concerned, is that such an
algorithm may be consistent (as $N\rightarrow +\infty$) despite being
based on non-Lipschitz transformations.

Theorem \ref{thm:GenHM} is interesting more generally, 
since the construction of low discrepancy point sets with respect to
non uniform distributions is an important problem, which is motivated
by the generalized Koksma-Hlawka inequality \citep[][Theorem
1]{Aistleitner2014}:
$$
\left|\frac{1}{N}\sum_{n=1}^N\varphi(\bx^n)-\int_{\ui}\varphi(\bx)\pi(\dx)\right|\leq
V(\varphi)\|\Sop(\bx^{1:N})-\pi\stn
$$ 
where $V(\varphi)$ is the variation of $\varphi$ in the sense of
Hardy and Krause. It is also interesting to mention that the inverse
Rosenblatt transformation is the best known construction of low
discrepancy point sets for non uniform probability measures, although
the bounds for the extreme metric given in \citet[][``Satz
2'']{Hlawka1972} and in Theorem \ref{thm:GenHM} are very far from the
best known achievable rate since \citet[][Theorem 1]{Aistleitner2013} have
established the existence, for any probability measure $\pi$ on $\ui^d$,
of a sequence $(\bx^{n})_{n\geq 1}$ verifying
$\|\Sop(\bx^{1:N})-\pi\stn=\bigO(N^{-1}(\log N)^{0.5(3d+1)})$.

\subsection{Discrepancy conversion through the Hilbert space filling curve}
\label{sec:hilbert}

We now state results regarding how the Hilbert curve 
$H:[0,1]\rightarrow [0,1]^d$ conserves discrepancy. 
Such results were not directly needed to establish the consistency 
of SQMC. Indeed, as outlined in  the statement of Theorem
\ref{thm:consistency2}, it was sufficient to show that 
$P_{t,h}^N$ has low discrepancy with respect to the proposal distribution
$\Q_{t-1,h}\otimes m_{t,h}$, where  we recall that
$P_{t,h}^N=\left(h(\hat{\bx}_{t-1}^{1:N}),\bx_t^{1:N}\right)$, with
$h(\hat{\bx}_{t-1}^{1:N})\in\ui$. The discrepancy of the 
``resampled'' particles $\hat{\bx}_{t-1}^{1:N}$ in $\uid$ was not derived. But, again, we will need such results when dealing with backward estimates. 

More precisely, and as explained below (see Section \ref{subsec:BS}), the analysis of these latter require results on the conversion of discrepancies through the following mapping, defined for $k\in\mathbb{N}$, by
\begin{align}\label{eq:HT}
  H_k:(x_0,\dots,x_k)\in\ui^{(k+1)}\mapsto
  (H(x_0),\dots,H(x_k))\in\ui^{d(k+1)}
\end{align}
and with  pseudo-inverse $h_k:\ui^{d(k+1)}\rightarrow\ui^{k+1}$.

Theorem \ref{thm:Hilbert2} and Corollary \ref{cor:Hilbert2} below  are  generalizations of \citet[][Theorem 1]{Schretter2015}, which corresponds to Theorem  \ref{thm:Hilbert2} with $k=0$, $\pi_h$ the uniform distribution on $\ui$ and $\pi_h^N=\Sop(u^{1:N})$ for a point set $u^{1:N}$ in $\ui$. To save space, the proofs the these two results are omitted.


\begin{thm}\label{thm:Hilbert2}
  Let $\pi(\dx)=\pi(\bx)\lambda_{d(k+1)}(\dx)$, $k\in\mathbb{N}$, be a probability measure on $\ui^{d(k+1)}$ with
  bounded density $\pi$, $\pi_{h_k}$ be the image of $\pi$ by $h_k$ and
  $(\pi_{h_k}^N)_{N\geq 1}$ be a sequence of probability measures on $\ui^{k+1}$
  such that $\|\pi_{h_k}^N-\pi_{h_k}\stn\cvz$ as $N\rightarrow+\infty$.
  Let $\pi^N$ be the image by $H_k$ of  $\pi_{h_k}^N$.
  Then, 
$$ \|\pi^N-\pi\stn\cvz,\quad\mbox{as
}N\rightarrow+\infty. 
$$ 
\end{thm}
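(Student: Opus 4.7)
The plan is to convert discrepancy in $\ui^{d(k+1)}$ (with respect to $\pi$) into discrepancy in $\ui^{k+1}$ (with respect to $\pi_{h_k}$), via two successive approximations: a dyadic approximation of any box $B \in \mathcal{B}_{\ui^{d(k+1)}}$, then a pull-back through $H_k$. Throughout I will use the identities $\pi(B) = \pi_{h_k}(H_k^{-1}(B))$ and $\pi^N(B) = \pi_{h_k}^N(H_k^{-1}(B))$, which are immediate from $H_k \circ h_k = \mathrm{id}$ and the image-measure definitions of $\pi_{h_k}$ and $\pi^N$.

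Fix $m \ge 1$ and let $\mathcal{P}_m$ be the partition of $\ui^{d(k+1)}$ into left-closed, right-open dyadic cubes of side $2^{-m}$. For a box $B$, denote by $B_m^-$ (resp.\ $B_m^+$) the union of cubes of $\mathcal{P}_m$ contained in (resp.\ meeting) $B$; a standard boundary-layer estimate gives $\lambda_{d(k+1)}(B_m^+ \setminus B_m^-) \le C\, 2^{-m}$ for a constant $C$ depending only on $d$ and $k$. Since $\pi$ has bounded density, sandwiching $\pi^N(B)$ and $\pi(B)$ between $\pi^{\bullet}(B_m^\pm)$ and using $\pi(B_m^+\setminus B_m^-) \le \|\pi\|_\infty C\, 2^{-m}$ yields
\[
|\pi^N(B) - \pi(B)| \;\le\; \max_{\varepsilon \in \{-,+\}} |\pi^N(B_m^\varepsilon) - \pi(B_m^\varepsilon)| \;+\; \|\pi\|_\infty C \, 2^{-m}.
\]
By the standard dyadic correspondence underlying the Hilbert curve (with the $H(0) = \mathbf{0}$ convention used throughout the paper), each dyadic cube of side $2^{-m}$ in $\ui^d$ is the image under $H$ of a unique dyadic interval of length $2^{-md}$ in $\ui$; hence every $C \in \mathcal{P}_m$, being a product of $k+1$ such $d$-dimensional cubes, satisfies $H_k^{-1}(C) = D_C$ for a unique dyadic box $D_C \in \mathcal{B}_{\ui^{k+1}}$. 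Writing $B_m^\varepsilon$ as a disjoint union of at most $N_m := 2^{m d(k+1)}$ cubes of $\mathcal{P}_m$ and summing then gives $\pi^{\bullet}(B_m^\varepsilon) = \sum_C \pi_{h_k}^{\bullet}(D_C)$, whence the triangle inequality produces
\[
|\pi^N(B_m^\varepsilon) - \pi(B_m^\varepsilon)| \;\le\; N_m \, \|\pi_{h_k}^N - \pi_{h_k}\stn .
\]

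Combining the two bounds and taking the supremum over $B$ yields $\|\pi^N - \pi\stn \le 2^{m d(k+1)}\, \eta_N + \|\pi\|_\infty C\, 2^{-m}$, valid for every $m \ge 1$, where $\eta_N := \|\pi_{h_k}^N - \pi_{h_k}\stn \cvz$ by hypothesis. It suffices to pick $m = m(N) \to \infty$ slowly enough that $2^{m(N) d(k+1)}\, \eta_N \to 0$ as well (for instance so that $2^{m(N) d(k+1)} = \eta_N^{-1/2}$); both terms then vanish and the claim follows. The main delicate point I expect is a rigorous treatment of the dyadic Hilbert correspondence in the previous display, namely that the Lebesgue-null boundary overlaps between adjacent cubes (caused by $H$ failing to be injective) do not contaminate the identity $\pi_{h_k}^N(H_k^{-1}(C)) = \pi_{h_k}^N(D_C)$ when $\pi_{h_k}^N$ is discrete. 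This is handled via the left-closed/right-open convention aligned with the Hilbert recursion (as detailed in the appendix on the Hilbert curve) and, if needed, by a short auxiliary argument absorbing the asymptotic $\pi_{h_k}^N$-mass on these Lebesgue-null boundary sets into $\eta_N$ through the extreme-norm hypothesis and the absolute continuity of $\pi_{h_k}$.
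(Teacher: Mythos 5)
The paper never writes out a proof of Theorem \ref{thm:Hilbert2} (it is stated as a generalisation of Theorem 1 of \citet{Schretter2015} and its proof is explicitly omitted), so the relevant comparison is with the dyadic sandwiching machinery that the paper relies on elsewhere (proofs of Theorem \ref{thm:GenHM} and Theorem \ref{thm:BS_star}, and Theorems 3--4 of \citealt{SQMC}). Your skeleton is exactly of that type, and the quantitative balance is right: approximate $B$ inside and outside by unions of level-$m$ dyadic cubes at a cost $\|\pi\|_\infty C 2^{-m}$ (bounded density), convert each cube into a dyadic box of $\ui^{k+1}$ via the interval--cube correspondence of the Hilbert curve at a cost $2^{md(k+1)}\,\eta_N$ (extreme-norm hypothesis applied cube by cube), and let $m(N)\rightarrow\infty$ slowly enough that $2^{m(N)d(k+1)}\eta_N\cvz$.

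The step that needs repair is the one you flag yourself, but the first remedy you suggest does not work: because $H$ is not injective, $H_k^{-1}(C)=D_C$ is false even with the left-closed/right-open convention. For $d=2$, $H\bigl([0,\tfrac14)\bigr)$ covers the whole closed square $[0,\tfrac12]^2$ except at most the corner $H(1/4)$, so it contains points of the shared face that belong to the \emph{neighbouring} half-open square; hence $H^{-1}(C')\cap D_C\neq\emptyset$ for $C'\neq C$, and a discrete $\pi^N_{h_k}$ may put positive mass exactly on such preimages of the grid skeleton $G_m$. The clean fix is to make the correspondence one-sided and work with closed cubes: since $H_k\circ h_k=\mathrm{id}$, one has $h_k^{-1}(D_C)\subseteq\overline{C}$ and $C\setminus G_m\subseteq h_k^{-1}(D_C)$, so $\pi_{h_k}(D_C)=\pi(C)$ exactly (absolute continuity of $\pi$, $\pi(G_m)=0$); and $D_C\subseteq H_k^{-1}(\overline{C})$ while $H_k^{-1}(B)\subseteq\bigcup\{D_{C'}:\overline{C'}\cap B\neq\emptyset\}$. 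You then bound $\pi^N(B)=\pi^N_{h_k}(H_k^{-1}(B))$ from above by summing $\pi^N_{h_k}(D_{C'})$ over cubes whose closure meets $B$ (one extra boundary layer, again $O(2^{-m})$ in $\pi$-mass) and from below by summing over cubes with $\overline{C'}\subseteq B$; the possibly positive $\pi^N_{h_k}$-mass on $H_k^{-1}(G_m)$ then never has to be estimated. Your fallback of ``absorbing the boundary mass into $\eta_N$'' can also be made rigorous, but only via an outer-regularity covering of $H_k^{-1}(G_m)$ whose size is not controlled uniformly in $m$, so it requires an additional diagonalisation over $(m,N)$; the one-sided version avoids this entirely. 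With that modification (and the trivial remark that dyadic boxes touching a face $b_i=1$, excluded from $\mathcal{B}_{\ui^{k+1}}$, are handled by continuity from below), your argument is complete and consistent with the intended proof.
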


\begin{corollary}
  \label{cor:Hilbert2} Consider the set-up of Theorem \ref{thm:Hilbert2} with $k=0$ and let
  $K:\ui^{d}\rightarrow\mathcal{P}\left(\ui^{s}\right)$ be a Markov
  kernel, $K_h(h_1,\dx_2)=K(h(\bx_1),\dx_2)$ and
  $P_h^{N}=(h_1^{1:N},\bx_2^{1:N})$ be a sequence of point sets in
  $\ui^{1+s}$ such that, as $N\rightarrow +\infty$,
  $\|\Sop(P_h^{N})-\pi_h\otimes K_h\stn\rightarrow0$. Let $P^{N}=
  \left(H(h_1^{1:N}),\bx_2^{1:N}\right) $. Then,
  \[
  \|\Sop(P^{N})-\pi^N\otimes K\stn \cvz,\quad\mbox{as
}N\rightarrow+\infty.
  \]
\end{corollary}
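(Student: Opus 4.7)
The strategy is to reduce Corollary \ref{cor:Hilbert2} to the argument behind Theorem \ref{thm:Hilbert2} (specialized to $k=0$), carrying the extra $s$ coordinates along passively. Let $\phi:\ui^{1+s}\to\ui^{d+s}$ be the map $\phi(h_1,\bx_2)=(H(h_1),\bx_2)$. Then $\Sop(P^N)$ is the image of $\Sop(P_h^N)$ by $\phi$ tautologically, and since $H\circ h$ is the identity $\pi$-almost everywhere (because $\pi$ has a density) we have $H_\ast\pi_h=\pi$; combined with $K_h(h_1,\cdot)=K(H(h_1),\cdot)$ this gives $\phi_\ast(\pi_h\otimes K_h)=\pi\otimes K$. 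Therefore, for any box $B=B_1\times B_2\in\mathcal{B}_{\ui^{d+s}}$,
\[
\Sop(P^N)(B)-(\pi\otimes K)(B)=\Sop(P_h^N)(A\times B_2)-(\pi_h\otimes K_h)(A\times B_2),
\]
where $A=H^{-1}(B_1)\subset\ui$. The hypothesis supplies control over discrepancies on boxes of $\ui^{1+s}$; the obstacle is that $A$ is not an interval.

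I would next sandwich $A$ between finite unions of dyadic intervals, using the Hilbert-curve geometry already exploited in the (omitted) proof of Theorem \ref{thm:Hilbert2}. At scale $m$, the $2^{md}$ dyadic sub-cubes of $\ui^d$ of side $2^{-m}$ have $h$-preimages consisting of dyadic intervals of $\ui$ of length $2^{-md}$. Taking $U_m^{-}$ (resp.\ $U_m^{+}$) to be the union of preimages of sub-cubes entirely inside (resp.\ intersecting) $B_1$ gives $U_m^{-}\subset A\subset U_m^{+}$ with $U_m^{\pm}$ a union of $N_m=\bigO(2^{m(d-1)})$ maximal intervals and $\lambda_1(U_m^{+}\setminus U_m^{-})=\bigO(2^{-m})$, since the boundary of $B_1$ meets $\bigO(2^{m(d-1)})$ sub-cubes of volume $2^{-md}$.

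Writing $D_N=\|\Sop(P_h^N)-\pi_h\otimes K_h\stn$ and applying the sandwich to both $\Sop(P_h^N)$ and $\pi_h\otimes K_h$, the triangle inequality yields
\[
\bigl|\Sop(P_h^N)(A\times B_2)-(\pi_h\otimes K_h)(A\times B_2)\bigr|\leq 2 N_m D_N + (\pi_h\otimes K_h)\bigl((U_m^{+}\setminus U_m^{-})\times\ui^s\bigr).
\]
The last term is at most $\pi_h(U_m^{+}\setminus U_m^{-})\leq\|\pi\|_\infty\lambda_1(U_m^{+}\setminus U_m^{-})=\bigO(2^{-m})$, where the first inequality uses that $H$ is measure-preserving, so $\pi_h=h_\ast\pi$ admits density at most $\|\pi\|_\infty$ with respect to $\lambda_1$. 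Choosing $m=m(N)\to\infty$ slowly enough that $N_{m(N)}D_N\to 0$, which is possible since $D_N\to 0$ by hypothesis, delivers the conclusion uniformly in $B$.

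The main obstacle is the Hilbert-curve sandwich itself: one needs the bounds on $N_m$ and on $\lambda_1(U_m^{+}\setminus U_m^{-})$ to hold uniformly over all $B_1\in\mathcal{B}_{\ui^d}$, which relies on the recursive dyadic construction of $H$ together with its measure-preserving property. These are precisely the ingredients used in the omitted proof of Theorem \ref{thm:Hilbert2}, so Corollary \ref{cor:Hilbert2} is obtained by rerunning that argument with the extra factor $B_2$ carried along unchanged.
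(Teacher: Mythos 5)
Your argument is correct and is essentially the proof the paper has in mind (it is omitted there as a routine generalization of \citet[Theorem 1]{Schretter2015} and of the proof of Theorem 3 of \citealt{SQMC}): reduce to sets $H^{-1}(B_1)\times B_2$, sandwich $H^{-1}(B_1)$ between unions of dyadic intervals corresponding to the dyadic subcubes contained in, respectively meeting, $B_1$, bound the number of boundary cubes and of maximal intervals (uniformly in $B_1$, via the adjacency and nesting properties of $H$), and let $m=m(N)\rightarrow\infty$ slowly enough that $2^{m(d-1)}D_N\cvz$. The only caveat is cosmetic: you prove convergence to $\pi\otimes K$, which is exactly the form in which the paper applies the corollary (the ``$\pi^N$'' in the displayed statement should be read as $\pi$, just as $K(h(\bx_1),\dx_2)$ should read $K(H(h_1),\dx_2)$).
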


A direct consequence of this corollary is that, under the assumptions of Theorem
\ref{thm:consistency2}, the point set
$P_t^N=(\hat{\bx}^{1:N}_{t-1},\bx_t^{1:N})$ is such that, as
$N\rightarrow+\infty$,
$$ \|\Sop(P_t^N)-\Q_{t-1}\otimes m_{t}\stn\cvz.
$$ 

Another consequence of this corollary is that Algorithm
\ref{alg:SQMC_B} can be trivially adapted to forward smoothing, 
as briefly explained in the next section.

\section{SQMC forward smoothing}\label{sec:forw}

Consider now the following extension of Algorithm \ref{alg:SQMC_B}, 
where full trajectories $\bz_t\eqdef\bx_{0:t}\in\setX^{t+1} $ are carried forward: at time $0$, set $\bz_0^n\eqdef\bx_0^n$, and, recursively, 
$\bz_t^n\eqdef(\hat{\bz}_t^n,\bx_t^n)$, with $\hat{\bz}_t^n\eqdef\bz_{t-1}^{a_{t-1}^n}$. 
In addition, replace the Hilbert sort step of Algorithm \ref{alg:SQMC_B}
by the same operation on full trajectories: 
\begin{quote}
  Hilbert sort: find permutation $\sigma_{t-1}$ such that
  $h^{t}(\bz_{t-1}^{\sigma_{t-1}(1)}) \leq\ldots\leq
  h^{t}(\bz_{t-1}^{\sigma_{t-1}(N)})$
\end{quote}
with $h^{t}$ the inverse of a Hilbert curve $H^t$ that maps $[0,1]$
into $[0,1]^{dt}$. In other words, this is the SQMC equivalent of the smoothing
technique known as `forward smoothing'.

\begin{prop}\label{prop:constforward}
Under Assumptions 1-3 of Theorem \ref{thm:consistency2}, and Assumption 4'
\begin{quote}
  \emph{4'. $\Qt_{t}(\dz_{t})=\tilde{p}_{t}(\bz_{t})\lambda_{d(t+1)}(\dz_{t})$
    where $\tilde{p}_t(\bz_{t})$ is a strictly positive bounded
    density;}
\end{quote}
one has, for $t\geq 0$ and the forward smoothing algorithm described above,
\begin{align}\label{eq:forward}
  \left\|\sum_{n=1}^NW^{n}_{t}\delta_{\bz_{t}^{n}}-\Qt_{t}\right\stn\cvz,\quad\mbox{as
  }N\rightarrow+\infty
\end{align}
where $\Qt_{t}$ denotes the smoothing distribution at time $t$.
\end{prop}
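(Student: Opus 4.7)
The proof proceeds by induction on $t$, lifting the argument of Theorem \ref{thm:consistency2} from the state space $\setX$ to the trajectory space $\setX^{t+1}$. The base case $t = 0$ is immediate, since $\bz_0^n = \bx_0^n$ and $\Qt_0 = \Q_0$, so \eqref{eq:forward} at $t=0$ is just the $t=0$ conclusion of Theorem \ref{thm:consistency2}.

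For the inductive step, I would view the forward smoothing algorithm as SQMC applied to the extended Feynman-Kac model whose state at time $t$ is the trajectory $\bz_t = \bx_{0:t}$ in $\setX^{t+1}$, whose transition from $\bz_{t-1}$ to $\bz_t$ copies the past deterministically and appends a new $\bx_t$ drawn from $m_t(\bx_{t-1}, \dx_t)$, and whose potential is $G_t(\bx_{t-1}, \bx_t)$. The filtering distribution of this extended model is precisely $\Qt_t$, thanks to the identity
\[
\Qt_t(\dz_t) \propto \Qt_{t-1}(\dd\bz_{t-1}) \, m_t(\bx_{t-1}, \dx_t) \, G_t(\bx_{t-1}, \bx_t).
\]

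The induction then mirrors the chain of steps in Theorem \ref{thm:consistency2}. First, I would combine the inductive hypothesis $\|\sum_n W_{t-1}^n \delta_{\bz_{t-1}^n} - \Qt_{t-1}\stn \cvz$ with Theorem \ref{thm:Hilbert2} applied to the Hilbert curve $H^{t}: [0,1] \to [0,1]^{dt}$ and its pseudo-inverse $h^{t}$ (using the bounded density supplied by Assumption 4') to obtain convergence of the push-forward of $\sum_n W_{t-1}^n \delta_{\bz_{t-1}^n}$ through $h^{t}$ towards the image $\Qt_{t-1, h^{t}}$ on $\ui$. Second, the QMC inverse-CDF resampling step driven by the sorted $u_t^{1:N}$ yields convergence of the equal-weighted empirical measure on $h^{t}(\hat{\bz}_{t-1}^{1:N})$ to $\Qt_{t-1, h^{t}}$; this is exactly the argument used in the first half of the proof of Theorem 5 of \citet{SQMC}. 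Third, applying $H^{t}$ componentwise followed by the inverse Rosenblatt transform $F_{m_t}^{-1}$ on the $\bv_t^{1:N}$ coordinates, and invoking Corollary \ref{cor:Hilbert2} together with Theorem \ref{thm:GenHM}, transfers this into convergence of $\frac{1}{N}\sum_n \delta_{(\hat{\bz}_{t-1}^n, \bx_t^n)}$ towards $\Qt_{t-1} \otimes m_t$ in extreme norm. Finally, weighting by the continuous bounded potential $G_t$ and normalising, via the standard closing manipulations of the proof of Theorem \ref{thm:consistency2}, delivers \eqref{eq:forward} at time $t$.

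The main obstacle is controlling the Hilbert-curve conversion on the growing-dimensional trajectory space. This is where Assumption 4' plays its role: the strictly positive bounded density of $\Qt_{t-1}$ is exactly the hypothesis required to invoke Theorem \ref{thm:Hilbert2} for the passage $\bz_{t-1} \leftrightarrow h^{t}(\bz_{t-1})$, and strict positivity additionally ensures that the image measure $\Qt_{t-1, h^{t}}$ charges all of $\ui$, so that the subsequent Rosenblatt inversion on $\ui \times \uid$ is well-posed. Once these Hilbert-curve manipulations are in place, the remainder of the induction is essentially a verbatim repetition of the proof of Theorem \ref{thm:consistency2}, with $\setX$ replaced throughout by $\setX^{t+1}$ and $H$, $h$ replaced by $H^{t}$, $h^{t}$ (and $H^{t+1}$, $h^{t+1}$ at the next iterate) as appropriate.
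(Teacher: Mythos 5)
Your proposal is correct and follows essentially the same route as the paper's proof: view forward smoothing as SQMC on the path space, adapt the argument of Theorem \ref{thm:consistency2} with Assumption 4' supplying the bounded density on $\setX^{t+1}$, pass back from Hilbert indices to trajectories via Corollary \ref{cor:Hilbert2} to get $\|\Sop(\bz_t^{1:N})-\Qt_{t-1}\otimes m_t\stn\cvz$, and conclude by reweighting with the continuous bounded $G_t$ (Theorem 1 of \citet{SQMC}). One citation slip: in your first step, the conversion from the trajectory space to the Hilbert index (push-forward through $h^t$) is given by Theorem 3 of \citet{SQMC} (discrepancy is preserved under $h$ when the limit has a bounded density, which is exactly what Assumption 4' provides), not by Theorem \ref{thm:Hilbert2}, which goes in the opposite direction (images under $H_k$) and is the direction you correctly use later via Corollary \ref{cor:Hilbert2}.
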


See Appendix \ref{p-thm:consforward} for a proof. 

This result is presented for the sake of completeness, 
but it is clear that it is of limited practical interest. 
Transformations through $H^t$ will lead to poor convergence rates as soon
as $t$ becomes large, as per Theorem \ref{thm:Hilbert2}. 
In addition, there is no reason to believe that the SQMC version of 
forward smoothing would not suffer from the same major drawback as its Monte Carlo
counterpart, namely that the $N$ simulated paths quickly coalesce to 
a single ancestor. 

\section{SQMC backward smoothing}\label{sec:back}

We now turn to the derivation and analysis of a SQMC version of backward smoothing. There exist in fact two backward smoothing algorithms. The first one 
\citep{DouGodAnd}, approximates the marginal smoothing distributions $\Q_{t|T}(\dx_t)$  for $t\in 0:T$; that is, the marginal distribution of $\bx_t$ relative to $\Qt_T(\dx_{0:T})$. This may be used to compute the smoothing expectation of additive functions $\varphi(\bx_{0:T})=\sum_{t=0}^T\varphi_t(\bx_t)$ such as, e.g., the score functions of certain models \citep[e.g.][]{Poyiadjis2011}. See Section \ref{subsec:BS_Marg}. 

The second type of backward step \citep{GodsDoucWest} allows to estimate the full (joint) smoothing distribution $\Qt(\dx_{0:T})$. Its SQMC version is  given  and analysed in Section \ref{subsec:BS}.

These two algorithms share the following properties: 
(a) they require that the Markov kernel $m_t(\bx_{t-1},\dx_t)$ admits a positive probability density $m_t(\bx_{t-1},\bx_t)$ which may be computed pointwise (for 
all $\bx_{t-1},\bx_t\in\setX$);
(b) they use as input the output of a forward pass, i.e. either Algorithm \ref{alg:SMC} (SMC),
or  Algorithm \ref{alg:SQMC_B} (SQMC);
and (c) their complexity is $\bigO(TN^2)$. 


\subsection{Marginal backward smoothing}\label{subsec:BS_Marg}

To perform marginal smoothing, it suffices to compute, from the output of the forward pass, the following smoothing weights:
\begin{equation*}
\widetilde{W}_{t|T}^n \eqdef 
W_t^n \times \sum_{m=1}^{N} 
\frac{\widetilde{W}_{t+1|T}^m m_{t+1}(\bx_t^n,\bx_{t+1}^m)G_{t+1}(\bx_t^n,\bx_{t+1}^m)}
{\sum_{p=1}^{N} W_t^p m_{t+1}(\bx_t^p,\bx_{t+1}^m)G_{t+1}(\bx_t^p,\bx_{t+1}^m)}
\end{equation*}
for all $n\in 1:N$, and recursively, from $t=T-1$, to $t=0$. (For $t=T$, simply 
set $\widetilde{W}_{t|T}^n = W_T^n$.) Then 
\begin{equation*}
\Qt_{t|T}^{N}(\dx_t)\eqdef\sum_{n=1}^N \widetilde{W}_{t|T}^n \delta_{\bx_t^n}(\dx_t) \approx \Q_{t|T}(\dx_t).
\end{equation*}

This particular backward pass may be applied to either the output of SMC (Algorithm \ref{alg:SMC}), or SQMC (Algorithm \ref{alg:SQMC_B}). In the latter case, the question is whether this approach remains valid.  The answer is directly given by 
Theorem \ref{thm:BS}: under its assumptions, we have that 
  \begin{align*}
    \|\Qt_{t|T}^{N}-\Qt_{t|T}\stn\cvz,\quad\mbox{as }N\rightarrow+\infty
  \end{align*}
since $\Qt_{t|T}$ (resp. $\Qt_{t|T}^{N}$) is a certain marginal distribution of $\Qt_T$
(resp. $\Qt_T^N$). In words, marginal backward smoothing generates consistent 
(marginal) smoothing estimates when applied to the output of the SQMC algorithm.

\subsection{Full backward smoothing}\label{subsec:BS}

The SQMC backward step to estimate the joint smoothing distribution $\Qt_T$, proposed in \citet{SQMC}, is recalled as Algorithm \ref{alg:back}.

\begin{algorithm}[h]
  \caption{SQMC Backward step for full smoothing\label{alg:back}}
  \begin{algorithmic}
	\Require{$\bx_{t}^{\sigma_{t}(1:N)}$,
      $W_{t}^{\sigma_{t}(1:N)}$ for $t\in 0:T$, output of Algorithm
      \ref{alg:SQMC_B} after the Hilbert sort step (i.e, for all $n,m\in 1:N$,
       $n\leq m\implies h (\bx_t^{\sigma_t(n)})\leq h(\bx_t^{\sigma_t(m)})$) and $\tilde{\bu}^{1:N}$ a point set in $\ui^{T+1}$ such that, for all $n,m\in 1:N$, $n\leq m\implies u_T^n\leq u_T^m$}
	
	\Ensure $\tilde{\bx}^{1:N}_{0:T}$ ($N$ trajectories in
    $\setX^{T+1}$)

    \For{$n=1\rightarrow N$}
    \State Compute $\tilde{\bx}_{T}^{n}=\bx_{T}^{a_{T}^{n}}$ where $a_{T}^n=F_{T,N}^{-1}(u_T^{n})$ with 
    $F_{T,N}(i)=\sum_{m=1}^NW^{\sigma_T(m)}_{T}\mathbb{I}(m\leq i)$
    \EndFor

    \For{$t=T-1\rightarrow 0$}
     \For{$n=1\rightarrow N$} 
    \State\label{state:EmpCDF}  Compute $\tilde{\bx}_t^{n}=\bx_t^{\tilde{a}_t^n}$ where $\tilde{a}_t^n=\tilde{F}_{t,N}^{-1}(\tilde{\bx}^n_{1+1},\tilde{u}_{t}^{n})$ with 
$
     \tilde{F}_{t,N}(\bx_{t+1}, i)=\sum_{m=1}^N
\widetilde{W}^{\sigma_t(m)}_{t}(\bx_{t+1})\mathbb{I}(m\leq
i), 
$ 
and $\widetilde{W}_{t}^{m}(\bx_{t+1})=
\frac{W_t^m m_{t+1}(\bx_t^m,\bx_{t+1})G_{t+1}(\bx_t^m,\bx_{t+1})}
{\sum_{p=1}^{N}W_t^p m_{t+1}(\bx_t^p,\bx_{t+1})G_{t+1}(\bx_t^p,\bx_{t+1})}.
$
\EndFor
\EndFor
\end{algorithmic}
\end{algorithm}

Algorithm \ref{alg:back} generates a low discrepancy point set for distribution 
$\Qt_{T}^{N}$, the plug-in estimate of $\Qt_{T}$, and is therefore the exact QMC equivalent of the backward
step of standard backward sampling.

To better understand why Algorithm \ref{alg:back} is valid, it helps to decompose 
it in two steps. First, it transforms $\tilde{\bu}^{1:N}$, a  point
set in $\ui^{T+1}$, into 
$\tilde{h}_{0:T}^{1:N}$, another point set in $[0,1)^ {T+1}$, by applying 
the inverse Rosenblatt transformation of 
\begin{equation}\label{eq:RSback}
  \Qt_{T,h}^{N}(\dd h_{0:T}):=\Qh^N_{T,h}(\dd h_T)\prod_{t=1}^T
  \mathcal{M}^h_{t,\Qh^N_{t-1,h}}(h_t,\dd h_{t-1}),
\end{equation}
which is the image of  probability measure
$\Qt_{T}^{N}(\dx_{0:T})$, defined in \eqref{eq:backwardN}, by mapping 
$h_T: (\bx_0,\dots,\bx_T)\mapsto(h(\bx_0),\dots,h(\bx_T))$. 
Recall that $\Qh_{t,h}^N$ is the image of $\Qh_t^N$ by $h$
while, for any $\pi\in\mathcal{P}(\ui)$ and $t\in 1:T$,
$\mathcal{M}^h_{t+1,\pi}:\ui\rightarrow \mathcal{P}(\ui)$ is a Markov
kernel such that
$$ \mathcal{M}^h_{t,\pi}(h_{t},\dd
h_{t-1})
\propto m_{t}\big(H(h_{t-1}),H(h_{t})\big)G_t\big( H(h_{t-1}),H(h_t)\big)
\pi(\dd h_{t-1}).
$$

In a second step, Algorithm \ref{alg:back} returns $\tilde{\bx}^{1:N}_{0:T}$
where $\tilde{\bx}^{n}_{0:T}=H_T(\tilde{h}_{0:T}^{n})$ with the mapping $H_T:\ui^{T+1}\rightarrow \ui^{d(T+1)}$ defined in \eqref{eq:HT}.

\subsubsection{$L_1-$ and $L_2-$convergence}\label{subsec:L1L2BS}

A direct consequence of the inverse Rosenblatt interpretation of the previous section is that, when Algorithm \ref{alg:back} uses a RQMC point set as input, the random point $\tilde{\bx}_{0:T}^n$ is such that, for any function $\varphi:\ui^{d(T+1)}\rightarrow\mathbb{R}$ and for any $n\in 1:N$, we have  $\E[\varphi(\tilde{\bx}_{0:T}^n)|\mathcal{F}_T^N]=\Qt_T^N(\varphi)$, with $\mathcal{F}_T^N$  the $\sigma$-algebra generated by the forward step. Together with Theorem \ref{thm:BS}, this observation allows us to deduce $L_2$-convergence for test functions $\varphi$ that are continuous and bounded (see Appendix \ref{p-thm:L2} for a proof).

\begin{thm}\label{thm:L2}
Consider the set-up of the SQMC forward filtering-backward smoothing algorithm (Algorithms \ref{alg:SQMC_B} and \ref{alg:back}) and assume the following:
\begin{enumerate}
\item\label{H:u1} In Algorithm \ref{alg:SQMC_B}, $(\bu_{t}^{1:N})_{N\geq 1}$, $t\in0:T$, are independent random   sequences of  point sets in
$\ui^{d_t}$, with $d_0=d$ and $d_t=d+1$ for $t>0$, such
that, for any $\epsilon>0$, there exists a $N_{\epsilon,t}>0$
such that, almost surely, $D(\bu_{t}^{1:N})\leq\epsilon$, $\forall N\geq N_{\epsilon,t}$;

\item\label{H:u2} In Algorithm \ref{alg:back}, $(\tilde{\bu}^{1:N})_{N\geq 1}$ is a sequence of point sets in $\ui^{T+1}$ such that 
\begin{enumerate}
\item  $\forall n\in 1:N$, $\tilde{\bu}^n\sim\Unif(\ui^{T+1})$;
\item For any function $\varphi\in L_{2}\left(\ui^{d(T+1)},\lambda_{d_{t}}\right)$,
$
\var\left(\frac{1}{N}\sum_{n=1}^{N}\varphi(\bu_{t}^{n})\right)\leq C\sigma_{\varphi}^{2}r(N)
$
where  $\sigma_{\varphi}^{2}=\int\left\{ \varphi(\bu)-\int\varphi(\bv)\dd \bv\right\} ^{2}\du$, $r(N)\rightarrow 0$ as $N\rightarrow +\infty$, 
and where both $C$ and $r(N)$ do not depend on $\varphi$;
\end{enumerate}    
\item Assumptions of Theorem \ref{thm:consistency2} and Assumptions H1-H2 of Theorem \ref{thm:BS} hold. 
\end{enumerate}
Then, for any continuous and bounded function $\varphi:\setX^{T+1}\rightarrow\mathbb{R}$,
$$
\E \left|\Sop(\tilde{\bx}_{0:T}^{1:N})(\varphi)-\Qt_T(\varphi)\right|\cvz,\quad \var\left(\Sop(\tilde{\bx}_{0:T}^{1:N})(\varphi)\right)\cvz,\quad\mbox{as }N\rightarrow+\infty.
$$
\end{thm}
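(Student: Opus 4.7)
The plan is a standard ``RQMC variance plus plug-in consistency'' decomposition. Write
\begin{align*}
\Sop(\tilde{\bx}_{0:T}^{1:N})(\varphi)-\Qt_T(\varphi) \;=\; A_N + B_N,
\end{align*}
with $A_N = \Sop(\tilde{\bx}_{0:T}^{1:N})(\varphi) - \Qt_T^N(\varphi)$ capturing the randomness of the backward RQMC pass given the forward pass, and $B_N = \Qt_T^N(\varphi) - \Qt_T(\varphi)$ capturing the consistency of the plug-in backward estimate. I will show both pieces vanish in $L_2$ and then deduce the two claims via the triangle inequality and the conditional-variance decomposition.

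For $B_N$, Assumption \ref{H:u1} gives that, almost surely, $D(\bu_t^{1:N})\cvz$ for every $t\in 0:T$, so Theorem \ref{thm:BS} applies pathwise and yields $\|\Qt_T^N-\Qt_T\stn\cvz$ a.s. The portmanteau property recalled in Section \ref{sub:QMC} then gives $\Qt_T^N(\varphi)\to\Qt_T(\varphi)$ a.s.\ for every continuous bounded $\varphi$; since $|\Qt_T^N(\varphi)|\leq\|\varphi\|_\infty$, dominated convergence delivers both $\E|B_N|\cvz$ and $\var(\Qt_T^N(\varphi))\cvz$.

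For $A_N$, I exploit the two-step inverse Rosenblatt interpretation of Algorithm \ref{alg:back} described just before Section \ref{subsec:L1L2BS}: conditional on the forward-pass $\sigma$-algebra $\F_T^N$, each $\tilde{\bx}_{0:T}^n$ equals $H_T\bigl(F_{\Qt_{T,h}^N}^{-1}(\tilde{\bu}^n)\bigr)$. Because $\tilde{\bu}^n$ is marginally $\Unif(\ui^{T+1})$ and drawn independently of the forward pass, the pushforward of $\tilde{\bu}^n$ under this map is exactly $\Qt_T^N$, so $\E[\Sop(\tilde{\bx}_{0:T}^{1:N})(\varphi)\mid\F_T^N] = \Qt_T^N(\varphi)$. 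Setting $\psi_N(\bu) = \varphi\bigl(H_T(F_{\Qt_{T,h}^N}^{-1}(\bu))\bigr)$, one has $\|\psi_N\|_\infty \leq \|\varphi\|_\infty$, and applying the RQMC variance hypothesis (Assumption \ref{H:u2}(b)) conditionally with $\psi_N$ in place of the dummy function gives
\begin{align*}
\var(A_N\mid\F_T^N) \;\leq\; C\,\|\varphi\|_\infty^2\, r(N).
\end{align*}
Taking expectations yields $\E A_N^2 \leq C\|\varphi\|_\infty^2 r(N)\cvz$, and hence $\E|A_N|\cvz$ by Cauchy--Schwarz.

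The $L_1$ claim then follows from $\E|\Sop(\tilde{\bx}_{0:T}^{1:N})(\varphi)-\Qt_T(\varphi)| \leq \E|A_N| + \E|B_N|$. For the variance, the conditional-variance decomposition gives
\begin{align*}
\var\!\bigl(\Sop(\tilde{\bx}_{0:T}^{1:N})(\varphi)\bigr) \;=\; \E\!\bigl[\var(\cdot\mid\F_T^N)\bigr] + \var\!\bigl(\Qt_T^N(\varphi)\bigr) \;\leq\; C\|\varphi\|_\infty^2 r(N) + \var\!\bigl(\Qt_T^N(\varphi)\bigr),
\end{align*}
and both terms tend to zero. The only delicate bookkeeping is the unbiasedness identity $\E[\Sop(\tilde{\bx}_{0:T}^{1:N})(\varphi)\mid\F_T^N] = \Qt_T^N(\varphi)$: one has to check that the inverse Rosenblatt construction still pushes $\Unif(\ui^{T+1})$ onto $\Qt_T^N$ when $\Qt_T^N$ is a discrete measure on $\setX^{T+1}$, which is routine once the factorisation \eqref{eq:RSback} is written out. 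Granting this, the rest is a clean combination of Theorem \ref{thm:BS} with the RQMC variance bound.
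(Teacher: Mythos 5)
Your proposal is correct and follows essentially the same route as the paper's proof: the same decomposition into a backward-pass term and a plug-in term, the same unbiasedness identity $\E[\Sop(\tilde{\bx}_{0:T}^{1:N})(\varphi)\mid\F_T^N]=\Qt_T^N(\varphi)$ via the inverse Rosenblatt interpretation, the conditional application of the RQMC variance bound, and the law of total variance at the end. The only (harmless) difference is that you bound the conditional variance factor by $\|\varphi\|_\infty^2$ directly, whereas the paper bounds it by $\Qt_T^N(\varphi^2)\leq\Qt_T(\varphi^2)+\epsilon$ using the a.s.\ consistency from Assumption 1 — your cruder bound suffices since $\varphi$ is bounded.
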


Assumption \ref{H:u1} is  verified for instance when  $\bu_t^{1:N}$ consists of the first $N$ points of a  nested scrambled $(t,d_t)$-sequence  in base $b\geq 2$ \citep{Owen1995,Owen1997a,Owen1998}. The result above may be easily extended to the case  where the  $\bu_t^{1:N}$'s are deterministic (rather than random) QMC point sets.

On the other hand, the point set $\tilde{\bu}^{1:N}$ used as input of the backward pass is necessarily random (for the result above to hold). But $\tilde{\bu}^{1:N}$ does not need to be a QMC point set (i.e. to have low discrepancy). In particular, Assumption \ref{H:u2} is satisfied when the $\tilde{\bu}^{1:N}$ are IID uniform variates (in $\ui^{T+1}$); then $C=1$ and $r(N)=N^{-1}$. See Section \ref{sec:altBack} for a discussion one the use of QMC or pseudo-random numbers in the backward step of SQMC.

\subsubsection{Consistency}\label{sec:3.3}

Compared to standard (forward) SQMC, establishing the consistency 
of SQMC backward smoothing requires two extra technical steps. 
First, as Algorithm \ref{alg:back} generates a point set
$\tilde{h}_{0:T}^{1:N}$ in $\ui^{T+1}$ using the inverse Rosenblatt
transformation of the probability measure defined in \eqref{eq:RSback}, and then projects it back to 
$\setX^{T+1}$ through $H_T$, we need to establish
that this transformation preserves the low discrepancy properties of
$\tilde{h}_{0:T}^{1:N}$. For this we will use 
Theorem \ref{thm:Hilbert2}.

Second, the proof of \cite{SQMC} for the consistency of SQMC 
required  smoothness conditions on the Rosenblatt transformation of $m_{t,h}(h_{t-1},\dx_t)=m_{t}(H(h_{t-1}),\dx_t)$, so that this transformation
maintains low discrepancy, as explained in Section \ref{subsec:sqmc}.  
Due to the H\"{o}lder property of the Hilbert curve, the H\"{o}lder
continuity of $F_{m_t}$ implies that $F_{m_{t,h}}$ is H\"{o}lder
continuous as well. Similarly, for the backward step we need
assumptions on the Markov kernel $\mathcal{M}_{t,\Q_{t-1}}$ which
imply sufficient smoothness for the Rosenblatt transformation of
$\mathcal{M}_{t,\Qh^N_{t-1,h}}^h$ which is used in the course of Algorithm \ref{alg:back} to transform the QMC point set in $\ui^{T+1}$.

To this aim, note that since $\|\Qh^N_{t-1}-\Q_{t-1}\stn\rightarrow 0$ as $N\rightarrow+\infty$ (Theorem \ref{thm:consistency2}),  one may expect that 
$$
\|\mathcal{M}_{t,\Qh^N_{t-1,h}}^h-\mathcal{M}_{t,\Q_{t-1,h}}^h\stn\cvz,\quad \text{as $N\rightarrow +\infty$}.
$$
Therefore, we intuitively need  smoothness assumption on this limiting Markov kernel to establish the
validity of the backward pass of SQMC. However, note that the two
arguments of this kernel are ``projections'' in $\ui$ through the
inverse of the Hilbert curve. Consequently, it is not clear how smoothness assumptions on the Rosenblatt transformation of $\mathcal{M}_{t,\Q_{t-1}}$ would translate into some regularity for the Rosenblatt transformation of $\mathcal{M}_{t,\Q_{t-1,h}}^h$. As shown below, a consistency result for
 QMC forward-backward algorithm can be established under a H\"{o}lder
assumption on the CDF of $\mathcal{M}_{t,\Q_{t-1}}$.

To establish the consistency of Algorithm \ref{alg:back}  we proceed in two steps. First, we consider a modified backward pass which amounts to sampling from a
continuous distribution. Working with a continuous distribution allows us to focus on the  technical difficulties specific to the backward step we just mentioned without being 
 distracted by complicated discontinuity issues. Then, the result  obtained for this continuous backward pass is used to deduce sufficient conditions for the consistency of Algorithm \ref{alg:back}. If this approach in two steps greatly facilitates the analysis,  the resulting conditions for the validity of QMC forward-backward smoothing have the drawback to impose that the Markov kernel $m_t$ and the potential function $G_t$ are  bounded below away from zero (see Corollary \ref{cor:BS_star} below).
 
\subsubsection{A continuous backward pass}

Following the discussion above, we consider now a modified backward pass, which amounts to transforming a QMC point set
$\tilde{\bu}^{1:N}$ in $\ui^{T+1}$ through the inverse Rosenblatt
transformation of a continuous approximation
$\widetilde{\mathsf{Q}}_{T,h_T}^N$ of $\Qt_{T,h_T}^N$. 

To construct
$\widetilde{\mathsf{Q}}_{T,h_T}^N$, let $\widehat{\mathsf{Q}}_{T,h}^N$
be the probability measure that corresponds to a continuous
approximation of the CDF of $\Qh^N_{T,h}$, which is strictly increasing
on $[0,h(\bx_T^{\sigma_T(N)})]$ with
$F_{\widehat{\mathsf{Q}}_{T,h}^N}(h(\bx_T^{\sigma_T(N)}))=1$ and such
that, under the assumptions of Theorems \ref{thm:consistency2} and
\ref{thm:BS},
$$ \|\widehat{\mathsf{Q}}_{T,h}^N-\Qh_{T,h}^{N}\stn=\smallo(1).
$$ Next, for $t\in 1:T$, let $K^N_{t,h}:\ui\rightarrow\mathcal{P}(\ui)$ be a Markov kernel such that:

\begin{enumerate}
\item Its CDF is continuous on $\ui\times
[0,h(\bx_{t-1}^{\sigma_{t-1}(N)})]$;

\item $\forall h_t\in\ui$, the CDF
of $K^N_{t,h}(h_t,\dd h_{t-1})$ is strictly increasing on $[0,
h(\bx_{t-1}^{\sigma_{t-1}(N)})]$ with
$F_{K^N_{t,h}}(h_t,h(\bx_{t-1}^{\sigma_{t-1}(N)}))=1$;

\item Under the assumptions of Theorems \ref{thm:consistency2} and and
\ref{thm:BS}, 
$$
 \sup_{h_t\in \ui}\|K^N_{t,h}(h_t,\dd
h_{t-1})-\mathcal{M}^{h}_{t,\Qh^N_{t-1,h}}(h_t,\dd
h_{t-1})\stn=\smallo(1).
$$
\end{enumerate}

Finally, we define $\widetilde{\mathsf{Q}}_{T,h_T}^N\in\mathcal{P}(\ui^{T+1})$
as
\begin{align*}
  \widetilde{\mathsf{Q}}_{T,h_T}^N(\dd
  h_{0:T}):=\widehat{\mathsf{Q}}_{T,h}^N(\dd h_T)\prod_{t=1}^T
  K^N_{t,h}(h_t,\dd h_{t-1})
\end{align*}
which, by construction, has a Rosenblatt transformation which is continuous on $\ui^{T+1}$.

Remark that such a distribution
$\widetilde{\mathsf{Q}}_{T,h_T}^N$ indeed exists. For instance,
under the assumptions of Theorems \ref{thm:consistency2} and
\ref{thm:BS}, one can take for $\widehat{\mathsf{Q}}_{T,h}^N$ the
probability distribution that corresponds to a piecewise linear approximation of
the CDF of $\Qh^N_{T,h}$ and, similarly, for $h_t\in\ui$, one can
construct $K^N_{t,h}(h_t,\dd h_{t-1})$ from a piecewise linear approximation of
the CDF of $\mathcal{M}^{h}_{t,\Qh^N_{t-1,h}}(h_t,\dd h_{t-1})$.

For this modified backward step we  obtain the following consistency result:
 
\begin{thm}\label{thm:BS_star}
  Let $(\tilde{\bu}^{1:N})_{N\geq 1}$ be a sequence of point sets in
  $\ui^{T+1}$ such that $D(\tilde{\bu}^{1:N})\cvz$ as
  $N\rightarrow+\infty$.  For $n\in 1:N$, let
  $\check{h}_{0:T}^{n}=F^{-1}_{\widetilde{\mathsf{Q}}_{T,h_T}^N}(\tilde{\bu}^{n})$
  where $\widetilde{\mathsf{Q}}_{T,h_T}^N$ is as above. Suppose that
  the Assumptions of Theorem \ref{thm:consistency2} and Assumptions H1-H2 of Theorem \ref{thm:BS} hold and that, viewed
  as a function of $\bx_t$ and $\bx_{t-1}$,
  $F_{\mathcal{M}_{t,\Q_{t-1}}}^{cdf}(\bx_t,\bx_{t-1})$, the CDF of
  $\mathcal{M}_{t,\Q_{t-1}}(\bx_t,\dx_{t-1})$, is H\"{o}lder
  continuous for all $t\in 1:T$. Let
  $\check{\bx}_{0:T}^n=H_T(\check{h}_{0:T}^{n})$. Then,
$$ \|\Sop(\check{\bx}_{0:T}^{1:N})-\Qt_T\stn\cvz\quad\text{as
}N\rightarrow+\infty.
$$
\end{thm}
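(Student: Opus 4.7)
The strategy is to decompose the argument into three steps: (i) apply the generalized Hlawka result (Theorem \ref{thm:GenHM}) to transfer the low discrepancy of $\tilde{\bu}^{1:N}$ into a small extreme-norm distance between $\Sop(\check{h}_{0:T}^{1:N})$ and $\widetilde{\mathsf{Q}}_{T,h_T}^N$ on $\ui^{T+1}$; (ii) show that $\widetilde{\mathsf{Q}}_{T,h_T}^N$ converges in extreme norm on $\ui^{T+1}$ to $\Qt_{T,h_T}$, the image of $\Qt_T$ under the coordinate-wise Hilbert pseudo-inverse $h_T$; and (iii) invoke Theorem \ref{thm:Hilbert2} with $k=T$ and $\pi=\Qt_T$ to lift this convergence from $\ui^{T+1}$ back to $\ui^{d(T+1)}$ via $H_T$, which delivers the announced $\|\Sop(\check{\bx}_{0:T}^{1:N})-\Qt_T\stn\cvz$.

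For step (i), I would verify the hypotheses of Theorem \ref{thm:GenHM} for the measure $\widetilde{\mathsf{Q}}_{T,h_T}^N$: by construction its Rosenblatt transform is continuous and strictly monotone in each coordinate, and its conditional densities are bounded because the $K^N_{t,h}$ and $\widehat{\mathsf{Q}}_{T,h}^N$ are defined as piecewise linear approximations. The required H\"{o}lder exponent for the Rosenblatt transform is obtained by composing the $(1/d)$-H\"{o}lder continuity of the Hilbert curve $H$ (used inside each $\mathcal{M}^h_{t,\cdot}$) with the H\"{o}lder continuity of $F^{cdf}_{\mathcal{M}_{t,\Q_{t-1}}}$ assumed in the theorem. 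For step (ii), a telescoping decomposition of the factorised forms of $\widetilde{\mathsf{Q}}_{T,h_T}^N$ and $\Qt_{T,h_T}$ combined with the defining $\smallo(1)$ bounds on $\|\widehat{\mathsf{Q}}_{T,h}^N - \Qh^N_{T,h}\stn$ and on $\sup_{h_t}\|K^N_{t,h}(h_t,\cdot)-\mathcal{M}^h_{t,\Qh^N_{t-1,h}}(h_t,\cdot)\stn$, together with the extreme-norm consistency of $\Qh^N_t$ (Theorem \ref{thm:consistency2}) and of the plug-in smoother $\Qt_T^N$ (Theorem \ref{thm:BS}), yields the Hilbert-coordinate convergence; the conversion of consistency from $\ui^{d(T+1)}$ to $\ui^{T+1}$ via $h_T$ is enabled by the fact that preimages under $h$ of sub-intervals of $\ui$ can be written as countable unions of dyadic hyper-boxes, which is precisely the ingredient underlying Theorem \ref{thm:Hilbert2}.

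The main obstacle lies in the uniform-in-$N$ control of the constants produced in step (i). Theorem \ref{thm:GenHM} yields a bound of the form $c_N\,D(\tilde{\bu}^{1:N})^{1/\tilde d}$, so one must show that the H\"{o}lder constant of the Rosenblatt transform of $\widetilde{\mathsf{Q}}_{T,h_T}^N$ and the $L^\infty$-bounds on its conditional densities do not degrade as $N\to\infty$. This forces one to engineer the approximating kernels $K^N_{t,h}$ and the approximating CDF $\widehat{\mathsf{Q}}_{T,h}^N$ so that their slopes are controlled independently of $N$, which is feasible because $h(\bx_t^{\sigma_t(N)})\to 1$; and it forces one to track a H\"{o}lder constant through the $T+1$ chained conditional CDFs, each involving Hilbert-curve compositions. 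The assumption that $F^{cdf}_{\mathcal{M}_{t,\Q_{t-1}}}$ is H\"{o}lder continuous is exactly what makes this chaining survive the passage to the limit: it supplies the limiting smoothness that the data-driven kernels $\mathcal{M}^h_{t,\Qh^N_{t-1,h}}$ inherit asymptotically, and that the continuous approximations $K^N_{t,h}$ can be chosen to respect.
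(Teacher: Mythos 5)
Your three-step architecture --- (i) low discrepancy of $\check{h}_{0:T}^{1:N}$ with respect to $\widetilde{\mathsf{Q}}_{T,h_T}^N$, (ii) $\|\widetilde{\mathsf{Q}}_{T,h_T}^N-\Qt_{T,h_T}\stn=\smallo(1)$, (iii) lifting through $H_T$ via Theorem \ref{thm:Hilbert2} --- is exactly the paper's, and your steps (ii) and (iii) are essentially the paper's argument (the paper proves (ii) by pushing everything forward to $\setX$, invoking Theorems \ref{thm:BS} and \ref{thm:Hilbert2}, then pulling back with Lemma \ref{lemma:genth34}). The genuine gap is in step (i). Theorem \ref{thm:GenHM} applied to the $N$-dependent measure $\widetilde{\mathsf{Q}}_{T,h_T}^N$ gives a bound $c_N D(\tilde{\bu}^{1:N})^{1/\tilde{d}}$ whose constant depends on the H\"older constant and the conditional-density bounds of $F_{\widetilde{\mathsf{Q}}_{T,h_T}^N}$, and neither is uniformly controlled. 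The hypotheses imposed on $\widehat{\mathsf{Q}}_{T,h}^N$ and $K^N_{t,h}$ are only continuity, strict monotonicity and sup-norm closeness $\smallo(1)$ to the CDFs of \emph{discrete} ($N$-atom) measures; any admissible choice tracking a step CDF that closely --- in particular the one actually needed later in the proof of Corollary \ref{cor:BS_star}, which interpolates $F_{\Qh_{T,h}^N}$ and $F_{\mathcal{M}^h_{t,\Qh^N_{t-1,h}}}$ exactly at the atoms --- has slopes that blow up with $N$ wherever atoms cluster or carry uneven weight. So you cannot ``engineer'' uniformly bounded conditional densities without violating the closeness requirement, and a proof that assumes such engineered regularity establishes a different (weaker) statement than the theorem, which must hold for every admissible $\widetilde{\mathsf{Q}}_{T,h_T}^N$.

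The second, more structural problem is your claim that the required H\"older continuity follows by composing the $(1/d)$-H\"older property of $H$ with the H\"older continuity of $F^{cdf}_{\mathcal{M}_{t,\Q_{t-1}}}$. In Hilbert coordinates the increment of the conditional CDF in the conditioning variable is an integral of $\tilde{G}_t(H(h_{t-1}),\cdot)-\tilde{G}_t(H(h'_{t-1}),\cdot)$ over a set of the form $H([0,h])$, which is \emph{not} a hyperrectangle, so the H\"older bound on $F^{cdf}_{\mathcal{M}_{t,\Q_{t-1}}}$ (which controls increments over boxes) cannot be invoked directly; this is precisely the difficulty flagged in Section \ref{sec:3.3}. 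The paper resolves it with Lemma \ref{lemma:Hilbert}, decomposing $H([0,(k+1)2^{-dm}])$ into $\bigO(m)$ boxes, which produces an extra logarithmic factor: the resulting map is only ``nearly'' H\"older, and the proof must absorb the logarithm through the $\log L'\leq L'^{\gamma}$ slack. For both reasons the paper does not use Theorem \ref{thm:GenHM} as a black box but redoes its proof for $\widetilde{\mathsf{Q}}_{T,h_T}^N$, replacing uniform smoothness of the approximations by (a) sup-norm closeness $r^*(N)\cvz$ to the limiting CDFs and (b) smoothness of the \emph{limiting} objects (H\"older $F^{cdf}_{\mathcal{M}_{t,\Q_{t-1}}}$, bounded $\tilde{G}_t$, bounded $p_t$), with the partition parameter $L$ chosen jointly as a function of $D(\tilde{\bu}^{1:N})$ and $r^*(N)$ (the constraint $L^{-d^T}\geq 2r^*(N)$ and the final two-case balancing). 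Without this mechanism your step (i) does not go through as written.
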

See Appendix \ref{p-thm:BS_star} for a proof.

\subsubsection{A consistency result for SQMC forward-backward smoothing}

We are now ready to provide conditions which ensure that QMC
forward-backward smoothing (Algorithms \ref{alg:SQMC_B} and
\ref{alg:back}) yields a consistent estimate of the smoothing
distribution. The key idea of our consistency result (Corollary
\ref{cor:BS_star} below) is to show that, for a given point set
$\tilde{\bu}^{1:N}$, the point set $\tilde{\bx}_{0:T}^{1:N}$ generated
by Algorithm \ref{alg:back} becomes, as $N$ increases, arbitrary
close to the point set $\check{\bx}_{0:T}^{1:N}$ obtained by the
modified backward step described in the previous subsection.

\begin{corollary}\label{cor:BS_star}
  Consider the set-up of the SQMC forward filtering-backward smoothing
  algorithm (Algorithms \ref{alg:SQMC_B} and \ref{alg:back}) and
  assume the following holds for $t\in 0:T-1$:
  \begin{enumerate}
  \item $(\tilde{\bu}^{1:N})_{N\geq 1}$ is a sequence of point sets in
    $\ui^{T+1}$ such that $D(\tilde{\bu}^{1:N})\cvz$ as
    $N\rightarrow+\infty$;
  \item Assumptions of Theorem \ref{thm:consistency2} and Assumptions H1-H2 of Theorem \ref{thm:BS} hold;
  \item 
    $F_{\mathcal{M}_{t,\Q_{t-1}}}^{cdf}(\bx_t,\bx_{t-1})$ is
    H\"{o}lder continuous;
  \item \label{H:corr:gamma}There exists a constant $\underline{c}_t>0$ such that, for all $\bx_{(t-1):(t+1)}\in\setX^3$,
$$ 
G_t(\bx_{t-1},\bx_t)G_{t+1}(\bx_{t},\bx_{t+1})m_{t+1}(\bx_{t},\bx_{t+1})\geq \underline{c}_t;
$$
\item\label{H:corr:Unif}  $G_t(\bx_{t-1},\bx_t)m_t(\bx_{t-1},\bx_t)$ is uniformly  continuous on $\setX^2$.
\end{enumerate}
Then,
$$ \|\Sop(\tilde{\bx}_{0:T}^{1:N})-\Qt_T\stn\cvz\quad\text{as
}N\rightarrow+\infty.
$$
\end{corollary}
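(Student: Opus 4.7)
My plan is to follow the roadmap sketched immediately before the corollary: for the same input point set $\tilde{\bu}^{1:N}$, I will couple the output $\tilde{\bx}_{0:T}^{1:N}$ of Algorithm \ref{alg:back} with the output $\check{\bx}_{0:T}^{1:N}$ of the continuous modified backward pass of Theorem \ref{thm:BS_star}, and conclude via the triangle inequality
$$\|\Sop(\tilde{\bx}_{0:T}^{1:N})-\Qt_T\stn\;\le\;\|\Sop(\tilde{\bx}_{0:T}^{1:N})-\Sop(\check{\bx}_{0:T}^{1:N})\stn+\|\Sop(\check{\bx}_{0:T}^{1:N})-\Qt_T\stn.$$

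First I exhibit a concrete choice of $\widehat{\mathsf{Q}}_{T,h}^N$ and $K^N_{t,h}$ satisfying the three conditions in Section \ref{sec:3.3}, taking them to be the piecewise-linear interpolations of the CDFs of $\Qh_{T,h}^N$ and (for each fixed $h_t$) of $\mathcal{M}^h_{t,\Qh^N_{t-1,h}}(h_t,\cdot)$ between consecutive sorted points $h(\bx_t^{\sigma_t(i)})$. Monotonicity and the unit-mass endpoint conditions are immediate. Joint continuity in $(h_t,h_{t-1})$ follows from assumption \ref{H:corr:Unif}, since the weights $\widetilde{W}_t^{m}(\bx_{t+1})$ depend continuously on $\bx_{t+1}$. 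The $\smallo(1)$ extreme-norm convergences $\|\widehat{\mathsf{Q}}_{T,h}^N-\Qh_{T,h}^N\stn\to 0$ and $\sup_{h_t}\|K^N_{t,h}(h_t,\cdot)-\mathcal{M}^h_{t,\Qh^N_{t-1,h}}(h_t,\cdot)\stn\to 0$ both reduce to showing that every discrete atom carries mass $\bigO(1/N)$: under assumption \ref{H:corr:gamma} the denominator in the definition of $\widetilde{W}_t^{m}(\bx_{t+1})$ is bounded below by $\underline{c}_t$, whilst the numerator is bounded above by $\|m_{t+1}G_{t+1}\|_\infty\,W_t^m$, and a parallel lower-bound argument for the forward normalisation gives $W_t^m=\bigO(1/N)$. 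With these choices in hand, Theorem \ref{thm:BS_star} delivers $\|\Sop(\check{\bx}_{0:T}^{1:N})-\Qt_T\stn\cvz$, its H\"older hypothesis on $F_{\mathcal{M}_{t,\Q_{t-1}}}^{cdf}$ being supplied by hypothesis~3 of the corollary.

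For the coupling term, both $\tilde{h}_{0:T}^n$ and $\check{h}_{0:T}^n$ are defined as the inverse Rosenblatt transform of the same $\tilde{\bu}^n$, the former with respect to the atomic $\Qt_{T,h_T}^N$ and the latter with respect to its piecewise-linear relaxation $\widetilde{\mathsf{Q}}_{T,h_T}^N$. Since the two distributions share the same conditional CDF values at every sorted particle position, $\tilde{h}_t^n$ and $\check{h}_t^n$ always lie in a common interval $[h(\bx_t^{\sigma_t(i)}),h(\bx_t^{\sigma_t(i+1)})]$, whence $|\tilde{h}_t^n-\check{h}_t^n|\le\max_i\bigl|h(\bx_t^{\sigma_t(i+1)})-h(\bx_t^{\sigma_t(i)})\bigr|$. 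This maximum gap vanishes because $\|\Qh_{t,h}^N-\Q_{t,h}\stn\to 0$ by Theorem \ref{thm:consistency2} and $\Q_{t,h}$ has a bounded density. Applying the H\"older continuity of $H$ componentwise converts this into $\max_n\|\tilde{\bx}_{0:T}^n-\check{\bx}_{0:T}^n\|_\infty\cvz$.

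The step I expect to be the main obstacle is the final conversion of this uniform pointwise shrinkage into extreme-norm convergence of empirical measures: in principle a perturbation of size $\varepsilon$ in $\bx$ could push all $N$ points across a hyperrectangle face. The way around this is to fix an arbitrary box $B\in\mathcal{B}_{\ui^{d(T+1)}}$ and a slowly vanishing $\varepsilon_N$, dominate $\bigl|\Sop(\tilde{\bx}_{0:T}^{1:N})(B)-\Sop(\check{\bx}_{0:T}^{1:N})(B)\bigr|$ by the fraction of $\check{\bx}_{0:T}^n$ lying in an $\varepsilon_N$-thickening of $\partial B$, and control this fraction uniformly in $B$ using $\|\Sop(\check{\bx}_{0:T}^{1:N})-\Qt_T\stn\cvz$ together with the bounded density of $\Qt_T$ (which gives $\Qt_T$-mass $\bigO(\varepsilon_N)$ to every coordinate slab of width $\varepsilon_N$). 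Choosing $\varepsilon_N$ to dominate the pointwise shrinkage but still tend to zero yields the required uniform bound, completing the proof.
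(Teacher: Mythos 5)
Your overall strategy is the paper's: couple the output of Algorithm \ref{alg:back} with the continuous backward pass of Theorem \ref{thm:BS_star} built from piecewise-linear interpolations of the discrete CDFs, show the two point sets are uniformly close, and convert that closeness into extreme-norm consistency (your thickened-boundary argument for the last step is essentially the Niederreiter-style sandwich with $B^{\pm}$ boxes used in the paper, and is fine). However, there is a genuine gap at the heart of the coupling. Your claim that, for every $t$, $\tilde{h}_t^n$ and $\check{h}_t^n$ ``always lie in a common interval $[h(\bx_t^{\sigma_t(i)}),h(\bx_t^{\sigma_t(i+1)})]$ because the two distributions share the same conditional CDF values at every sorted particle position'' is only correct at $t=T$, where both chains invert the same (unconditional) marginal. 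For $t<T$ the discrete pass inverts the conditional CDF of $\mathcal{M}^h_{t+1,\Qh^N_{t,h}}$ evaluated at the \emph{atom} $\tilde{h}_{t+1}^n$, whereas the continuous pass inverts $F_{K^N_{t+1,h}}(\check{h}_{t+1}^n,\cdot)$, which interpolates the conditional CDF evaluated at the \emph{different} point $\check{h}_{t+1}^n$. The backward weights $\widetilde{W}_t^m(\bx_{t+1})$ depend on the conditioning point, so the two CDFs being inverted do not agree at the particle positions, and the selected indices can differ; the common-interval bound does not follow.

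Closing this gap is precisely where Assumptions \ref{H:corr:gamma} and \ref{H:corr:Unif} are needed, and your proposal never uses them for this purpose (you use the lower bound only to show atoms have mass $\bigO(1/N)$, and uniform continuity only for continuity of the kernel CDF). The paper's argument is a backward induction: introduce the intermediate selection $w_t^n$ obtained by applying the discrete rule but conditioning on $\check{\bx}_{t+1}^n$; by construction of $K^N_{t+1,h}$, $w_t^n$ and $\check{h}_t^n$ lie in a common inter-particle gap, and then one shows $\max_n|w_t^n-\tilde{h}_t^n|=\smallo(1)$ by proving that $|\widetilde{W}_t^m(\tilde{\bx}_{t+1}^n)-\widetilde{W}_t^m(\check{\bx}_{t+1}^n)|=\smallo(N^{-1})$ uniformly (uniform continuity of $m_{t+1}G_{t+1}$ plus the inductive hypothesis $\max_n\|\tilde{\bx}_{t+1}^n-\check{\bx}_{t+1}^n\|_\infty=\smallo(1)$, with denominators controlled by $\underline{c}_t$), while every weight is bounded below by a term of order $N^{-1}$ (again via $\underline{c}_t$), so that the selected index can shift by at most one position; the H\"older property of $H$ then propagates the closeness to the $\bx$-points and the induction continues. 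Without this (or an equivalent) argument, the uniform closeness $\max_n\|\tilde{\bx}_{0:T}^n-\check{\bx}_{0:T}^n\|_\infty\cvz$ that your final step relies on is not established, so the proposal as written does not prove the corollary. (Incidentally, the step you flagged as the main obstacle—the empirical-measure conversion—is the routine part; the backward-induction coupling is where the real work lies.)
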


See Appendix \ref{p-cor:BS_star} for a proof. Recall that the result above implies
that 
$$ \frac{1}{N} \sum_{n=1}^{N} \varphi(\tilde{\bx}_t^n) \rightarrow \Qt_T(\varphi),
\quad \mbox{as } N\rightarrow +\infty$$
for any bounded and continuous $\varphi$, as explained in Section \ref{sub:QMC}. 

Assumption \ref{H:corr:gamma} is the main assumption of this result. This strong condition is the price to pay for our study of QMC backward smoothing in two steps which, again, has the advantage to facilitate the analysis by avoiding complicated discontinuity problems. 
We conjecture that this assumption may be removed by using an approach
similar to the proof of Theorem 4 in \cite{SQMC}.

\subsection{An alternative backward step}\label{sec:altBack}

A  drawback of Algorithm \ref{alg:back} is that it uses as an input 
a point set of $\tilde{\bu}^{1:N}$ of dimension $(T+1)$, although $T$ is often
large in practice. It is well known that high-dimensional QMC point sets do
not have good equidistribution properties, unless $N$ is extremely large. 

To address this issue, we may still use SQMC for the forward pass, but use 
as a backward pass Algorithm \ref{alg:back} with IID uniform variables as an input
(i.e. input $\tilde{\bu}^{1:N}$ is replaced by $N$ uniforms). 
Our consistency results still apply,
since $D(\tilde{\bu}^{1:N})\cvz$ with probability one in that case \citep[][page 167]{Niederreiter1992}.
Of course, one cannot hope for a convergence rate better than $N^{-1/2}$
for such a hybrid approach, but the resulting algorithm may still perform better than 
standard (Monte Carlo) backward smoothing (for fixed $N$), while being simpler to implement
than SQMC with a QMC backward pass based on a point set of dimension $T+1$. 

More generally, we could take  $\tilde{\bu}^{1:N}$ to be some combination of a point
sets and uniform variables, while still having $D(\tilde{\bu}^{1:N})\cvz$
\citep{Okten2006}.
However, we leave for further research the study of such an extension.

\section{Numerical study}\label{sec:num}

We consider the following multivariate stochastic
volatility model (SV) proposed by \citet{Chan2006}:
\begin{equation}\label{simu:eq:modelSV_B}
  \begin{cases}
    \by_{t}=S_t^{1/2}\boldsymbol{\epsilon}_{t},\quad &t\geq 0\\
    \bx_{t}=\boldsymbol{\mu}+\Phi(\bx_{t-1}-\boldsymbol{\mu})+
    \Psi^{\frac{1}{2}}\boldsymbol{\nu}_t,\quad &t\geq 1
  \end{cases}
\end{equation}
where $S_t=\text{diag}(\exp(x_{t1}),...,\exp(x_{td}))$, $\Phi$ and
$\Psi$ are diagonal matrices and
$(\boldsymbol{\epsilon}_{t},\boldsymbol{\nu}_{t})\sim\mathcal{N}_{2d}(\bm{0}_{2d},
C)$ with $C$ a correlation matrix.

The parameters we use for the simulations are the same as in
\citet{Chan2006}: 
$\phi_{ii}=0.9$, $\mu_i=-9$, $\psi_{ii}=0.1$ for all $i=1,...,d$ and
$$ C=
\begin{pmatrix}
  0.6\mathbf{1}_d+0.4\mathbf{I}_d&\bm{0}_{d}\\
  \bm{0}_{d}&0.8\mathbf{1}_d+0.2\mathbf{I}_d
\end{pmatrix}
$$ 
where $\mathbf{I}_d$, $\bm{0}_{d}$ and
$\mathbf{1}_d$, are respectively the identity, all-zeros, and all-ones
 $d\times d$ matrices. The prior distribution
for $\bx_0$ is the stationary distribution of the process
$(\bx_t)_{t\geq 0}$. We take $d=2$ and
$T=399$ (i.e. 400 observations).

We report results (a) for QMC full backward smoothing
(Algorithm \ref{alg:SQMC_B} for the forward pass, then Algorithm \ref{alg:back} for the backward pass),  and (b)  for marginal backward smoothing (as described
in Section \ref{subsec:BS_Marg}). These algorithms are compared with their Monte Carlo
counterpart using the gain factors for the estimation of the smoothing
expectation $\E[x_{1t}|\by_{0:T}]$, $t\in 0:T$, which we define as the
Monte Carlo mean square error (MSE) over the quasi-Monte Carlo
MSE. Results for component $x_{2t}$ of $\bx_t$ are mostly similar (by symetry)
and thus are not reported. 

The implementation of QMC and Monte Carlo algorithms are as in
\citet{SQMC}. 
In SQMC, prior to the Hilbert sort step, the particles are mapped into $\ui^d$ using a
component-wise (rescaled) logistic transform. For SMC, systematic
resampling \citep{CarClifFearn} is used, and random
variables are generated using standard methods (i.e. not using the
inverse Rosenblatt transformation). The complete C/C++ code is
available on-line at \url{https://bitbucket.org/mgerber/sqmc}.

\begin{figure}
  \begin{centering}
    \begin{tabular}{cc}
      $N=2^8$&$N=2^{10}$\\ \includegraphics[scale=0.35]{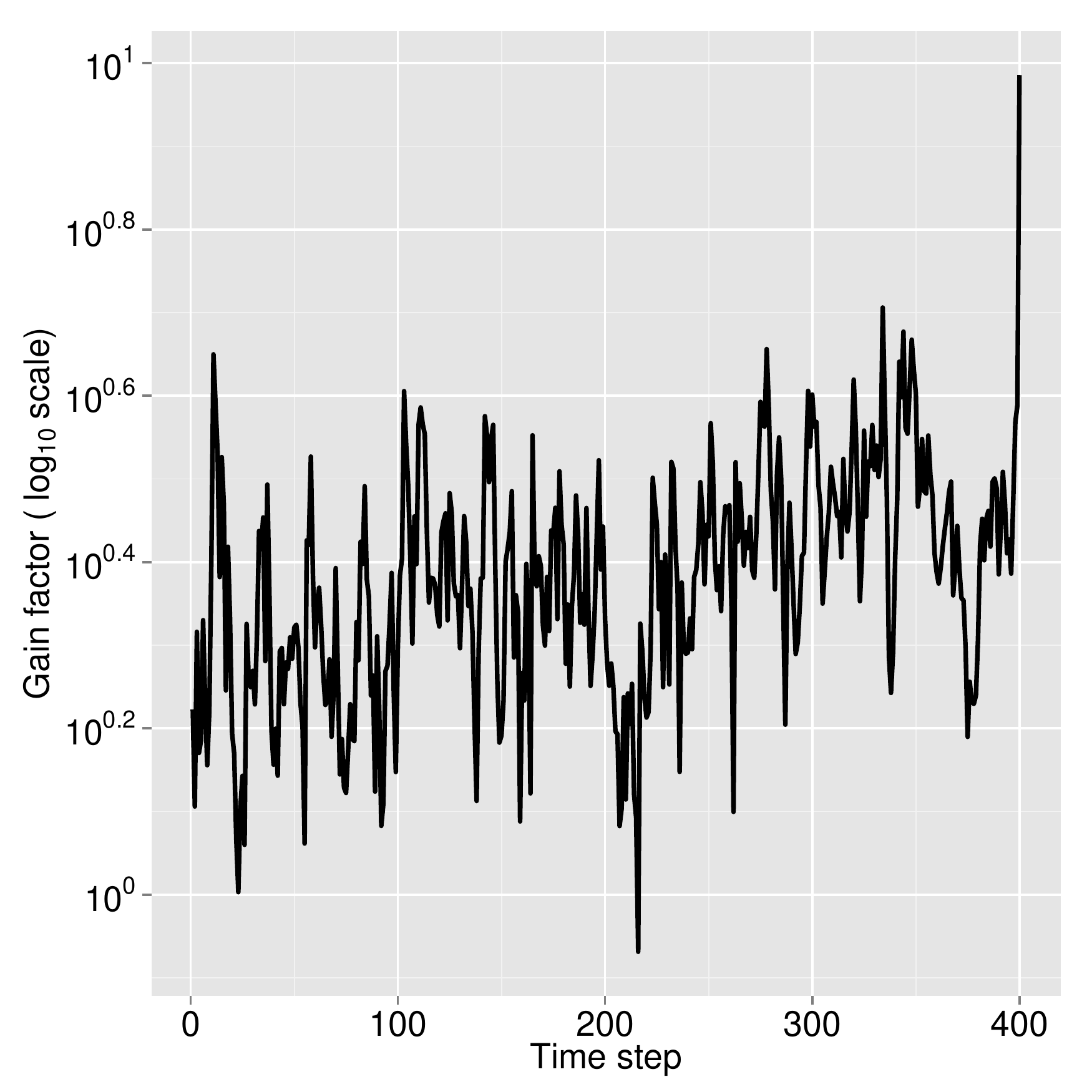}&\includegraphics[scale=0.35]{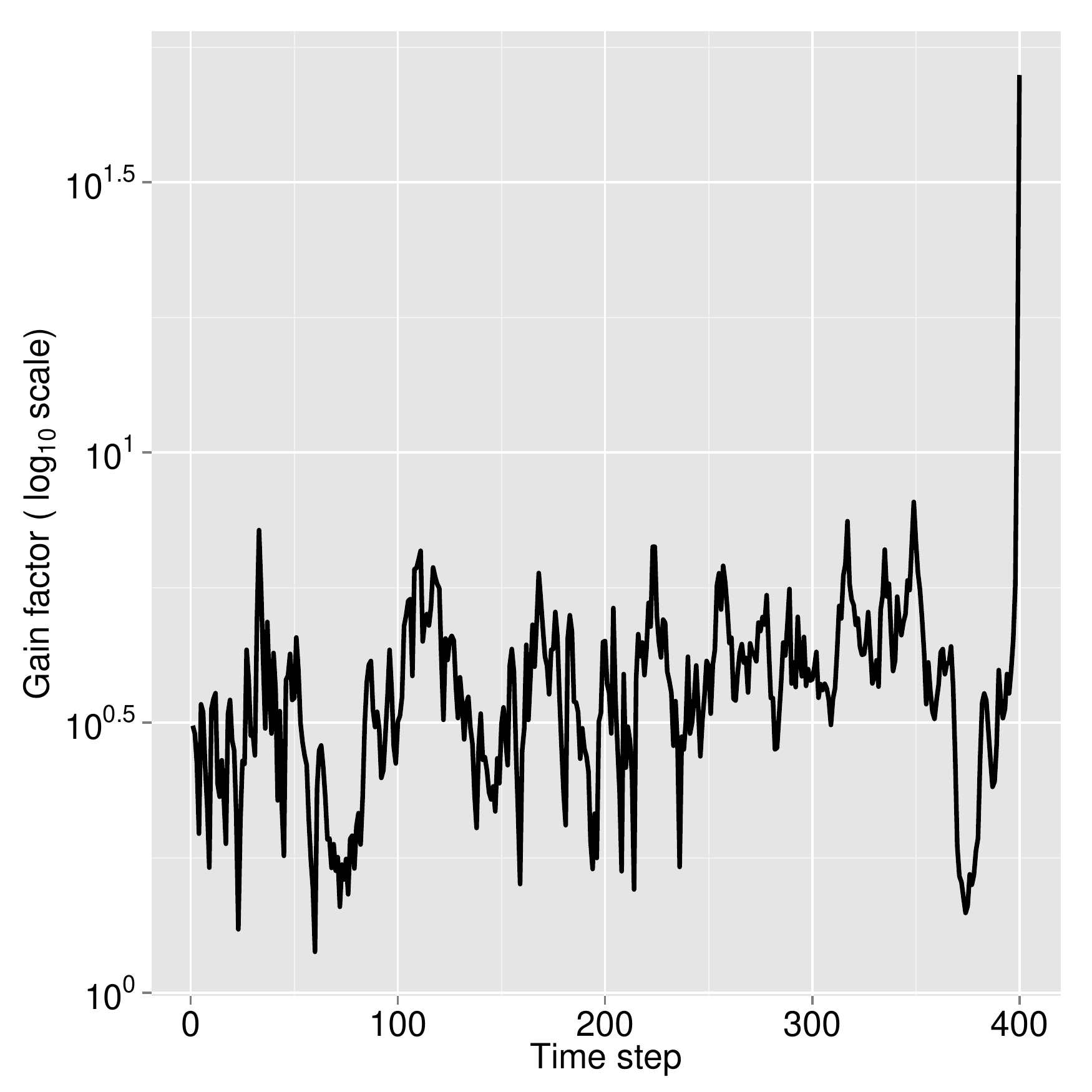}\\
      \multicolumn{2}{c}{Full backward smoothing} \\
       \includegraphics[scale=0.35]{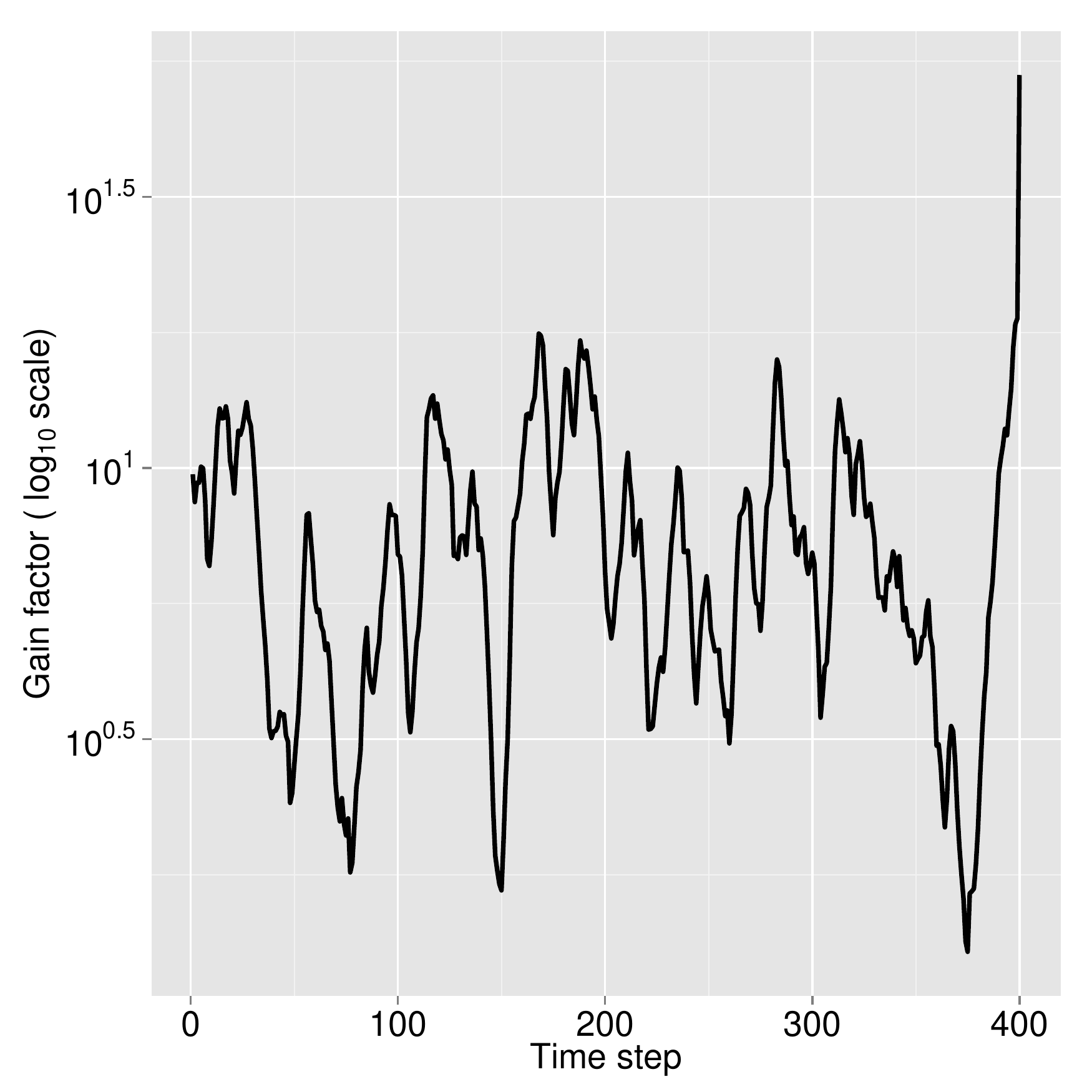}&\includegraphics[scale=0.35]{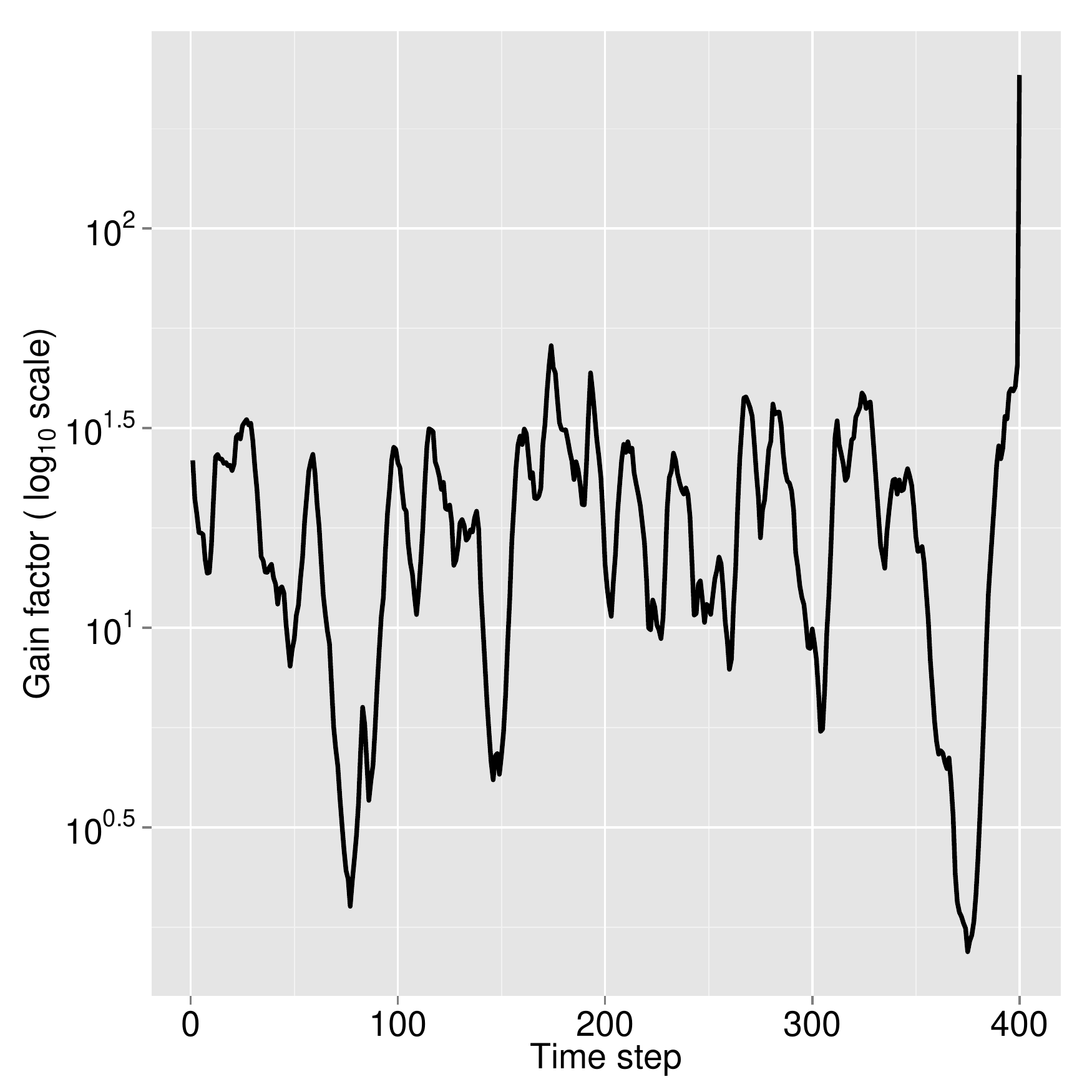}\\ 
         \multicolumn{2}{c}{Marginal backward smoothing} \\
    \end{tabular}
    \par\end{centering}
  \caption{Smoothing of the bivariate SV model
    \eqref{simu:eq:modelSV_B} for $N=2^8$ and $N=2^{10}$ particles. The
    graphs give the gain factor (MSE ratio, from 100 replications) for comparing SQMC with SMC, and for 
     $\E[x_{t1}|\by_{0:T}]$ as a function of $t$. The top line is for
full backward smoothing (Algorithm \ref{alg:back}), the bottom line is for marginal backward smoothing. \label{fig:SV2}}
\end{figure}

\begin{figure}
  \begin{centering}
    \begin{tabular}{cc}
      $N=2^8$&$N=2^{10}$\\ \includegraphics[scale=0.35]{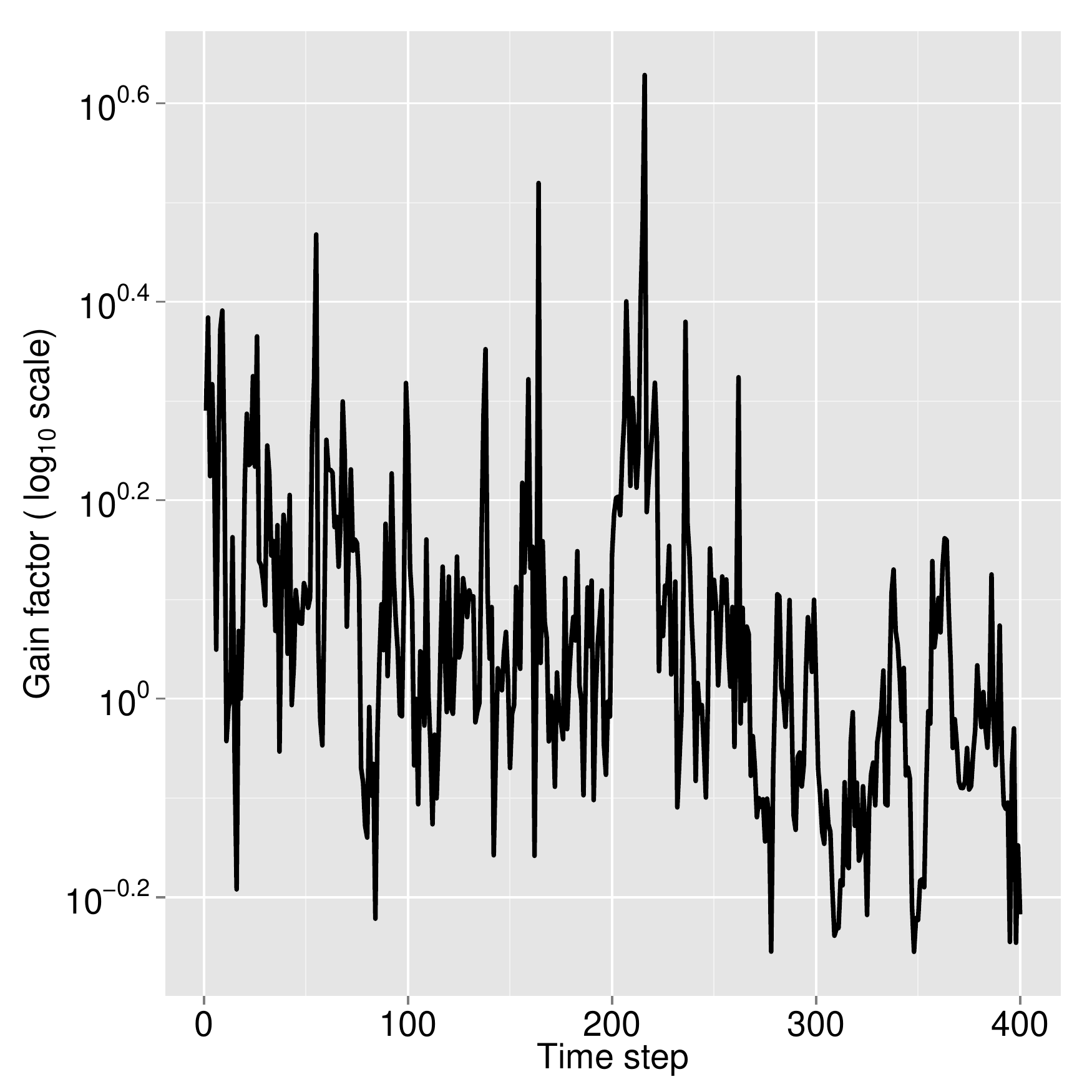}&\includegraphics[scale=0.35]{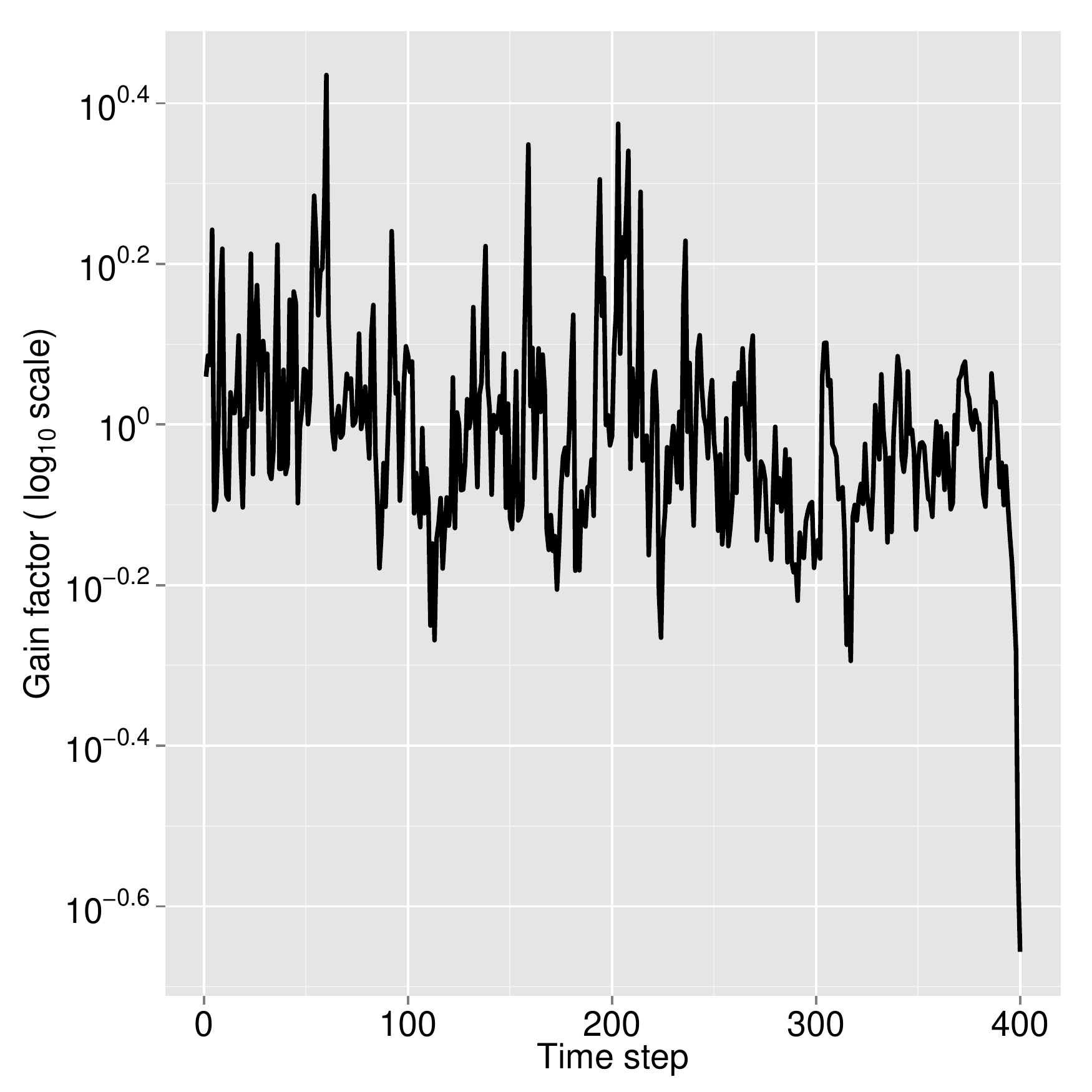}\\
    \end{tabular}
    \par\end{centering}
  \caption{Smoothing of the bivariate SV model
    \eqref{simu:eq:modelSV_B} for $N=2^8$ and $N=2^{10}$ particles. 
    The graphs give the gain factor (MSE ratio, for 100 replications) of the hybrid backward pass (Algorithm \ref{alg:back} with IID input) 
relative to the QMC backward pass (Algorithm \ref{alg:back} with a QMC point set as input), for the estimation of $\E[x_{t1}|\by_{0:T}]$ as a
    function of $t$.  \label{fig:SV2:Hybrid}}
\end{figure}

Figure \ref{fig:SV2} plots the gain factors at each time step, for either
$N=2^8$ (left), or $N=2^{10}$ (right). We observe that gain factors
tend to increase with $N$ (as expected) and that they are above one 
most of the time. They are not very high for full backward smoothing; but note
that even a marginal improvement in terms of gain factor may translate in high
CPU time savings, given that these algorithms have complexity $\bigO(N^2)$;
i.e. a gain factor of 3 means that SMC would need 3 times more particles, and therefore 9 times more CPU time, to reach the same accuracy as SQMC. 
Notice also gain factors are higher for marginal smoothing.

Finally, we compare Algorithm \ref{alg:back} (full backward smoothing)
with the hybrid strategy described at the end of Section \ref{sec:altBack}: i.e.
a SQMC forward pass (Algorithm \ref{alg:SQMC_B}) followed by a Monte Carlo
backward pass. Again, this is for  $N=2^8$ (left) and $N=2^{10}$ (right).
Interestingly, the hybrid strategy (slightly) dominates at most time steps (excepts
those such that $T-t$ is small). As already discussed, the likely reason
for this phenomenon is that the backward pass of Algorithm \ref{alg:back} is
based on a point set of dimension $T$, which is too large to have 
good equidistribution properties (for reasonable values of $N$), and therefore
to bring much improvement over plain Monte Carlo. Thus, for large $T$, 
one may as well use this hybrid strategy to perform full smoothing.

\section{Conclusion}\label{sec:conclusion}

The estimation of the smoothing distribution
$p(\bx_{0:T}|\by_{0:T})$ is a challenging task for QMC methods because
it is typically a high dimensional problem. On the other hand, due to
the $\bigO(N^2)$ complexity of most smoothing algorithms, small gains
in term of mean square errors translate into important savings in term
of running times to reach the same level of error. In this work we
provide asymptotic results for some QMC smoothing strategies, namely
forward smoothing, and two variants of forward-backward smoothing.  
In a simulation study we show that the QMC
forward-backward smoothing algorithm outperforms its Monte Carlo
counterpart despite of the high dimensional nature of the
problem. Also, if one is interested in the estimation of the marginal
smoothing distributions, more important gains may be obtained.

The set of smoothing strategies discussed in this work is obviously
not exhaustive. For instance, we have not discussed two-filter smoothing 
\citep{Briers2005}, or its  $\bigO(N)$ variant proposed 
by \citet{Fearnhead2010a}. In fact, our analysis can be easily applied to
derive a QMC version of these algorithms and to provide conditions for
their validity. An other interesting smoothing algorithm is proposed in
\citet{Douc2011}, where the backward pass is an accept-reject
procedure, leading to a $\bigO(N)$ complexity. 
A last interesting smoothing strategy is the
particle Gibbs sampler proposed by \citet{PMCMC} which generates a
Markov chain having the smoothing distribution as stationary
distribution. For these last two methods, the usefulness and the
validity of replacing pseudo-random numbers by QMC point sets remain
interesting open questions.

\section*{Acknowledgements}


We thank Arnaud Doucet, Art B. Owen and Florian Pelgrin for useful
comments. The second author is partially supported by a grant from the
French National Research Agency (ANR) as part of the ``Investissements
d'Avenir'' program (ANR-11-LABEX-0047).

\bibliographystyle{apalike} \bibliography{complete}

\appendix

\section{Main properties of the Hilbert curve}\label{app:Hilbert}

Function $H$ is obtained as the limit of a certain sequence $(H_m)$
of functions $H_m:[0,1]\rightarrow [0,1]^d$ as $m\rightarrow \infty$. 
The proofs of the results presented in this work are based on the
following technical properties of $\HSFC$ and $\HSFC_{m}$. For
$m\geq0$, let $\mathcal{I}_{m}^{d}=\left\{ I_{m}^{d}(k)\right\}
_{k=0}^{2^{md}-1}$ be the collection of consecutive closed intervals
in $[0,1]$ of equal size $2^{-md}$ and such that $\cup
\mathcal{I}_{m}^{d}=[0,1]$. For $k\geq0$,
$S_{m}^{d}(k)=\HSFC_{m}(I_{m}^{d}(k))$ belongs to $\mathcal{S}_m^d$,
the set of the $2^{md}$ closed hypercubes of volume $2^{-md}$ that
covers $[0,1]^d$, $\cup \mathcal{S}_m^d=[0,1]^d$; $S_{m}^{d}(k)$ and
$S_{m}^{d}(k+1)$ are adjacent, i.e. have at least one edge in common
(\emph{adjacency property}).  If we split $I_{m}^{d}(k)$ into the
$2^{d}$ successive closed intervals $I_{m+1}^{d}(k_{i})$,
$k_{i}=2^{d}k+i$ and $i\in 0:2^{d}-1$, then the $S_{m+1}^{d}(k_{i})$'s
are simply the splitting of $S_{m}^{d}(k)$ into $2^{d}$ closed
hypercubes of volume $2^{-d(m+1)}$ (\emph{nesting property}).
Finally, the limit $\HSFC$ of $\HSFC_{m}$ has the \emph{bi-measure
  property}: $\lambda_{1}(A)=\lambda_{d}(\HSFC(A))$, for any
measurable set $A\subset[0,1]$, and satisfies the \textit{H\"older
  condition} $\|H(x_1)-H(x_2)\|\leq C_H|x_1-x_2|^{1/d}$ for all $x_1$
and $x_2$ in $[0,1]$.
For more background on space-filling curves, see 
\cite{Sagan1994}. 

\section{Proofs\label{proof:Back}}

\subsection{Backward decomposition:  Proof of Theorem \ref{thm:BS}}
\label{p-thm:BS}

Lemma 2 of \citet{SQMC} is central for the proof of this result and
is reproduced here for sake of clarity.

\begin{lemma}\label{lemma:Holder_B} 
Let $(\pi^N)_{N\geq 1}$ be a sequence of probability measures
on $\ui^{d_{1}}$ such that $\|\pi^N-\pi\stn\cvz$ as 
$N\rightarrow+\infty$ for some $\pi\in\mathcal{P}(\ui^{d_{1}})$,
 and let $K$ a kernel $\ui^{d_1}\rightarrow\mathcal{P}(\ui^{d_2}))$ such 
that $F_{K}(\bx_{1},\bx_{2})$ is H\"older continuous with its $i$-th
component strictly increasing in $x_{2i}$, $i\in 1:d_2$. Then
\[
\|\pi^N\otimes K-\pi\otimes K\stn=\smallo(1).
\]

\end{lemma}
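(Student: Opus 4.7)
The plan is to unfold
$$
\|\pi^N \otimes K - \pi \otimes K\stn = \sup_{B_1 \in \mathcal{B}_{\ui^{d_1}},\, B_2 \in \mathcal{B}_{\ui^{d_2}}} \left| \int_{B_1} K(\bx_1, B_2)\, (\pi^N - \pi)(\dd\bx_1) \right|
$$
and, for each fixed $B_2$, to approximate the bounded function $\varphi_{B_2}(\bx_1) \eqdef K(\bx_1, B_2)$ by a step function on a grid of sub-boxes of $\ui^{d_1}$. Because the intersection of two axis-aligned boxes is itself an axis-aligned box, the hypothesis $\|\pi^N - \pi\stn \cvz$ will control each increment through the extreme norm, while the residual is controlled by the modulus of continuity of $\varphi_{B_2}$. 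Sending $N \to \infty$ first (with the grid fixed) and then refining the grid will give the conclusion.

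The technical heart of the plan is equicontinuity of the family $\{\varphi_{B_2} : B_2 \in \mathcal{B}_{\ui^{d_2}}\}$: for every $\epsilon > 0$ there must exist $\delta > 0$ such that $\|\bx_1 - \bx_1'\|_\infty \leq \delta$ implies $\sup_{B_2} |\varphi_{B_2}(\bx_1) - \varphi_{B_2}(\bx_1')| \leq \epsilon$. I would derive this by expanding $K(\bx_1, \prod_i [a_i, b_i])$ via inclusion-exclusion as an alternating sum of $2^{d_2}$ values of the joint CDF $C_K(\bx_1, \bz) \eqdef K(\bx_1, \prod_i [0, z_i])$, and then showing that $\bx_1 \mapsto C_K(\bx_1, \bz)$ has a modulus of continuity uniform in $\bz \in \ui^{d_2}$. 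This is done inductively on $d_2$ via the Rosenblatt decomposition, exploiting the assumed H\"older continuity of $F_K$ together with the strict monotonicity of its components (which guarantees that the conditional Rosenblatt inverses appearing in the recursion depend continuously on $\bx_1$ uniformly in their remaining arguments).

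Once equicontinuity is in hand, given $\epsilon > 0$ I would pick the corresponding $\delta$ and partition $\ui^{d_1}$ into pairwise disjoint axis-aligned cells $R_1, \ldots, R_{M_\delta}$ with $M_\delta \leq \lceil 1/\delta \rceil^{d_1}$, together with representatives $\bx_1^{(j)} \in R_j$. For any boxes $B_1, B_2$, approximating $\varphi_{B_2}$ by the constant $\varphi_{B_2}(\bx_1^{(j)})$ on $R_j$ yields
\begin{align*}
\left| \int_{B_1} \varphi_{B_2}\, \dd(\pi^N - \pi) \right|
& \leq \sum_{j=1}^{M_\delta} \varphi_{B_2}(\bx_1^{(j)})\, \big|(\pi^N - \pi)(B_1 \cap R_j)\big| + \epsilon\,[\pi^N(B_1) + \pi(B_1)] \\
& \leq M_\delta \,\|\pi^N - \pi\stn + 2\epsilon,
\end{align*}
using $|\varphi_{B_2}| \leq 1$ and that each $B_1 \cap R_j$ is an axis-aligned box. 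Taking the supremum over $B_1, B_2$ gives $\|\pi^N \otimes K - \pi \otimes K\stn \leq M_\delta \|\pi^N - \pi\stn + 2\epsilon$; sending $N \to \infty$ and then $\epsilon \to 0$ closes the argument.

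The main obstacle is the uniform-in-$B_2$ equicontinuity step. The Rosenblatt transform $F_K$ is a vector of \emph{conditional} CDFs whereas $C_K$ is the \emph{joint} CDF, so H\"older continuity of $F_K$ does not directly imply H\"older continuity of $C_K$. Its modulus of continuity has to be teased out of the iterated conditional structure, and no bounded-density assumption is available to control the $K$-mass of thin tubes around hyperplanes, which is what makes the uniformity in $\bz$ delicate.
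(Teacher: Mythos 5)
First, a point of reference: the paper does not prove this lemma at all --- it is imported verbatim from Lemma~2 of \citet{SQMC} (``reproduced here for sake of clarity''), so there is no internal proof to compare against. Judged on its own terms, your reduction is sound: writing the extreme-norm difference as $\sup_{B_1,B_2}\bigl|\int_{B_1}K(\bx_1,B_2)\,\dd(\pi^N-\pi)\bigr|$, partitioning $\ui^{d_1}$ into $M_\delta$ cells, using that $B_1\cap R_j$ is again a box, and trading $M_\delta\|\pi^N-\pi\stn$ against the oscillation of $\varphi_{B_2}$ is exactly the right way to exploit $\|\pi^N-\pi\stn\cvz$, provided the family $\{\varphi_{B_2}\}$ is equicontinuous uniformly in $B_2$.

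The genuine gap is that this equicontinuity --- which you yourself flag as the main obstacle --- is never established, and the mechanism you sketch for it does not work as stated: strict monotonicity of the components of $F_K$ does \emph{not} make the conditional Rosenblatt inverses depend on $\bx_1$ with any uniform modulus (a continuous, strictly increasing conditional CDF can be nearly flat on an interval, where its inverse has arbitrarily bad modulus, and nothing in the hypotheses rules this out). The good news is that neither the inverses nor a bounded-density assumption is needed. Prove, by backward induction on the coordinate index $i$, that the conditional CDF $K\bigl(\prod_{j\geq i}[0,z_j)\mid x_{2,1:i-1};\bx_1\bigr)$ admits a modulus of continuity in the conditioning variables $(\bx_1,x_{2,1:i-1})$ uniform in $z_{i:d_2}$: the base case is the H\"older continuity of the last component of $F_K$; for the step, write the quantity as $\int_{[0,z_i)}G\bigl(\bx_1,x_{2,1:i},z_{i+1:d_2}\bigr)K_i(\dd x_{2,i}\mid x_{2,1:i-1};\bx_1)$ and split the difference into (a) one integrand against two conditional laws and (b) two integrands against one law. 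Term (b) is the inductive modulus; for term (a), approximate the integrand by a step function on $m$ intervals --- the approximation error is controlled \emph{pointwise} by the inductive modulus at scale $1/m$, not by the $K$-mass of thin slabs, which is why no density bound is required --- and bound the step part by $2m$ times the sup-distance of the univariate conditional CDFs, itself at most $C\|\bx_1-\bx_1'\|^\kappa$ by the H\"older hypothesis on $F_K$; optimizing over $m$ yields a degraded but genuine modulus, and inclusion--exclusion over the $2^{d_2}$ corners gives $\sup_{B_2}|K(\bx_1,B_2)-K(\bx_1',B_2)|\rightarrow 0$ uniformly. This chaining of conditional-CDF increments is precisely the device used in the paper's proof of Theorem~3 (the generalization of Hlawka's Satz~2); once written out, your route is a legitimate, more ``equicontinuity-flavoured'' alternative to the image-measure arguments the paper deploys around this lemma, and it does not actually use the strict-monotonicity hypothesis.
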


From Theorem \ref{thm:consistency2}, we know that (for $t\geq1$)
$$
\|\mathcal{S}(P_{t,h}^N)-
\Q_{t-1,h}\otimes m_{t,h}\stn=\smallo(1)
\qquad\mbox{for } P_{t,h}^N=\left(h(\hat{\bx}_{t-1}^{1:N}),\bx_t^{1:N}\right).
$$ 
To establish \eqref{eq:Smooth}, we fix $\bx_{t+1}$, and recognise $\mathcal{M}_{t+1,\Q_{t}}$ as the marginal distribution of $\bx_t$, relative to  joint distribution 
\begin{equation}\label{eq:somejoint}
\frac{\tilde{G}_{t+1}(\bx_t,\bx_{t+1})G_{t,h}(h_{t-1},\bx_t)}
{\Q_{t-1,h}\otimes m_{t,h}(G_{t,h})} \times
\Q_{t-1,h}\otimes m_{t,h}\left(\dd(h_{t-1},\bx_t)\right)
\end{equation}
with $G_{t,h}(h_{t-1},\bx_t)=G_t(H(h_{t-1},\bx_t))$.
This is a change of measure applied to $\Q_{t-1,h}\otimes m_{t,h}$. 
Similarly, $\mathcal{M}_{t+1,\Qh_{t}^N}$ is the marginal of a joint distribution obtained by the same change of measure, but applied to 
$\Sop(P_{t,h}^N)$.  

Thus, we may apply Theorem 1 of \cite{SQMC}, 
and deduce that (again for a fixed $\bx_{t+1}$): 
$$
\|\mathcal{M}_{t+1,\Qh^N_{t}}(\bx_{t+1},\dx_{t})-\mathcal{M}_{t+1,\Q_{t}}(\bx_{t+1},\dx_{t})\stn=\smallo(1).
$$ 

%

To see that the $\smallo(1)$ term in the above expression does not
depend on $\bx_{t+1}$, note that in \eqref{eq:somejoint},
the dominating measure does not depends on $\bx_{t+1}$, and the density 
with respect to this dominating measure is bounded uniformly with respect to $\bx_{t+1}$,
and therefore the results follows from the computations in the proof
of \citet[][Theorem 1]{SQMC}. This shows \eqref{eq:Smooth} for $t\geq
1$. For $t=0$ replace $\Q_{t-1,h}\otimes m_{t,h}$ by $m_{0,h}$ in the above argument.

Let us now prove the second part of the theorem. As a preliminary
result to establish \eqref{eq:Smooth2} we show that, for all $t\geq
0$,
\begin{align}\label{eq:ind}
\|\Qh_{t+1}^N\otimes
\mathcal{M}_{t+1,\Qh_{t}^N}-\Q_{t+1}\otimes\mathcal{M}_{t+1,\Q_t}\stn=\smallo(1).
\end{align}

Let $B_t$ and $B_{t+1}$ be two sets in $\mathcal{B}_{\ui^d}$ and note
$B_{t:t+1}=B_t\times B_{t+1}$ to simplify the notations. Then,
\begin{align*}
&\left|\Qh_{t+1}^N\otimes
  \mathcal{M}_{t+1,\Qh_{t}^N}(B_{t:t+1})-\Q_{t+1}\otimes\mathcal{M}_{t+1,\Q_t}(B_{t:t+1})\right|\\ &=\left|\int_{B_{t+1}}\lambda_d\left(F_{\mathcal{M}_{t+1,\Qh_{t}^N}}(\bx_{t+1},B_t)\right)
  \Qh_{t+1}^N(\dx_{t+1})-\lambda_d\left(F_{\mathcal{M}_{t+1,\Q_{t}}}(\bx_{t+1},B_t)\right)\Q_{t+1}(\dx_{t+1})\right|\\ &\leq
  \left|\int_{B_{t+1}}\lambda_d\left(F_{\mathcal{M}_{t+1,\Q_{t}}}(\bx_{t+1},B_t)\right)
  \left(\Qh_{t+1}^N-
  \Q_{t+1}\right)(\dx_{t+1})\right|\\ &+\left|\int_{B_{t+1}}\Qh_{t+1}^N(\dx_{t+1})
  \left[\lambda_d\left(F_{\mathcal{M}_{t+1,\Qh_{t}^N}}(\bx_{t+1},B_t)\right)-\lambda_d\left(F_{\mathcal{M}_{t+1,\Q_{t}}}(\bx_{t+1},B_t)\right)\right]\right|.
\end{align*}
By assumption, $F_{\mathcal{M}_{t+1,\Q_{t}}}(\bx_{t+1},\bx_{t})$ is
H\"{o}lder continuous. Since $\|\Qh_{t+1}^N-\Q_{t+1}\stn=\smallo(1)$
by Theorem \ref{thm:consistency2}, Lemma \ref{lemma:Holder_B}
therefore implies
$$
\sup_{B_{t:t+1}\in\mathcal{B}^2_{\ui^d}}\left|\int_{B_{t+1}}\lambda_d\left(F_{\mathcal{M}_{t+1,\Q_{t}}}(\bx_{t+1},B_t)\right)
\left(\Qh_{t+1}^N- \Q_{t+1}\right)(\dx_{t+1})\right|=\smallo(1).
$$ In addition,
\begin{align*}
&\left|\int_{B_{t+1}}\Qh_{t+1}^N(\dx_{t+1})
  \left[\lambda_d\left(F_{\mathcal{M}_{t+1,\Qh_{t}^N}}(\bx_{t+1},B_t)\right)-\lambda_d\left(F_{\mathcal{M}_{t+1,\Q_{t}}}(\bx_{t+1},B_t)\right)\right]\right|\\ &\leq
  \int_{B_{t+1}}\Qh_{t+1}^N(\dx_{t+1}) \sup_{B_t\in
    \mathcal{B}_{\ui^d}}\left|\lambda_d\left(F_{\mathcal{M}_{t+1,\Qh_{t}^N}}(\bx_{t+1},B_t)\right)-\lambda_d\left(F_{\mathcal{M}_{t+1,\Q_{t}}}(\bx_{t+1},B_t)\right)\right|\\ &\leq
  \int_{B_{t+1}}\Qh_{t+1}^N(\dx_{t+1})
  \sup_{\bx_{t+1}\in\ui^d}\|\mathcal{M}_{t+1,\Qh_{t}^N}(\bx_{t+1},\dx_t)-\mathcal{M}_{t+1,\Q_{t}}(\bx_{t+1},\dx_t)\stn\\ &=\smallo(1)
\end{align*}
using \eqref{eq:Smooth}. This complete the proof of (\ref{eq:ind}).

We are now ready to prove the second statement of the theorem. Note
that \eqref{eq:Smooth2} is true for $t=1$ by \eqref{eq:ind}. Let $t>1$
and $B_{0:t}\in\mathcal{B}^{t+1}_{\ui^{d}}$. Then,
\begin{align*}
&\left|\int_{B_{0:t}}\left(\Qt_{t}^{N}-\Qt_t\right)(\dx_{0:t})\right|=\left|\int_{B_{0:t}}\left(\Qh_{t}^N\otimes
  \mathcal{M}_{t,\Qh^N_{t-1}}(\dx_{t-1:t})\prod_{s=1}^{t-1}
  \mathcal{M}_{s,\Qh_{s-1}^N}(\bx_{s},\dx_{s-1})\right.\right.\\ &-\left.\left. \Q_t\otimes
  \mathcal{M}_{t,\Q_{t-1}}(\dx_{t-1:t})\prod_{s=1}^{t-1}
  \mathcal{M}_{s,\Q_{s-1}}(\bx_{s},\dx_{s-1})\right)\right|\\ &\leq
  \left|\int_{B_{t-1:t}}\left[\int_{B_{0:t-2}}
    \prod_{s=1}^{t-1}\mathcal{M}_{s,\Q_{s-1}}(\bx_{s},\dx_{s-1})\right]\left(
  \Qh_{t}^N\otimes \mathcal{M}_{t,\Qh_{t-1}^N}-\Q_{t}\otimes
  \mathcal{M}_{t,\Q_{t-1}}\right)(\dx_{t-1:t})\right|\\ &+\left|\int_{B_{t-1:t}}\Qh_{t}^N\otimes
  \mathcal{M}_{t,\Qh_{t-1}^N}(\dx_{t-1:t})\left(\int_{B_{0:t-2}}\prod_{s=1}
  ^{t-1}\mathcal{M}_{s,\Qh^N_{s-1}}(\bx_{s},\dx_{s-1})-\int_{B_{0:t-2}}\prod_{s=1}^{t-1}
  \mathcal{M}_{s,\Q_{s-1}}(\bx_{s},\dx_{s-1})\right) \right|.
\end{align*}

The first term after the inequality sign can be rewritten as
$$
\left|\int_{B_{t-1:t}}\lambda_{(t-1)d}\left(F_{\otimes_{s=1}^{t-1}\mathcal{M}_{s,\Q_{s-1}}}(\bx_{t-1},B_{0:t-2})\right)
\left( \Qh_{t}^N\otimes \mathcal{M}_{t,\Qh_{t-1}^N}-\Q_{t}\otimes
\mathcal{M}_{t,\Q_{t-1}}\right)(\dx_{t-1:t})\right|.
$$ The supremum of this quantity over
$B_{0:t}\in\mathcal{B}^{t+1}_{\ui^{d}}$ is $\smallo(1)$ using
\eqref{eq:ind}, the fact that
$F_{\otimes_{s=1}^{t-1}\mathcal{M}_{s,\Q_{s-1}}}$ is H\"{o}lder
continuous (because $F_{\mathcal{M}_{s,\Q_{s-1}}}$ is H\"{o}lder
continuous for all $s$) and Lemma \ref{lemma:Holder_B}.

To control the second term we first prove by induction that, for any
$t>1$,
\begin{align}\label{eq:bc}
\sup_{B_{0:t-2}\in\mathcal{B}^{t-1}_{\ui^d}}
\left|\int_{B_{0:t-2}}\prod_{s=1}
^{t-1}\mathcal{M}_{s,\Qh^N_{s-1}}(\bx_{s},\dx_{s-1})-\int_{B_{0:t-2}}\prod_{s=1}^{t-1}
\mathcal{M}_{s,\Q_{s-1}}(\bx_{s},\dx_{s-1})\right|=\smallo(1)
\end{align}
 uniformly on $\bx_{t-1}$. By \eqref{eq:Smooth} this result is true
 for $t=2$. Assume that \eqref{eq:bc} holds for $t>2$. Then
\begin{align*}
&\left|\int_{B_{0:t-1}}\prod_{s=1}^{t}\mathcal{M}_{s,\Qh^N_{s-1}}(\bx_{s},\dx_{s-1})-\int_{B_{0:t-1}}
  \prod_{s=1}^{t}
  \mathcal{M}_{s,\Q_{s-1}}(\bx_{s},\dx_{s-1})\right|\\ &=\left|\int_{B_{0:t-1}}\left[\mathcal{M}_{t,\Qh^N_{t-1}}(\bx_{t},\dx_{t-1})
    \prod_{s=1}^{t-1}\mathcal{M}_{s,\Qh^N_{s-1}}(\bx_{s},\dx_{s-1})\right.\right.\\ &-\left.\left. \mathcal{M}_{t,\Q_{t-1}}(\bx_{t},\dx_{t-1})
    \prod_{s=1}^{t-1}
    \mathcal{M}_{s,\Q_{s-1}}(\bx_{s},\dx_{s-1})\right]\right|\\ &\leq
  \left|\int_{B_{t-1}}\mathcal{M}_{t,\Qh^N_{t-1}}(\bx_{t},\dx_{t-1})
  \int_{B_{0:t-2}}\left(\prod_{s=1}^{t-1}
  \mathcal{M}_{s,\Qh^N_{s-1}}(\bx_{s},\dx_{s-1})- \prod_{s=1}^{t-1}
  \mathcal{M}_{s,\Q_{s-1}}(\bx_{s},\dx_{s-1})\right)\right|\\ &+\left|\int_{B_{t-1}}
  \lambda_{(t-1)d}\left(F_{\otimes_{s=1}^{t-1}\mathcal{M}_{s,\Q_{s-1}}}(\bx_{t-1},B_{0:t-2})\right)
  \left(\mathcal{M}_{t,\Qh^N_{t-1}}(\bx_{t},\dx_{t-1})-\mathcal{M}_{t,\Q_{t-1}}(\bx_{t},\dx_{t-1})\right)\right|
\end{align*}
where we saw above that second term on the right side of the
inequality sign is $\smallo(1)$ uniformly on $\bx_t$ while the first
term is bounded by
\begin{align*}
&\int_{\ui^d}\mathcal{M}_{t,\Qh^N_{t-1}}(\bx_{t},\dx_{t-1})\\ &\times
  \sup_{B_{0:t-2}
    \in\mathcal{B}^{t-1}_{\ui^d}}\left|\int_{B_{0:t-2}}\left(\prod_{s=1}^{t-1}
  \mathcal{M}_{s,\Qh^N_{s-1}}(\bx_{s},\dx_{s-1})- \prod_{s=1}^{t-1}
  \mathcal{M}_{s,\Q_{s-1}}(\bx_{s},\dx_{s-1})\right)\right|
\end{align*}
where, by the inductive hypothesis, the second factor is $\smallo(1)$
uniformly on $\bx_{t-1}\in\ui^d$. This shows that \eqref{eq:bc} is
true at time $t+1$ and therefore the proof of the theorem is complete.

\subsection{Generalization of \citet{Hlawka1972}: Proof of Theorem \ref{thm:GenHM}}\label{p-thm:GenHM}

The proof of this result is an adaptation of the proof of
\citet[][``Satz 2'']{Hlawka1972}.

In what follows, we use the shorthand $\opA(B)=\Sop(\bu^{1:N})(B)=N^{-1}\sum_{n=1}^N\ind_B(u^n)$ for
any set $B\subset\ui^{d}$. One has
\[
\|\Sop(\bx^{1:N})-\pi\stn=\sup_{B\in\mathcal{B}_{[0,1)^{d}}}\left|\opA\left(F_{\pi}(B)\right)-\lambda_{d}\left(F_{\pi}(B)\right)\right|.
\]

Let $\beta=\lceil \kappa^{-1}\rceil$,
$\tilde{d}=\sum_{i=0}^{d-1}\beta^i$, $L$ an arbitrary integer, and $\mathcal{P}$ be the partition
of $[0,1)^{d}$ in $L^{\tilde{d}}$ congruent hyperrectangles $W$ of
  size $L^{-\beta^{d-1}}\times L^{-\beta^{d-2}} \times...\times L^{-1}$. Let
  $B\in\mathcal{B}_{[0,1)^{d}}$, $\mathcal{U}_{1}$ the set of the
    elements of $\mathcal{P}$ that are strictly in $F_{\pi}(B)$,
    $\mathcal{U}_{2}$ the set of elements $W\in\mathcal{P}$ such that
    $W\cap\partial(F_{\pi}(B))\neq\emptyset$, $U_{1}=\cup\text{
    }\mathcal{U}_{1}$, $U_{2}=\cup\text{ }\mathcal{U}_{2}$, and
    $U_{1}'=F_{\pi}(B)\setminus U_{1}$ so that (\citealp[``Satz
      2'']{Hlawka1972} or \citealp[Theorem 4]{SQMC})
\begin{align*}
\left|\opA\left(F_{\pi}(B)\right)-\lambda_{d}\left(F_{\pi}(B)\right)\right|&\leq|\opA(U_{1})-\lambda_{d}(U_{1})|+\#\mathcal{U}_{2}\left\{
D(\bu^{1:N})+L^{-\tilde{d}}\right\}
\end{align*}
where, under the assumption of the theorem,
$|\opA(U_{1})-\lambda_{d}(U_{1})|\leq L^{\tilde{d}-1}D(\bu^{1:N})$
\citep[see][]{Hlawka1972}.

To bound $\#\mathcal{U}_2$, we first construct a partition
$\mathcal{P}'$ of $[0,1)^{d}$ into hyperrectangles $W'$ of size $
L'{}^{-\beta^{d-1}}\times...\times L'{}^{-1}$ 
such that, for all points $\bx$ and $\bx'$ in $W'$, we have
\begin{equation}
|F_{i}(x_{1:i-1},x_i)-F_{i}(x'_{1:i-1},x'_i)|\leq
L^{-\beta^{d-i}},\quad i=1,...,d\label{eq:thmHM:cond1}
\end{equation}
where $F_{i}(x_{1:i-1},x_i)$ denotes the $i$-th component of
$F_{\pi}(\bx)$ (with $F_{i}(x_{1:i-1},x_i)=F_1(x_1)$ when $i=1$).  To
that effect, let $i\in 2:d$ and note that
\begin{multline*}
|F_{i}(x_{1:i-1},x_i)-F_{i}(x'_{1:i-1},x'_i)|\leq
|F_{i}(x_{1:i-1},x_i)-F_{i}(x_{1:i-1},x'_i)|\\ +|F_{i}(x_{1:i-1},x'_i)-F_{i}(x'_{1:i-1},x'_i)|.
\end{multline*}
By Assumption \ref{H:thmHM:3}, the probability measure
$\pi_i(x_{1:i-1},\dd x_i)$ admits a density $p_i(x_i|x_{1:i-1})$ with
respect to the Lebesgue measure such that
$\|p_i(\cdot|\cdot)\|_{\infty}<+\infty$. Therefore, the first term
after the inequality sign is bounded by $\|p_i\|_{\infty}
L'{}^{-\beta^{d-i}}$.  For the second term, the H\"{o}lder property of
$F_{\pi}$ implies that
\begin{align*}
|F_{i}(x_{1:i-1},x'_i)-F_{i}(x'_{1:i-1},x'_i)|
&\leq C_{\pi}(i-1)^{\kappa/2}(L'{}^{-\beta^{d+1-i}})^{\kappa} \\
& \leq C_{\pi}(i-1)^{\kappa/2}(L'{}^{-\beta^{d+1-i}})^{1/\beta}
=C_{\pi}(i-1)^{\kappa/2} L'{}^{-\beta^{d-i}}
\end{align*}
with  $C_{\pi}$ the H\"{o}lder constant of $F_{\pi}$. For $i=1$, we
simply have
$$ |F_{1}(x_1)-F_{1}(x_1')|\leq \|p_1\|_{\infty}L'{}^{-\beta^{d-1}}.
$$ 
Condition \eqref{eq:thmHM:cond1} is therefore verified for $L'$ the
smallest integer such that $L'\geq \tilde{C} L$, for some $\tilde{C}>0$.

Remark now that $\partial(F_{\pi}(B))=F_{\pi}(\partial (B))$ since $F$
is a continuous function. Let $R\in\partial B$ be a
$(d-1)$-dimensional face of $B$ and $\mathcal{R}$ be the set of
hyper-rectangles $W'\in\mathcal{P}'$ such that $R\cap W'\neq\emptyset$.
Note that $\#\mathcal{R}\leq L'{}^{\tilde{d}-1}\leq (\lfloor \tilde{C}
L\rfloor +1)^{\tilde{d}-1}$. For each $W'\in\mathcal{R}$, take a point
$\mathbf{r}^{W'}\in R\cap W'$ and define
\[
\tilde{\mathbf{r}}^{W'}=F_{\pi}(\mathbf{r}^{W'})\in F_{\pi}(R).
\]
Let $\tilde{\mathcal{R}}$ be the collection of hyper-rectangles
$\tilde{W}$ of size $2L^{-\beta^{d-1}}\times...\times2L^{-1}$ (assuming $L$ is even) and
having point $\tilde{\mathbf{r}}^{W'}$, $W'\in\mathcal{R}$, as a middle
point.

For an arbitrary $\bu\in F_{\pi}(R)$, let $\bx=F^{-1}_{\pi}(\bu)\in
R$. 
Hence, $\bx$ is in one hyperrectangle $W'\in\mathcal{R}$ so that
using \eqref{eq:thmHM:cond1}
\[
|u_i-\tilde{r}_{i}^{W'}|=|F_{i}(x_{1:i-1},x_i)-F_{i}(r_{1:i-1}^{W'},r_{i}^{W'})|\leq
L^{-\beta^{d-i}},\quad i=1,\dots,d.
\]
This shows that $\bu$ belongs to the hyperrectangle
$\tilde{W}\in\tilde{\mathcal{R}}$ with centre
$\tilde{\mathbf{r}}^{W'}$ so that $F_{\pi}(R)$ is covered by at most
$\# \tilde{\mathcal{R}}=\# \mathcal{R}\leq (\lfloor
\tilde{C}L\rfloor+1)^{\tilde{d}-1}$ hyperrectangles
$\tilde{W}\in\tilde{\mathcal{R}}$. To go back to the initial partition
of $[0,1)^{d}$ with hyperrectangles in $\mathcal{P}$, remark that
  every hyperrectangle in $\tilde{\mathcal{R}}$ is covered by at most
  $c_1$ hyperrectangles in $\mathcal{P}$ for a constant
  $c_1$. Finally, since the set $\partial B$ is made of the union of
  $2d$ $(d-1)$-dimensional faces of $B$, we have $\# \mathcal{U}_2
  \leq c_2 L^{\tilde{d}-1}$ for a constant $c_2$.

Then, we may conclude the proof as follows
\begin{align*}
\|\Sop(\bx^{1:N})-\pi\stn &\leq
L^{\tilde{d}-1}D(\bu^{1:N})+c_2L^{\tilde{d}-1}
\left(D(\bu^{1:N})+L^{-\tilde{d}}\right)
\end{align*}
where the optimal value of $L$ is such that, for some $c_3>0$, 
$$ \|\Sop(\bx^{1:N})-\pi\stn \leq c_3 D(\bu^{1:N})^{1/\tilde{d}}.
$$

\subsection{Consistency of forward smoothing: Proof of Proposition \ref{prop:constforward}}
\label{p-thm:consforward}
The proof amounts to a simple adaptation of Theorem \ref{thm:consistency2}: by
replacing Assumption 4 by Assumption 4' above, one obtains that 
$ \|\Sop(\tilde{P}^N_{t,h^{t}})-\Qt_{t-1,h^t}\otimes
m_{t,h}\stn\cvz$ as $N\rightarrow+\infty$, where
$\tilde{P}^N_{t,h^{t}}=\left(h^{t}(\hat{\bz}^{1:N}_{t-1}),\bx_t^{1:N}\right)$
 $\Qt_{t-1,h^t}$ is the image by $h^t$ of $\Qt_{t-1}$,
 and $m_{t,h}$ is defined as in Theorem \ref{thm:consistency2}. 
Therefore, by Corollary
\ref{cor:Hilbert2},
\begin{align}\label{eq:forward2}
  \|\Sop(\bz_t^{1:N})-\Qt_{t-1}\otimes m_{t}\stn\cvz,\quad\mbox{as
  }N\rightarrow+\infty.
\end{align}
In addition, since the \RN derivative
$$  \frac{\Qt_{t}}{\Qt_{t-1}\otimes   m_{t}}\left(\dd(\bx_{0:t-1},\bx_{t})\right)
\propto G_t(\bx_{t-1},\bx_{t-1}),
$$ is continuous and bounded, Theorem 1 of \citet{SQMC}, together with
\eqref{eq:forward2}, implies \eqref{eq:forward}.

\subsection{$L_2$-convergence: Proof of Theorem \ref{thm:L2}}\label{p-thm:L2}

To prove the result, let $\varphi$ be as in the statement of the
theorem and let us first prove the $L_1$-convergence.

We have
$$ \E
\left|\Sop(\tilde{\bx}_{0:T}^{1:N})(\varphi)-\Qt_T(\varphi)\right|\leq
\E
\left|\Sop(\tilde{\bx}_{0:T}^{1:N})(\varphi)-\Qt^N_T(\varphi)\right|+\E
\left|\Qt^N_T(\varphi)-\Qt_T(\varphi)\right|.
$$ By portmanteau lemma \citep[][Lemma 2.2, p.6]{VanderVaart2007}, 
convergence in the sense of the extreme metric is stronger than 
weak convergence. Hence, the second term above goes to 0 as
$N\rightarrow +\infty$ by Theorem \ref{thm:BS} and by the dominated
convergence theorem. For the first term, as each
$\tilde{\bu}^n\sim\Unif(\ui^{T+1})$, we have, by the inverse
Rosenblatt interpretation of the backward pass of SQMC,
$$
\E\Big[\Sop(\tilde{\bx}_{0:T}^{1:N})(\varphi)\big|\mathcal{F}_T\Big]=\E\Big[\Sop(\tilde{h}_{0:T}^{1:N})(\varphi\circ
  H_T)\big|\mathcal{F}_T\Big]=\Qt^N_{T,h_T}(\varphi\circ
H_T)=\Qt^N_{T}(\varphi)
$$ with $\mathcal{F}_T^N$ the $\sigma$-algebra generated by the forward
step (Algorithm \ref{alg:SQMC_B}). Therefore,
\begin{align}\label{eq:L1}
\E
\left[\big|\Sop(\tilde{\bx}_{0:T}^{1:N})(\varphi)-\Qt^N_T(\varphi)\big|\,\big|\mathcal{F}_T\right]\leq
\var\Big(\Sop(\tilde{\bx}_{0:T}^{1:N})(\varphi)\big|\mathcal{F}_T\Big)^{1/2}
\end{align}
where, using Assumption \ref{H:u2} and the fact that
$\tilde{\bx}^n_{0:T}=H_T\circ F^{-1}_{\Qt^N_{T,h_T}}(\tilde{\bu}^n)$,
\begin{align}\label{eq:L2}
\var\Big(\Sop(\tilde{\bx}_{0:T}^{1:N})(\varphi)|\mathcal{F}^N_{T})\Big)\leq
Cr(N)\sigma_{\varphi,N}^2
\end{align}
with $\sigma_{\varphi,N}^2\leq \Qt_T^{N}(\varphi^2)$ and with $C$ and $r(N)$ as in the statement of the theorem. Let
$\epsilon>0$. Then, by Assumption \ref{H:u1} and looking at the proof
of Theorem \ref{thm:BS}, we have for $N$ large enough and almost
surely, $\Qt_T^{N}(\varphi^2)\leq \Qt_T(\varphi^2)+ \epsilon$ so that,
for $N$ large enough,
\begin{align}\label{eq:L3}
\E
\left|\Sop(\tilde{\bx}_{0:T}^{1:N})(\varphi)-\Qt^N_T(\varphi)\right|\leq
\sqrt{C r(N)\big(\Qt_T(\varphi^2)+ \epsilon\big)}
\end{align}
showing the $L_1$-convergence. To prove the $L_2$-convergence, remark
that
$$
\E\left[\Sop(\tilde{\bx}_{0:T}^{1:N})(\varphi)|\mathcal{F}^N_{T}\right]=\Qt_T^{N}(\varphi)=
\big(\Qt^N_T(\varphi)-\Qt_T(\varphi)\big)+\Qt_T(\varphi)
$$ and therefore
\begin{align*}
\var\Big(\E\Big[\Sop(\tilde{\bx}_{0:T}^{1:N})(\varphi)|\mathcal{F}^N_{T}\Big]\Big)=\var\Big(\Qt^N_T(\varphi)-\Qt_T(\varphi)\Big)\leq
\E\Big[\big(\Qt^N_T(\varphi)-\Qt_T(\varphi)\big)^2\Big]
\end{align*}
where the right-hand side converges to zero as $N\rightarrow+\infty$
by the dominated convergence theorem and by Theorem \ref{thm:BS}. On
conclude the prove using \eqref{eq:L1}-\eqref{eq:L3} and the fact that
\begin{align*}
\var\Big(\Sop(\tilde{\bx}_{0:T}^{1:N})(\varphi)\Big)&=\var\Big(\E\Big[\Sop(\tilde{\bx}_{0:T}^{1:N})(\varphi)|\mathcal{F}^N_{T}\Big]\Big)+\E\Big[\var\Big(\Sop(\tilde{\bx}_{0:T}^{1:N})(\varphi)|\mathcal{F}^N_{T}\Big)\Big].
\end{align*}

\subsection{Consistency of the Backward step: Proof of Theorem \ref{thm:BS_star} and proof of Corollary \ref{cor:BS_star}}

\subsubsection{Preliminary computations}

To prove Theorem \ref{thm:BS_star} we need the following two lemmas:

\begin{lemma}\label{lemma:Hilbert}
Let $m\in\mathbb{N}$, $I=[0,\frac{k+1}{2^{dm}}]$, $k\in
\{0,1,...,2^{dm}-2\}$ and $B=H(I)$. Then, $B=\cup_{i=1}^p B_i$ for
some closed hyperrectangles $B_i\subseteq[0,1]^d$ and where $p\leq
2^d(m+1)$.
\end{lemma}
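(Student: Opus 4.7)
The plan is to exploit the nesting and bi-measure properties of the Hilbert curve to express $B=H(I)$ as a finite union of closed hypercubes drawn from several scales, and then to count them via a base-$2^d$ expansion of $k+1$.

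First I would note that $I=\bigcup_{j=0}^{k} I_m^d(j)$ (the first $k+1$ consecutive level-$m$ intervals), so
\[
B \,=\, H(I) \,=\, \bigcup_{j=0}^{k} H\!\left(I_m^d(j)\right) \,=\, \bigcup_{j=0}^{k} S_m^d(j),
\]
a union of $k+1 \leq 2^{dm}-1$ closed hypercubes of side $2^{-m}$. The identity $H(I_m^d(j))=S_m^d(j)$ for the limit curve (rather than just for the iterate $H_m$) is the only mildly delicate point: the inclusion $\subseteq$ follows by induction from the nesting property together with uniform convergence $H_{m'}\to H$ on $[0,1]$, while the reverse inclusion follows from the bi-measure property, which forces $\lambda_d(H(I_m^d(j)))=2^{-dm}=\lambda_d(S_m^d(j))$ and thereby precludes $H(I_m^d(j))$ from being a proper closed subset of $S_m^d(j)$.

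With the decomposition in hand, the crude bound $p\leq k+1$ is much weaker than what is claimed, so I would coarsen the partition using the nesting property: any block of $2^d$ consecutive level-$m$ hypercubes with indices $\{2^d\ell,\ldots,2^d\ell+2^d-1\}$ fuses into a single level-$(m-1)$ hypercube $S_{m-1}^d(\ell)$, and the same rule applies recursively at coarser scales. Expanding
\[
k+1 \,=\, \sum_{j=0}^{m-1} a_j\,2^{dj},\qquad a_j\in\{0,\ldots,2^d-1\},
\]
in base $2^d$, I would then process the $k+1$ level-$m$ hypercubes greedily from the most significant digit downward: the initial $a_{m-1}\cdot 2^{d(m-1)}$ bundle into $a_{m-1}$ level-$1$ hypercubes, the next $a_{m-2}\cdot 2^{d(m-2)}$ into $a_{m-2}$ level-$2$ hypercubes, and so on, finishing with $a_0$ residual level-$m$ hypercubes. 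The alignment required at each stage is automatic because, after consuming the top $i$ digits, the running starting index is a multiple of $2^{d(m-i)}$.

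Each piece so produced is a closed hypercube, hence a closed hyperrectangle in $[0,1]^d$, and their total number is bounded by
\[
p \,\leq\, \sum_{j=0}^{m-1} a_j \,\leq\, m(2^d-1) \,\leq\, 2^d(m+1),
\]
which is the claim. The hard part, such as it is, will be the technical identification of $H(I_m^d(j))$ with $S_m^d(j)$ for the limit curve; the combinatorics that follows is pure bookkeeping on the base-$2^d$ digits of $k+1$.
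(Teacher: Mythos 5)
Your proof is correct and follows essentially the same route as the paper: the paper greedily extracts, from the coarsest admissible dyadic scale down to level $m$, at most $2^d$ intervals per level over at most $m+1$ levels, which is exactly the decomposition encoded by your base-$2^d$ digits of $k+1$, so the two arguments coincide up to bookkeeping (your digit-sum bound $\sum_j a_j\leq m(2^d-1)$ is even slightly sharper than the stated $2^d(m+1)$). The only substantive addition on your side is that you justify the identification $H(I_m^d(j))=S_m^d(j)$ for the limit curve via uniform convergence, closedness and the bi-measure property, a point the paper's proof leaves implicit.
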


\begin{proof}
To prove the Lemma, let $0\leq m_1\leq m$ be the smallest integer
$\tilde{m}$ such that $I^d_{\tilde{m}}(0)\subseteq I$ and $i^*_{m_1}$
be the number of intervals in $\mathcal{I}^d_{m_1}$ included in
$I$. Note that $i^*_{m_1}<2^d$. Indeed, if $i^*_{m_1}\geq 2^d$ then,
by the nesting property of the Hilbert curve, $$
I^d_{m_1-1}(0)\subseteq \bigcup_{k=0}^{2^d-1} I^d_{m_1}(k)\subseteq
\bigcup_{k=0}^{i^*_{m_1}-1} I^d_{m_1}(k)\subseteq I
$$ which is in contradiction with the definition of
$i^*_{m_1}$. Define $I_2=I\setminus \cup \mathcal{I}^I_{m_1}$ and
$i^*_{m_2}$ the number of intervals in $\mathcal{I}^d_{m_2}$ included
in $I_2$. For the same reason as above $i^*_{n_2}<2^d$. More
generally, for any $m_1\leq m_k\leq m$, $i^*_{m_k}\leq 2^d$ meaning
that the set $B$ is made of at most $ \sum_{k=m_1}^m i^*_{m_k}\leq
2^d(m+1) $ hypercubes (of side varying between $2^{-m}$ and
$2^{-m_1}$).
\end{proof}


\begin{lemma}\label{lemma:genth34}
 Let $(\pi^N)_{N\geq 1}$ be a sequence of probability measures on
  $[0,1)^{(k+1)d}$ such that $\|\pi^N-\pi\stn\cvz$ where
  $\pi(\dx)=\pi(\bx)\lambda_{(k+1)d}(\dx)$ is a probability measure on
  $\pi^{(k+1)d}$ that admits a bounded density $\pi(\bx)$. Let
  $\pi_{h_k}$ be the image by $h_k$ of $\pi$. Then,
\[
\|\pi_{h_k}^N-\pi_{h_k}\stn\rightarrow0,\quad\mbox{as
}N\rightarrow+\infty.
\]

\end{lemma}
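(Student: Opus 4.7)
The plan is to reduce the discrepancy of $\pi_{h_k}^N$ against $\pi_{h_k}$ on $\ui^{k+1}$ to the discrepancy of $\pi^N$ against $\pi$ on $\ui^{d(k+1)}$, exploiting the bi-measure property of the Hilbert curve together with a refinement of Lemma \ref{lemma:Hilbert}. For a hyperrectangle $B=\prod_{i=0}^k[a_i,b_i]\in\mathcal{B}_{\ui^{k+1}}$, the push-forward definition and the a.e.\ identity $H_k\circ h_k=\mathrm{id}$ give
$\pi_{h_k}^N(B)-\pi_{h_k}(B)=\pi^N(H_k(B))-\pi(H_k(B))$,
where $H_k(B)=\prod_{i=0}^k H([a_i,b_i])$ (the residual null set is harmless since $\pi$ is absolutely continuous). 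The question therefore reduces to how well $H_k(B)$ can be approximated, inside and out, by finite unions of hyperrectangles in $\ui^{d(k+1)}$.

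Second, I would construct inside and outside approximations at resolution $m$. Writing $[a_i^+,b_i^-]$ for the largest dyadic sub-interval of $[a_i,b_i]$ at level $m$ and $[a_i^-,b_i^+]$ for the smallest dyadic super-interval, a minor extension of Lemma \ref{lemma:Hilbert} (which treats only $[0,(k{+}1)/2^{dm}]$) shows that each of $H([a_i^+,b_i^-])$ and $H([a_i^-,b_i^+])$ is a union of at most $C_d(m+1)$ closed hypercubes of $\ui^d$. The bi-measure property gives $\lambda_d(H([a_i^-,b_i^+])\setminus H([a_i^+,b_i^-]))\leq 2\cdot 2^{-dm}$. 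Taking products yields
$I_m:=\prod_i H([a_i^+,b_i^-])\subseteq H_k(B)\subseteq\prod_i H([a_i^-,b_i^+])=:O_m$,
where $I_m,O_m$ are each a union of at most $N_m:=[C_d(m+1)]^{k+1}$ hyperrectangles of $\ui^{d(k+1)}$ and $\lambda_{d(k+1)}(O_m\setminus I_m)\leq 2(k+1)2^{-dm}$.

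Third, since the extreme norm controls discrepancies on individual hyperrectangles, summing yields $|\pi^N(I_m)-\pi(I_m)|\leq N_m\|\pi^N-\pi\stn$ and the same for $O_m$, while the bounded density of $\pi$ gives $\pi(O_m\setminus I_m)\leq 2(k+1)\|\pi\|_\infty 2^{-dm}$. Combining these in the sandwich $I_m\subseteq H_k(B)\subseteq O_m$ produces the uniform estimate
$$\|\pi_{h_k}^N-\pi_{h_k}\stn\leq N_m\|\pi^N-\pi\stn+2(k+1)\|\pi\|_\infty 2^{-dm}.$$
The conclusion then follows by a standard two-step choice: given $\epsilon>0$, pick $m=m(\epsilon)$ so the second term is below $\epsilon/2$, then invoke $\|\pi^N-\pi\stn\cvz$ to pick $N$ so large that the first term is below $\epsilon/2$.

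The only non-routine ingredient is the extension of Lemma \ref{lemma:Hilbert} from intervals of the form $[0,(k{+}1)/2^{dm}]$ to arbitrary dyadic sub-intervals of $[0,1]$, which is what allows both the inside and the outside set to be genuine unions of hypercubes with an $O(m)$ bound on their number. This is essentially a bookkeeping exercise reusing the hierarchical nested-cube construction already present in the proof of Lemma \ref{lemma:Hilbert}, but it is where one must be careful; everything else in the argument is a quantitative interpolation between the extreme-norm hypothesis and the boundedness of the density $\pi$.
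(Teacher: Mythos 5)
Your overall plan (sandwich the $H$-image of each coordinate interval between unions of $O(m)$ dyadic cubes \`a la Lemma~\ref{lemma:Hilbert}, use the bi-measure property and the bounded density of $\pi$ to control the sandwich gap, count $O((m+1)^{k+1})$ boxes after taking Cartesian products, then let $m$ and $N$ go to infinity in that order) is the right one, and is essentially the ``straightforward modification of Theorem 3 of Gerber and Chopin'' that the paper has in mind. But your very first step contains a genuine gap: the identity $\pi_{h_k}^N(B)-\pi_{h_k}(B)=\pi^N(H_k(B))-\pi(H_k(B))$ is not correct in general. By definition $\pi_{h_k}^N(B)=\pi^N(h_k^{-1}(B))$, and $H\circ h=\mathrm{id}$ only gives $h_k^{-1}(B)\subseteq H_k(B)$; the difference $H_k(B)\setminus h_k^{-1}(B)$ consists of points with several $H$-preimages, one inside the prescribed parameter interval and the canonical one, $h(\bx)$, outside it. This set is Lebesgue-null, so the substitution is harmless for $\pi$ --- which is all your parenthetical remark justifies --- but nothing in the hypothesis forces it to carry small $\pi^N$-mass: $\pi^N$ is a discrete measure in the intended application, and base-2 QMC point sets sit exactly on dyadic points, i.e.\ on the multi-preimage set of the Hilbert curve, so $\pi^N$ may charge that null set with mass one for every $N$ while $\|\pi^N-\pi\stn\cvz$. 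Concretely, your sandwich $I_m\subseteq H_k(B)\subseteq O_m$ only yields the upper bound $\pi_{h_k}^N(B)\le\pi^N(O_m)$; the lower half fails because $I_m\subseteq H_k(B)$ does not imply $I_m\subseteq h_k^{-1}(B)$, so $\pi^N(I_m)$ need not be a lower bound for $\pi_{h_k}^N(B)$.

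The gap is fixable, but the fix is precisely where the technical content of the lemma lies. Work with $h_k^{-1}(B)=\prod_{i}h^{-1}([a_i,b_i])$ throughout. For the inner approximation, replace the closed constituent cubes by their interiors: if $\bx$ lies in the interior of a level-$m$ cube whose parameter interval is contained in $[a_i^+,b_i^-]$, then (since distinct level-$m$ cubes meet only along faces) every $H$-preimage of $\bx$ lies in that interval, hence $h(\bx)\in[a_i,b_i]$; so the union of the \emph{open} cubes is contained in $h^{-1}([a_i,b_i])$. You then need to compare $\pi^N$ of open (and of overlapping closed) cubes with $\pi$: this is done by observing that the $\pi^N$-mass of any axis-parallel face is at most $\|\pi^N-\pi\stn$ (cover the face by a closed box of thickness $\epsilon$, whose $\pi$-mass is $O(\epsilon)$ by the bounded density, and let $\epsilon\downarrow 0$ at fixed $N$); since only $O((m+1)^{k+1})$ faces are involved, the extra error is again of the form $C_m\|\pi^N-\pi\stn$ and your two-step argument goes through. (The same thin-box device handles boxes touching the upper boundary, which the definition of $\|\cdot\stn$ excludes, and the extension of Lemma~\ref{lemma:Hilbert} to arbitrary dyadic intervals that you invoke is indeed routine.) As written, however, the proof's first displayed identity is false for exactly the measures $\pi^N$ the lemma is meant to be applied to.
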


The proof of this last result is omitted since it follows from the
properties of Cartesian products and from straightforward
modifications of the proof of \citet[][Theorem 3]{SQMC}.

\subsubsection{Proof of the Theorem \ref{thm:BS_star}}\label{p-thm:BS_star}

To prove the theorem first note that
$$ \|\widetilde{\mathsf{Q}}_{T,h_T}^N-\Qt_{T,h_T}\stn=\smallo(1).
$$ Indeed, by assumption,
$\|\widehat{\mathsf{Q}}_{T,h}^N-\Qh_{T,h}^N\stn=\smallo(1)$ and, by
Theorem \ref{thm:consistency2} and \citet[][Theorem 3]{SQMC}, $\|\Qh_{T,h}^N-\Q_{T,h}\stn=\smallo(1)$ since $\Q_T$
admits a bounded density (Assumption \ref{H:thmPF1_B:4} of Theorem
\ref{thm:consistency2}). Hence,
$\|\widehat{\mathsf{Q}}_{T,h}^N-\Q_{T,h}\stn=\smallo(1)$
and thus, by Theorem \ref{thm:Hilbert2},
$\|\widehat{\mathsf{Q}}_{T}^N-\Q_{T}\stn=\smallo(1)$, with $\widehat{\mathsf{Q}}_{T}^N$  the image by $H$ of
$\widehat{\mathsf{Q}}_{T,h}^N$. In addition, using the same argument, and using the fact that,
for all $t\in 1:T$, $\tilde{G}_t$ is bounded (Assumption H1 of Theorem
\ref{thm:BS}), we have, by Theorem \ref{thm:BS} (first part),
$$
\sup_{\bx_t\in\setX}\|K^N_{t}(\bx_t,\dx_{t-1})-\mathcal{M}_{t,\Q_{t-1}}(\bx_t,\dx_{t-1})\stn=\smallo(1),\quad t\in 1:T
$$ 
with $K^N_{t}(\bx_t,\dx_{t-1})$  the image by $H$ of the
probability measure $K_{t,h}(H(\bx_t),\dd h_{t-1})$. Consequently, by the
second part of Theorem \ref{thm:BS}, $
\|\widetilde{\mathsf{Q}}_{T}^N-\Qt_T\stn=\smallo(1) $ where
$\widetilde{\mathsf{Q}}_{T}^N$ denotes the image by $H_T$ of
$\widetilde{\mathsf{Q}}_{T,h_T}^N$. Finally, under the assumptions of
the theorem, $\Qt_T$ admits a bounded density (because for all $t$,
$\tilde{G}_t$ is bounded and $\Q_t$ admits a bounded density) and
thus, by Lemma \ref{lemma:genth34}, $
\|\widetilde{\mathsf{Q}}_{T,h_T}^N-\Qt_{T,h_T}\stn=\smallo(1)$.

To prove the theorem it therefore remains to show that
\begin{align}\label{eq:thmBS_star1}
\|\Sop(\check{h}_{0:T}^{1:N})-\widetilde{\mathsf{Q}}_{T,h_T}^N\stn
=\smallo(1).
\end{align}
Indeed, this would yield
$\|\Sop(\check{h}_{0:T}^{1:N})-\Qt_{T,h_T}\stn=\smallo(1)$ and
thus, by Theorem \ref{thm:Hilbert2},\
$$
\|\Sop(\check{\bx}_{0:T}^{1:N})-\Qt_{T}\stn=\smallo(1)
$$
as required.

To prove \eqref{eq:thmBS_star1}, we assume to simplify the notations
that $F_{\mathcal{M}_{t,\Q_{t-1}}}(\bx_{t},\bx_{t-1})$ is
Lipschitz. Generalization for any H\"{o}lder exponent can be done
using similar arguments as in the proof of Theorem
\ref{thm:GenHM}.

Let $h_t^n=h(\bx_t^N)$ where $\bx_t^{1:N}$ are the particles obtained
at the end of iteration $t$ of Algorithm \ref{alg:SQMC_B}. We assume
that, for all $t\in 0:T$, the particles are sorted according to their
Hilbert index, i.e. $n< m\implies h_t^n <h_t^m$ (note that the
inequality is strict by Assumption \ref{H:thmPF1_B:1} of Theorem
\ref{thm:consistency2}). Then, using the same notation as in the proof of Theorem \ref{thm:GenHM}, one has
\[
\|\Sop(\check{h}_{0:T}^{1:N})-\widetilde{\mathsf{Q}}_{T,h_T}^N\stn=\sup_{B\in\mathcal{B}^N_{
    [0,1)^{T+1}}}\left|\opA\left(F_{\widetilde{\mathsf{Q}}_{T,h_T}^N}(B)\right)-\lambda_{T+1}\left(F_{\widetilde{\mathsf{Q}}_{T,h_T}^N}(B)\right)\right|
\]
where
$\mathcal{B}^N_{\ui^{T+1}}=\left\{[\bm{a},\bm{b}]\subset\mathcal{B}_{\ui^{T+1}},\,
b_i^N\leq h_i^N, \,i\in 0:T\right\}$.

The beginning of the proof follows the lines of Theorem
\ref{thm:GenHM}, with $\beta=d$ and $d$ replaced by $T+1$. Let
$\tilde{d}=\sum_{t=0}^Td^t$ so that, for a set
$B\in\mathcal{B}^N_{[0,1)^{T+1}}$,
\[
\left|\opA\left(F_{\widetilde{\mathsf{Q}}_{T,h_T}^N}(B)\right)-\lambda_{T+1}\left(F_{\widetilde{\mathsf{Q}}_{T,h_T}^N}(B)\right)\right|\leq
L^{\tilde{d}}D(\bu^{1:N})+\#\mathcal{U}_{2}\left\{
D(\bu^{1:N})+L^{-\tilde{d}}\right\}
\]
where $L$ and $\mathcal{U}_2$ are as in the proof of Theorem
\ref{thm:GenHM}.

Following this latter, let $\mathcal{P}'$ be the partition of the set
$[0,1)^{T+1}$ into hyperrectangles $W'$ of size $L'{}^{-d^T}\times
  L'{}^{-d^{T-1}}\times...\times L'{}^{-1}$ such that, for all
  $\bm{h}$ and $\bm{h}'$ in $W'$, we have
\begin{equation}
|F_{\widehat{\mathsf{Q}}_{T,h}^N}(h_t)-F_{\widehat{\mathsf{Q}}_{T,h}^N}(h_1')|\leq
L^{-d^T}.\label{eq:thmBS:cond1}
\end{equation}
and
\begin{equation}
\left|\tilde{F}^N_{i-1}(h_{i-1},h_{i})-\tilde{F}^N_{i-1}(h'_{i-1},h'_{i})\right|\leq
L^{-d^{T+1-i}},\quad i\in 2:(T+1)
\label{eq:thmBS:cond2}
\end{equation}
where, to simplify the notation, we write
$\tilde{F}^N_{i-1}(\tilde{h},\cdot)$ the CDF of $K^N_{T-i+2,h}(\tilde{h},\dd
h_{T-i+1})$.

Let us first look at condition \eqref{eq:thmBS:cond1}. We have
\begin{align*}
|F_{\widehat{\mathsf{Q}}_{T,h}^N}(h_1)-F_{\widehat{\mathsf{Q}}_{T,h}^N}(h_1')|
&
\leq2\|F_{\Qh_{T,h}^{N}}-F_{\widehat{\mathsf{Q}}_{T,h}^N}\|_{\infty}+2\|F_{\Qh_{T,h}^{N}}-F_{\Q_{T,h}}\|_{\infty}+|F_{\Q_{T,h}}(h_1)-F_{\Q_{T,h}}(h_1')|\\ &
\leq2r_{1}(N)+2r_{2}(N)+\left|F_{\Q_{T,h}}(h_1)-F_{\Q_{T,h}}(h_1')\right|
\end{align*}
with
$r_{1}(N)=\|F_{\widehat{\mathsf{Q}}_{T,h}^N}-F_{\Qh_{T,h}^N}\|_{\infty}$ and
$r_{2}(N)=\|\Qh_{T,h}^N-\Q_{T,h}\stn$; note $r_{1}(N)\cvz$ by the
construction of $\widehat{\mathsf{Q}}_{T,h}^N$ and under the assumptions of the theorem while
$r_{2}(N)\cvz$ by Theorem \ref{thm:consistency2} and by \citet[][Theorem 3]{SQMC}

Let $L'=2^{m}$ for an integer $m\geq0$, so that $h_i$ and $h_i'$ are
in the same interval
$I_{d^{T-i}m}^{d}(k)\in\mathcal{I}_{d^{T-i}m}^{d}$, $i\in
1:(T+1)$. Then, since $h_1$ and $h_1'$ are in the same interval
$I_{d^{T-1}m}^{d}(k)\in\mathcal{I}_{d^{T-1}m}^{d}$,
\[
\left|F_{\Q_{T,h}}(h_1)-F_{\Q_{T,h}}(h_1')\right|\leq\Q_{T,h}\left(I_{d^{T-1}m}^{d}(k)\right)=\Q_T\left(S_{d^{T-1}m}^{d}(k)\right)\leq\frac{\|p_T\|_{\infty}}{(L')^{d^T}}
\]
as $\Q_T$ admits a bounded density $p_T$. Hence \eqref{eq:thmBS:cond1}
is verified if
\[
L'\geq L \tilde{k}_N,\quad
\tilde{k}_N=\left(\frac{\|p_T\|_{\infty}}{(1-L^{d^T}r_1^*(N))}\right)^{1/d^T},\quad
r_1^*(N)=2r_{1}(N)+2r_{2}(N),
\]
which implies that we assume from now on that $L^{-d^T}\geq 2r_1^*(N)$
for $N$ large enough.

Let us now look at \eqref{eq:thmBS:cond2} for a $i>1$. To simplify the
notation in what follows, let $F^N_{i-1}(\tilde{h},\cdot)$ be the CDF of
${\mathcal{M}^{h}_{T-i+2,\Qh^N_{T-i+1,h}}}(\tilde{h},\dd h_{T-i+1})$ and
$F_{i-1}(\tilde{h},\cdot{ })$ be the CDF of
$\mathcal{M}^{h}_{T-i+2,\Q_{T-i+1}}(\tilde{h},\dd h_{T-i+1})$. Then,
\begin{align*}
 \Big|\tilde{F}^N_{i-1}&(h_{i-1},h_{i})-\tilde{F}^N_{i-1}(h_{i-1}',
  h_i')\Big|\\ & \leq
  2\|\tilde{F}^N_{i-1}-F^N_{i-1}\|_{\infty}+2\|F^N_{i-1}-F_{i-1}\|_{\infty}
  +|F_{i-1}(h_{i-1},h_{i})-F_{i-1}(h'_{i-1},h'_{i})|\\ &
  =2r_{3}(N)+2r_{4}(N)+\left|F_{i-1}(h_{i-1},h_{i})-F_{i-1}(h'_{i-1},h'_{i})\right|
\end{align*}
with $r_{3}(N)=\|\tilde{F}^N_{i-1}-F^N_{i-1}\|_{\infty}$ and
$r_{4}(N)=\|F^N_{i-1}-F_{i-1}\|_{\infty}$; note $r_{3}(N)\cvz$ by the
construction of $K^N_{T-i+2,h}$ and under the assumptions of the theorem while
$r_{4}(N)\cvz$ by Theorem \ref{thm:BS} and \citet[][Theorem 3]{SQMC}.

To control
$\left|F_{i-1}(h_{i-1},h_{i})-F_{i-1}(h'_{i-1},h'_{i})\right|$, assume
without loss of generality that $h_i\geq h'_i$ and write
$\tilde{G}_{i}^h(h_{i-1},h_i')=\tilde{G}_{T-i+2}(H(h_{i-1}), H(h_i'))$
to simplify further the notation. Then
\begin{align*}
|F_{i-1}(h_{i-1},h_{i})-F_{i-1}(h'_{i-1},h'_{i})|&\leq
|F_{i-1}(h_{i-1},h'_{i})-F_{i-1}(h'_{i-1},h'_{i})|\\ &+
\left|\int_{h_i'}^{h_i}\tilde{G}^h(h_{i-1},v)\Q_{T-i+1,h}(\dd
v)\right|.
\end{align*}
The second term is bounded by
$\|\tilde{G}_{T-i+2}\|_{\infty}\Q_{T-i+1,h}([h_i',h_i])\leq
\|\tilde{G}_{T-i+2}\|_{\infty}\Q_{T-i+1}(W)$ where
$W\in\mathcal{S}_{d^{T-i}m}^{d}$. Since $\Q_{T-i+1}$ admits a bounded
density, we have, for a constant $c>0$,
$$ \|\tilde{G}_{T-i+2}\|_{\infty}\Q_{T-i+1,h}([h_i',h_i])\leq c
L^{-d^{T+1-i}}.
$$

 To control the other term suppose first that $h_i'>L'{}^{-d^{T-i+1}}$
 and let $k$ be the largest integer such that $h'_i\geq
 kL'{}^{-d^{T-i+1}}$. Then,
\begin{equation}\label{eq:thm:BS:1}
\begin{split}
|F_{i-1}(h_{i-1},h'_{i})&-F_{i-1}(h'_{i-1},h'_{i})|\\
&= \left|\int_{0}^{h'_i}\left[\tilde{G}_i^h(h_{i-1},v)-
    \tilde{G}_i^h(h'_{i-1},v)\right]\Q_{T-i+1,h}(\dd v)\right|\\
     &\leq 
  \left|\int_{0}^{k L'{}^{-d^{T-i+1}}}\left[\tilde{G}_i^h(h_{i-1},v)-
    \tilde{G}_i^h(h'_{i-1},v)\right]\Q_{T-i+1,h}(\dd v)\right|\\
     &+
  \left|\int_{k
    L'{}^{-d^{T-i+1}}}^{h'_i}\left[\tilde{G}_i^h(h_{i-1},v)-
    \tilde{G}_i^h(t'_{i-1},v)\right]\Q_{T-i+1,h}(\dd v)\right|.
\end{split}
\end{equation}
Then, using by Lemma \ref{lemma:Hilbert}, we have for the first term:
\begin{align*}
\Big|\int_{0}^{k L'{}^{-d^{T-i+1}}}&\left[\tilde{G}_i^h(h_{i-1},v)-
    \tilde{G}_i^h(h'_{i-1},v)\right]\Q_{T-i+1,h}(\dd v)\Big|\\
 &=\left|\sum_{j=1}^{k_i}\int_{W_j}\left[\tilde{G}_{T-i+2}(H(h_{i-1}),\bx)-\tilde{G}_{T-i+2}(H(h'_{i-1}),\bx)\right]\Q_{T-i+1}(\dx)\right|\\
  &\leq
  \sum_{j=1}^{k_i}
  \Big\{\left|F^{cdf}_{\mathcal{M}_{T-i+2,\Q_{T-i+1}}}(H(h_{i-1}),\mathbf{a_j})-F^{cdf}_{\mathcal{M}_{T-i+2,\Q_{T-i+1}}}(H(h'_{i-1}),\mathbf{a_j})\right|\Big.\\ &+\Big.\left|F^{cdf}_{\mathcal{M}_{T-i+2,\Q_{T-i+1}}}(H(h_{i-1}),\mathbf{b_j})-F^{cdf}_{\mathcal{M}_{T-i+2,\Q_{T-i+1}}}(H(h'_{i-1}),\mathbf{b_j})\right|\Big\}
\end{align*}
where $W_j=[\mathbf{a}_j,\mathbf{b_j}]\subset\ui^d$ and where $k_i\leq
2^{d}(d^{T-i}m+1)$. Let $C_i$ be the Lipschitz constant of
$F^{cdf}_{\mathcal{M}_{T-i+2,\Q_{T-i+1}}}$. Then, for any
$\mathbf{c}\in\ui^d$,
\begin{align*}
\left|F^{cdf}_{\mathcal{M}_{T-i+2,\Q_{T-i+1}}}(H(h_{i-1}),\mathbf{c})-F^{cdf}_{\mathcal{M}_{T-i+2,\Q_{T-i+1}}}(H(h'_{i-1}),\mathbf{c})\right|&\leq
C_i\|H(h_{i-1})-H(h_{i-1}')\|_{\infty}\\ &\leq C_i L'{}^{-d^{T-i+1}}
\end{align*}
because $H(h_{i-1})$ and $H(h'_{i-1})$ belong to the same hypercube
$W\in\mathcal{S}_{d^{T-i}m}^d$ of side
$2^{-md^{T-i+1}}=L'{}^{-d^{T-i+1}}$.

For the second term after the inequality sign in \eqref{eq:thm:BS:1},
we have
\begin{align*}
&\left|\int_{kL'{}^{-d^{T-i+1}}}^{h'_i}\left[\tilde{G}_i^h(h_{i-1},v)-\tilde{G}_i^h(h'_{i-1},v)
    \right]\Q_{T-i+1,h}(\dd v)\right|\\ &\leq
  \mathcal{M}^{h}_{T-i+2,\Q_{T-i+1,h}}(h_{i-1},[kL'{}^{-d^{T-i+1}},h_i'])+\mathcal{M}^{h}_{T-i+2,\Q_{T-i+1,h}}(h'_{i-1},[kL'{}^{-d^{T-i+1}},h_i'])\\ &\leq
  \mathcal{M}_{T-i+2,\Q_{T-i+1}}(H(h_{i-1}),W)+\mathcal{M}_{T-i+2,\Q_{T-i+1}}(H(h'_{i-1}),W)\\ &\leq
  2\|\tilde{G}_{T-i+2}\|_{\infty}\|p_{T-i+1}\|_{\infty}2^{-md^{T-i+1}}
\end{align*}
for a $W\in\mathcal{S}_{d^{T-i}m}^{d}$ and where $p_{T-i+1}$ is the
(bounded) density of $\Q_{T-i+1}$. This last quantity is also the
bound we obtain for $h_i'<L'{}^{-d^{T-i+1}}$. Hence, these
computations shows that
$$
|\tilde{F}_{i-1}(h_{i-1},h_{i})-\tilde{F}_{i-1}(h'_{i-1},h'_{i})|\leq
c_i L'{}^{-d^{T-i+1}}\log(L')
$$ for a constant $c_i$, $i\in 2:(T+1)$.

Condition \eqref{eq:thmBS:cond2} is therefore verified when (taking $L'$ so that $\log (L')\geq 1$)
$$ \frac{L'}{\log(L')}\geq L
\max_{i\in\{2,...,T+1\}}\left(\frac{c_i}{1-L^{d^{T-i+1}}r_2^*(N))}\right)^{\frac{1}{d^{T-i+1}}}
$$ where $r_2^*(N)=2r_3(N)+2r_4(N)$.  Let $\gamma\in(0,1)$ and note
that for $N$ large enough $\log L'< L'{}^{\gamma}$. Hence, for $N$
large enough \eqref{eq:thmBS:cond1} and \eqref{eq:thmBS:cond2} are
verified for $L'$ the smallest power of 2 such that
$$ L'\geq (k_NL)^{(1-\gamma)^{-1}},\quad k_N=
\max_{i\in\{1,...,T+1\}}\left(\frac{c_i}{1-L^{d^{T-i+1}}r^*(N))}\right)^{\frac{1}{d^{T-i+1}}},\quad
c_1=\|p_T\|
$$ where $r^*(N)=r_1^*(N)+ r_2^*(N)$. Note that we assume from now on
that $L^{-d^T}\geq 2r^*(N)$.

Because the function $F_{\widetilde{\mathsf{Q}}_{T,h_T}^N}$ is
continuous on $[0,h^N_0]\times\dots\times [0,h^N_T]$,
$\partial(F_{\widetilde{\mathsf{Q}}_{T,h_T}^N}(B))=F_{\widetilde{\mathsf{Q}}_{T,h_T}^N}(\partial(B))$
and therefore we can bound $\#\mathcal{U}_{2}$ following the proof of
Theorem \ref{thm:GenHM}. Using the same notations as in the proof of
Theorem \ref{thm:GenHM}, we obtain that
$\widetilde{\mathsf{Q}}_{T,h_T}^N(\partial(B))$ is covered by at most
$$
(T+1)2^{\tilde{d}}k_{N}^{\frac{\tilde{d}-1}{1-\gamma}}L^{\frac{\tilde{d}-1}{1-\gamma}}
$$ hyperrectangles in $\tilde{\mathcal{R}}$. To go back to the initial
partition of $[0,1)^{T+1}$ with hyperrectangles $W\in\mathcal{P}$,
  remark that $L'>L$ so that every hyperrectangles in
  $\tilde{\mathcal{R}}$ is covered by at most $c^*$ hyperrectangles of
  $\mathcal{P}$ for a constant $c^*$. Hence,
\begin{align}\label{eq:thm:LD_U}
\# \mathcal{U}_2^{(1)}\leq c_N L^{\frac{\tilde{d}-1}{1-\gamma}},\quad
c_N=c^*(T+1)2^{\tilde{d}}k_{N}^{\frac{\tilde{d}-1}{1-\gamma}}.
\end{align}

We therefore have
\begin{align*}
\|\Sop(\check{h}_{0:T}^{1:N})-\widetilde{\mathsf{Q}}_{T,h_T}^N\stn
&\leq L^{\tilde{d}}D(\bu^{1:N})+c_NL^{\frac{\tilde{d}-1}{1-\gamma}}
\left(D(\bu^{1:N})+L^{-\tilde{d}}\right).
\end{align*}
Let $\gamma \in (0,\tilde{d}^{-1})$ so that
$c_d:=\tilde{d}-\frac{\tilde{d}-1}{1-\gamma}>0$. To conclude the proof
as in \citet[][Theorem 4]{SQMC}, let $\tilde{d}_1=d^T$ and
$\tilde{d}_2=\sum_{t=0}^{T-1}d^t$. Thus,
$$
\|\Sop(\check{h}_{0:T}^{1:N})-\widetilde{\mathsf{Q}}_{T,h_T}^N\stn\leq
2L^{\tilde{d}_1+\tilde{d}_2}D(\bu^{1:N})+c_NL^{-c_d}
$$ where the optimal value of $L$ is such that
$L=\bigO(D(\bu^{1:N})^{-\frac{1}{c_d+\tilde{d}_1+\tilde{d}_2}})$. Then,
provided that
$r^*(N)D(\bu^{1:N})^{-\frac{\tilde{d}_1}{c_d+\tilde{d}_1+\tilde{d}_2}}=\bigO(1)$,
$L$ verifies all the conditions above and, since $c_N=\bigO(1)$, we
have
\[
\|\Sop(\check{h}_{0:T}^{1:N})-\widetilde{\mathsf{Q}}_{T,h_T}^N\stn=\bigO\left(D(\bu^{1:N})^{\frac{1}{c_d+\tilde{d}_1+\tilde{d}_2}}\right).
\]
Otherwise, if
$r^*(N)D(\bu^{1:N})^{-\frac{\tilde{d}_1}{c_d+\tilde{d}_1+\tilde{d}_2}}\rightarrow+\infty$,
let $L=\bigO(r^*(N)^{-\frac{1}{\tilde{d}_1}})$. Then $c_{N}=\bigO(1)$
and
\begin{align*}
 L^{\tilde{d}_1+\tilde{d}_2}D(\bu^{1:N}) &
 =\bigO(r(N))^{\frac{c_d}{\tilde{d}_1}-\frac{c_d+\tilde{d}_1+\tilde{d}_2}{\tilde{d}1}}D(\bu^{1:N})
 \\ & =\bigO(r(N)^{c_d/\tilde{d}_1})
 \left(\bigO(r(N))^{-1}D(\bu^{1:N})^{\frac{\tilde{d}_1}{c_d+\tilde{d}_1+\tilde{d}_2}}\right)^{\frac{c_d+\tilde{d}_1+\tilde{d}_2}{\tilde{d}_1}}\\ &
 =\smallo\left(r(N)^{c_d/\tilde{d}_1}\right).
\end{align*}
Therefore
$\|\Sop(\check{h}_{0:T}^{1:N})-\widetilde{\mathsf{Q}}_{T,h_T}^N\stn=\smallo(1)$,
which concludes the proof.

\subsubsection{Proof of the Corollary \ref{cor:BS_star}}\label{p-cor:BS_star}

To prove the result we first construct a probability measure
$\widetilde{\mathsf{Q}}_{T,h_T}^N$ such that the point set
$\tilde{\bx}_{0:T}^{1:N}$ generated by Algorithm \ref{alg:back}
becomes, as $N$ increases, arbitrary close to the point set
$\check{\bx}_{0:T}^{1:N}$ obtained using a smooth backward step described in Theorem \ref{thm:BS_star}. Then, we show that, if $\|\Sop(\check{\bx}_{0:T}^{1:N})-\Qt_T\stn\cvz$, then$
\|\Sop(\tilde{\bx}_{0:T}^{1:N})-\Qt_T\stn\cvz$.

To this aims, assume that, for all $t\in 0:T$, the points $h_t^{1:N}$ are labelled so
that $n<m\implies h_t^n < h_t^m$. (Note that the inequality is strict
because, by Assumption \ref{H:thmPF1_B:1} of Theorem
\ref{thm:consistency2}, the points $\bx_t^{1:N}$ are distinct.)
Without loss of generality, assume that $h_t^1>0$ and let $h_t^0=0$
for all $t$.

To construct $\widetilde{\mathsf{Q}}_{T,h_T}^N$, let
$\widehat{\mathsf{Q}}_{T,h}^N$ be such that
$F_{\widehat{\mathsf{Q}}_{T,h}^N}$ is strictly increasing on
$[0,h_T^N]$ with
$F_{\widehat{\mathsf{Q}}_{T,h}^N}(h_T^n)=F_{\Qh_{T,h}^N}(h_T^n)$ for
all $n\in 1:N$ and, for $t\in 1:T$, let $K^N_{t,h}(h_t,\dd h_{t-1})$
be such, for all $h_t\in\ui$,
$F_{K^N_{t,h}}(h_t,\cdot{})$ is strictly increasing on $[0,h_{t-1}^N]$
and
$$ F_{K^N_{t,h}}(h_t,
h^n_{t-1})=F_{\mathcal{M}^{h}_{t,\Qh^N_{t-1,h}}}(h_t, h^n_{t-1}),\quad \forall n\in 1:N.
$$ 

Let $\check{h}_{0:T}^{1:N}$ be as in Theorem \ref{thm:BS_star}
(with $\widetilde{\mathsf{Q}}_{T,h_T}^N$ constructed using the 
above choice of $\widehat{\mathsf{Q}}_{T,h}^N$ and $K^N_{t,h}(h_1,\dd
h_{t-1})$). We now show by a backward induction that, for any $t\in
0:T$, $\max_{n\in
  1:N}\|\check{\bx}_t^n-\tilde{\bx}_t^n\|_{\infty}=\smallo(1)$.

To see this, note that, by the construction of
$\widehat{\mathsf{Q}}_{T,h}^N$,
$$ |\check{h}_T^n-\tilde{h}_T^n|\leq \Delta_T^N,\quad
\Delta_T^N:=\max_{n\in 1:N}|h_T^{n-1}-h_T^{n}|
$$ where, by \citet[][Lemma 2]{SQMC}, $\Delta_T^N\cvz$ as
$N\rightarrow+\infty$. Hence, using the H\"older property of the
Hilbert curve, this shows that $\max_{n\in
  1:N}\|\check{\bx}_T^n-\tilde{\bx}_T^n\|_{\infty}=\smallo(1)$.

Let $t\in 0:T-1$ and assume that $\max_{n\in
  1:N}\|\check{\bx}_{t+1}^n-\tilde{\bx}_{t+1}^n\|_{\infty}=\smallo(1)$. Let
$w_t^n=h(\bx_t^{\check{a}_t^n})$, where $\check{a}_t^n$ is the  index
selected at iteration $t$ of Algorithm \ref{alg:back} obtained by
replacing $\tilde{\bx}_{t+1}^n$ by $\check{\bx}_{t+1}^n$. Then, by the
construction of $K^N_{t,h}$, $ \max_{n\in
  1:N}|w_t^n-\check{h}_T^n|=\smallo(1)$.

We now want to show that $\max_{n\in
  1:N}|w_t^n-\tilde{h}_T^n|=\smallo(1)$. To simplify the notation, let $\tilde{m}_{t+1}(\bx_t,\bx_{t+1})=m_{t+1}(\bx_t,\bx_{t+1})G_{t_1}(\bx_t,\bx_{t+1})$. Then, using Assumption \ref{H:corr:gamma}, simple computations show that, for $m\in 1:N$,

\begin{align*}
&|\widetilde{W}^m_t(\tilde{\bx}_{t+1}^n)-\widetilde{W}^m_t(\check{\bx}_{t+1}^n)|
\leq \frac{\left|W_t^m\tilde{m}_{t+1}(\bx_t^m,\tilde{\bx}_{t+1}^n)-W_t^m\tilde{m}_{t+1}(\bx_t^m,\check{\bx}_{t+1}^n)\right|}{\sum_{k=1}^NW_t^k\tilde{m}_{t+1}(\bx_t^k,\tilde{\bx}_{t+1}^n)}\\
&+W_t^m\tilde{m}_{t+1}(\bx_t^m,\check{\bx}_{t+1}^n)\frac{\left|\sum_{k=1}^NW_t^k\tilde{m}_{t+1}(\bx_t^k,\check{\bx}_{t+1}^n)-\sum_{k=1}^NW_t^k\tilde{m}_{t+1}(\bx_t^k,\tilde{\bx}_{t+1}^n)\right|}{\big(\sum_{k=1}^NW_t^k\tilde{m}_{t+1}(\bx_t^k,\tilde{\bx}_{t+1}^n)\big)\big(\sum_{k=1}^NW_t^k\tilde{m}_{t+1}(\bx_t^k,\check{\bx}_{t+1}^n)\big)}\\
&\leq \|G_t\|_{\infty}\frac{|\tilde{m}_{t+1}(\bx_t^m,\tilde{\bx}_{t+1}^n)-\tilde{m}_{t+1}(\bx_t^m,\check{\bx}_{t+1}^n)|}{N\underline{c}_t}\\
&+\|G_t\tilde{m}_{t+1}\|_{\infty}\frac{\sum_{k=1}^NG_t(\hat{\bx}_{t-1}^k,\bx_t^k)\left|\tilde{m}_{t+1}(\bx_t^k,\check{\bx}_{t+1}^n)-\tilde{m}_{t+1}(\bx_t^k,\tilde{\bx}_{t+1}^n)\right|}{(N\underline{c}_t)^2}.
\end{align*}
Let 
$$
\omega_{t+1}(\delta)=\sup_{
\substack{(\bx_1,\bx_2)\in\setX^2,\,(\bx'_1,\bx'_2)\in\setX^2\\
\|\bx_i-\bx'_i\|_{\infty}\leq\delta,\,i=1,2
}}|\tilde{m}_{t+1}(\bx_1,\bx_2)-\tilde{m}_{t+1}(\bx'_1,\bx'_2)|,\quad \delta>0
$$
be the modulus of continuity of $\tilde{m}_{t+1}$. Then,  
\begin{align*}
|\widetilde{W}^i_t(\tilde{\bx}_{t+1}^n)-\widetilde{W}^i_t(\check{\bx}_{t+1}^n)|&\leq \max_{n\in 1:N} \frac{w_{t+1}(|\tilde{\bx}_{t+1}^n-\check{\bx}_{t+1}^n|_{\infty})}{N} \frac{\|G_t\|_{\infty}(\underline{c}_t+\|G_t\tilde{m}_{t+1}\|_{\infty})}{\underline{c}_t^2}\\
&=:\tilde{\xi}_t^N
\end{align*}
where,  using the fact that $\tilde{m}_{t+1}$ is uniformly continuous on $\setX^2$ (Assumption \ref{H:corr:Unif}) and the inductive hypothesis, $\tilde{\xi}_t^N=\smallo(N^{-1})$. Also, we know that
$$
\min_{m\in 1:N}\inf_{\bx_{t+1}\in\setX}\widetilde{W}^m_t(\bx_{t+1})\geq \xi^N_{t}:= \frac{\underline{c}_t}{N \|G_t\|\tilde{m}_{t+1}\|_{\infty}}.
$$
Then, let $N_t$ be such that $\tilde{\xi}_t^{N_t}<\xi_t^{N_t}$ so that, for $N\geq N_t$, we either have $\tilde{h}_t^n=w_t^n$, or $\tilde{h}_t^n=w_t^{n+1}$ or $\tilde{h}_t^n=w_t^{n-1}$. Hence, $\max_{n\in 1:N}|w_t^n-\tilde{h}_t^n|=\smallo(1)$ and therefore $\max_{n\in 1:N}|\tilde{h}_t^n-\check{h}_t^n|=\smallo(1)$. Finally, by, the H\"older property of the Hilbert curve, this shows that $\max_{n\in 1:N}\|\check{\bx}_{t}^n-\tilde{\bx}_{t}^n\|_{\infty}=\smallo(1)$.

The rest of the proof follows the lines of \citet[][Lemma 2.5,
  p.15]{Niederreiter1992}. First, note that the above computations
shows that, for any $\epsilon>0$, there exists a $N_{\epsilon}$ such
that
$\|\tilde{\bx}_{0:T}^{1:N}-\check{\bx}_{0:T}^{1:N}\|_{\infty}\leq\epsilon$ for $N\geq N_{\epsilon}$. Let
$B=[\bm{a},\bm{b}]$, $B^+=[\bm{a},\bm{b}+\epsilon]\cap \ui^{T+1}$ and
$B^-=[\bm{a},\bm{b}-\epsilon]$. If $\epsilon>b_i$ for at least one
$i\in 1:(T+1)$, $B^-=\emptyset$. Then for $N\geq N_{\epsilon}$, we
have

\begin{equation}\label{eq: ineg1}
\Sop(\check{\bx}_{0:T}^{1:N})(B^-)\leq
\Sop(\tilde{\bx}_{0:T}^{1:N})(B)\leq
\Sop(\check{\bx}_{0:T}^{1:N})(B^+).
\end{equation}
By the definition of the extreme metric, we have
\begin{equation}\label{eq: ineg2}
\begin{split}
&\left|\Sop(\check{\bx}_{0:T}^{1:N})(B^+)-\Qt_T(B^+)\right|\leq
  \|\Sop(\check{\bx}_{0:T}^{1:N})-\Qt_T\stn,\\ &\left|\Sop(\check{\bx}_{0:T}^{1:N})(B^-)-\Qt_T(B^-)\right|\leq
  \|\Sop(\check{\bx}_{0:T}^{1:N})-\Qt_T\stn.
\end{split}
\end{equation}
Combining (\ref{eq: ineg1}) and (\ref{eq: ineg2}) yields:
\begin{equation}\label{eq: ineg3}
\begin{cases}
-\left(\Qt_T(B)-\Qt_T(B^-)\right)-
\|\Sop(\check{\bx}_{0:T}^{1:N})-\Qt_T\stn\leq
\Sop(\tilde{\bx}_{0:T}^{1:N})(B)-\Qt_T(B)\\ \Sop(\tilde{\bx}_{0:T}^{1:N})(B)-\Qt_T(B)\leq
\left(\Qt_T(B^+)-\Qt_T(B)\right)+\|\Sop(\check{\bx}_{0:T}^{1:N})-\Qt_T\stn.
\end{cases}
\end{equation}
Using the fact that $\Qt_T$ admits a bounded density, we have for a
constant $c>0$
\begin{equation}\label{eq: ineg4}
\begin{split}
&\Qt_T(B)-\Qt_T(B^-)\leq c\lambda_{T+1}(B\setminus B^{-})\leq c\,
  \epsilon^{T+1}\\ &\Qt_T(B^+)-\Qt_T(B)\leq
  c\lambda_{T+1}(B^+\setminus B)\leq c\, \epsilon^{T+1}.
\end{split}
\end{equation}
Therefore, combining (\ref{eq: ineg3}) and (\ref{eq: ineg4}), we
obtain, for $N\geq N_{\epsilon}$ and for all
$B\in\mathcal{B}_{\ui^{T+1}}$,
$$
- c\, \epsilon^{T+1}-\|\Sop(\check{\bx}_{0:T}^{1:N})-\Qt_T\stn\leq
\Sop(\tilde{\bx}_{0:T}^{1:N})(B)-\Qt_T(B)\leq
\|\Sop(\check{\bx}_{0:T}^{1:N})-\Qt_T\stn+c\, \epsilon^{T+1}
$$
and thus
$$
\|\Sop(\tilde{\bx}_{0:T}^{1:N})-\Qt_T \stn\leq \|\Sop(\check{\bx}_{0:T}^{1:N})-\Qt_T\stn+c\, \epsilon^{T+1}
$$
and the result follows from Theorem \ref{thm:BS_star}.

\end{document}